\def\dOi{11(1:9)2015}
\subjclass{F. Theory of computation---F.1 COMPUTATION BY ABSTRACT DEVICES---F.1.1 Models of Computation---Automata}{}
\author[E.~Kopczyński]{Eryk Kopczyński}    
\address{Institute of Informatics, University of Warsaw}  
\email{erykk@mimuw.edu.pl}  
\title[Complexity of Problems of Commutative Grammars]{Complexity of Problems of Commutative Grammars\rsuper*}
\def\bbN{{\mathbb N}} 
\def\bbZ{{\mathbb Z}} 
\def\bbQ{{\mathbb Q}} 
\def\bbP{{\mathbb P}} 
\def\bbR{{\mathbb R}} 
\def\calS{{\mathcal S}}
\def\calR{{\mathcal R}}
\def\calZ{{\mathcal Z}}
\def\calF{{\mathcal F}}
\def\sgn{{\rm sgn}}
\def\ra{\rightarrow}
\def\Parikh{\Psi}
\def\supp{{\rm supp}}
\def\set#1{\left\{#1\right\}}
\def\rint#1#2{\left[{#1},{#2}\right]}
\def\rmint#1{\rint{-#1}{#1}}
\def\zint#1#2{\left[{#1}..{#2}\right]}
\def\zmint#1{\zint{-#1}{#1}}
\def\nsum#1{{\left|{#1}\right|}}
\def\nmax#1{{\left|\left|{#1}\right|\right|}}
\def\osum#1#2{#2^{\oplus #1}}
\def\ind{\hskip 3em}
\def\PiP{$\mathrm{\Pi_2^P}$}
\def\Alphabet{\Sigma}     
\def\alphs{A}             
\def\States{Q}            
\def\qstate{N}            
\def\Trans{\delta}        
\def\trans{d}             
\def\transf{\partial}     
\def\qini{q_I}            
\def\mv#1#2#3{#1\stackrel{#2}{\rightarrow}#3}
\def\mvm#1#2#3{#1\stackrel{#2}{\rightarrow}#3}
\def\source{{\rm source}}
\def\target{{\rm target}}
\def\out{\Parikh}
\def\under{{\rm U}}
\def\Regions{\calR}  
\def\Reg{{\rm Reg}}       
\def\reg{{\rm reg}}       
\def\hadamard#1#2{H_{#1}\left({#2}\right)} 
\def\gam#1{\Gamma_G^{#1}}
\def\frunbound{B_G}
\def\frunboundx{C}
\def\grammarboundl{B_{\ref{normalcommlemma}}}
\def\grammarbound{B_{\ref{normalcommcorol}}}
\def\secv{B_{\ref{secbound}}}
\def\taulimit{{C_{\ref{regionshape}}}}
\def\reduceto{{B_{\ref{reducerlemma}}}}
\def\regbaselimit{{B_{\ref{regionshape}}}}
\def\phibound{C_\phi}
\def\aphibound{\alphs\phibound}
\def\mainboundlimit{{C_{\ref{mainbound}}}}
\def\RunsTable{\mathcal R}
\def\PathTable{\mathcal P}
\def\Cout{\calZ}          
\newcounter{mycount}[section]
\newtheorem{lemma}[thm]{Lemma}
\newtheorem{theorem}[thm]{Theorem}
\newtheorem{corol}[thm]{Corollary}
\newtheorem{proposition}[thm]{Proposition}
\def\myqed{}
\begin{document}

\begin{abstract}
\noindent We consider commutative regular and context-free grammars, or, in other words,
Parikh images of regular and context-free languages. By using linear algebra and 
a branching analog of the classic Euler theorem, we show that, under an assumption
that the terminal alphabet is fixed, the membership problem for regular grammars
(given $v$ in binary and a regular commutative grammar $G$, does $G$ generate $v$?)
is P, and that the equivalence problem for context free grammars (do $G_1$ and $G_2$
generate the same language?) is in \PiP.
\end{abstract}

\keywords{Euler theorem, Parikh image, commutative grammar, fixed alphabet, equivalence, membership}
\titlecomment{{\lsuper*}This paper is the full version of the author's part of \cite{licsfull}, with some differences.}

\maketitle

\section{Introduction}

Let $\Alphabet$ be a finite alphabet. By $\Alphabet^*$ we denote the set of
words over $\Alphabet$, or finite sequences of elements of $\Alphabet$. For
a word $w \in \Alphabet^*$, by $\Parikh(w)$ (the Parikh image of $w$) we
denote the function from $\Alphabet$ to non-negative integers $\bbN$, such that
each $x \in \Alphabet$ appears $\Parikh(w)(x)$ times in $w$. For a language
$L \subseteq \Alphabet^*$, $\Parikh(L) = \set{\Parikh(w): w \in L} \subseteq \bbN^\Alphabet$.

Context free and regular languages are one of the most important classes of languages in
computer science \cite{hopcroft79}. By a famous result of Parikh \cite{parikh}, a subset 
of $\bbN^\Alphabet$ is a Parikh image of a context free language if and only if
it is a \emph{semilinear set}, or a union of finitely many \emph{linear sets}.

In this paper, we explore the complexity of various problems related to
Parikh images of context free languages, such as the following:

\begin{itemize}
\item Membership: Given a context-free grammar $G$ and $v \in \bbN^\Sigma$ (given in binary). Is
$v$ a member of the $\Psi(G)$, the Parikh image of the language generated by $G$?
\item Universality: Given two context-free grammars $G$, is $\Psi(G)$ equal
to $\bbN^\Sigma$?
\item Inclusion: Given two context-free grammars $G_1$ and $G_2$, does $\Psi(G_1) \subseteq \Psi(G_2)$?
\item Equality: Given two context-free grammars $G_1$ and $G_2$, does $\Psi(G_1) = \Psi(G_2)$?
\item Disjointness: Given two context-free grammars $G_1$ and $G_2$, is $\Psi(G_1) \cap \Psi(G_2)$ nonempty?
\end{itemize}

Since in this paper we are never interested in the order of terminals or non-terminals, we 
treat everything in a commutative way. This allows us to identify the commutative
languages (subsets of $\Alphabet^*$) with their Parikh images (subsets of $\bbN^\Alphabet$).

In the non-commutative case, the size of alphabet usually does not matter very much: 
larger alphabets can be encoded as words over smaller alphabets, for example the
alphabet $\set{a,b,c}$ can be encoded as $\set{b, ba, baa}$ or $\set{a, ba, bb}$.
This changes in the
commutative case: each new letter in the alphabet literally brings a new dimension to
the Parikh image. If we do not fix the size of the alphabet, it can be easily shown
that even the membership problem for regular languages is NP complete.

There are many practical uses of regular and context-free languages which
do not care about the order of the letters in the word. For example, 
when considering regular languages of trees, we might be not 
interested in the ordering of children of a given node. \cite{xml} and
\cite{xml2} consider XML schemas allowing marking some nodes as unordered.

Some complexity results regarding semilinear sets and
commutative grammars have been obtained by
D. Huynh \cite{semipi,semipi2}, who has shown that equivalence is \PiP-hard
both for semilinear sets and commutative grammars  (where \PiP\ is the dual
of the second level of the polynomial-time hierarchy, \cite{polyh}).

Some research has also been done in the field of communication-free
Petri nets, or Basic Parallel Processes (BPP). We say that a Petri net 
(\cite{petri1}, \cite{petri2}) is communication-free if each transition has
only one input. This restriction means that such a Petri net is essentially
equivalent to a commutative context-free grammar. \cite{hcyen} shows that
the reachability equivalence problem for BPP-nets can be solved in
$DTIME\left(2^{2^{ds^3}}\right)$, where $d$ is a constant and $s$ is the
size of the problem instance. For general Petri nets, reachability
(membership in terms of grammars) is decidable
\cite{kosaraju, lerouxpetri}, although the known algorithms
require non-primitive recursive space; and reachability equivalence
is undecidable \cite{hack26}. Also, some harder types of equivalence problems
are undecidable for BPP nets \cite{huttel}. See \cite{esparza2} for a survey
of decidability results regarding Petri nets.

As mentioned above, we will assume in this paper that the alphabet is fixed. 
We have the following two main results:

\begin{itemize}
\item The
membership problem for commutative regular languages over a fixed alphabet is in P.
This was open
for a long time (even for a binary alphabet), until it was solved independently
by the author of this paper and
Anthony Widjaja Lin. This result,
and its applications, was presented as a merged paper at the LICS conference
\cite{licsfull}. 

\item The equivalence problem for commutative context-free languages over a fixed
alphabet is in \PiP. As far as we know, there
have been no successful previous attempts in this direction, except for the
much simpler case where the alphabet has only one symbol \cite{hyunh1}.
\end{itemize}

\noindent In context free grammars, usually the order of transitions used in a derivation is
important: we are never allowed to use a non-terminal which has not yet been produced.
For example, a transition $X \ra aX$ allows us to produce an arbitrary amount of
the terminal symbol $a$, but only if we have access to the non-terminal symbol $X$.
However, derivations also can be defined commutatively. 
The well known Euler's theorem gives a necessary and sufficient condition for whether
there is a path or cycle in a graph which uses each edge $e$ exactly $n_e$ times: the
condition says that each vertex has to be entered and left exactly the same number of
times (\emph{Euler condition}) and reachable from the starting vertex
(\emph{connectedness}). Thus, we can forget the order of edges on such a path or
cycle, count them, and just check that the conditions are satisfied; a roughly similar
approach is
used in the definition of a cycle in the construction of homology groups in algebraic
topology
\cite{hatcher}. The same approach works in our more general case --- we can define a
\emph{commutative run} and a \emph{commutative cycle} by counting the number of times
each transition has been used, and just like in Euler's theorem, 
a very simple condition can be used to check whether
such a function from transitions to integers is indeed a Parikh image of a valid
derivation. Similar technique has been used 
previously in \cite{espfund}; it also has been used successfully to solve
an open problem in database theory \cite{tits}.


Although we are not allowed to use a non-terminal which has not yet been produced,
our framework straghtforwardly allows the following interesting generalization:
we allow our terminal symbols to be produced in negative quantities. In this case,
these negative productions do not have to be balanced by positive productions.
Thus, our commutative languages are interpreted as subsets of $\bbZ^\Sigma$ rather than
$\bbN^\Sigma$. Although such languages do not commonly appear in the theory of languages,
they turn out to be interesting and useful. For example, such negative production allow
to reduce disjointness of $A$ and $B$ to membership very easily -- just check for
membership of 0 in $A-B = \{a-b: a\in A, b\in B\}$
(see Proposition \ref{disunfix} below)
.
This generalization was also 
discovered independently by Anthony Widjaja Lin.

\subsection{Related papers}
This paper is the full version of \cite{licsfull}, with the following major differences:
\begin{itemize} 
\item \cite{licsfull} is a result of merging of two
submissions. This paper includes only results obtained by the author.

\item Equivalence of commutative context-free grammars (Theorem \ref{normalcommcorol})
has been generalized to the case where we allow terminal symbols to be produced in
negative quantities.
\item Universality of commutative context-free languages has been only proven to
be in \PiP\ in \cite{licsfull}. Here, we prove that it is in fact \PiP-complete.
\item Full proofs are included.
\end{itemize}

\subsection{Structure of the paper}
The paper is structured as follows.

Section \ref{secprelim} introduces basic definitions and facts from linear algebra and
language theory, and then presents bounds on the size of runs and cycles of commutative
grammars. The techniques used to obtain these bounds are almost the same for regular
and non-regular grammars, but the results are much stronger in the regular case.
This section culminates in Theorem \ref{decomposition}, 
which gives a compact representation of a commutative regular
language (equivalently, a Parikh image of a regular language).
This compact representation will be used in Section \ref{seccomplexity}
to solve the membership problem of commutative regular languages in P.

The whole Section \ref{seccommg} presents a ``window theorem'' (Theorem
\ref{normalcommcorol}) for (non-regular) commutative grammars.
This theorem roughly says that, in order to decide whether two
grammars are equivalent, we only have to look at a window of exponential size,
and it will be used in Section \ref{seccomplexity} to show that
the equivalence of commutative grammars is in \PiP. In fact, 
most of Section \ref{seccommg} is the proof of Lemma \ref{normalcommlemma}
about semilinear sets, which does not refer commutative grammars directly;
Theorem \ref{normalcommcorol} is a simple corollary.
Lemma \ref{normalcommlemma}, as well as some other lemmas in this section,
thus could potentially have applications to semilinear sets in general, whether
they come from commutative grammars or not.

Section \ref{hardgrammar} presents a non-regular grammar
over a three letter alphabet which has no compact representation similar to one 
given by Theorem \ref{decomposition}. We believe this example is of interest,
because it shows why we cannot prove Theorem \ref{normalcommcorol}
using a simpler approach.

Section \ref{seccomplexity} gives the answers to 
questions about the complexity of the problems mentioned in the introduction; 
some results are obtained directly, and for some we need our compact representation and
window theorems from Sections \ref{secprelim} and \ref{seccommg}.

There is a conclusion in Section \ref{secconclusion}.

A reader interested only in the first main result (that 
the membership problem of commutative regular languages in P) only has to read
Section 2 (assuming that the grammar is regular)
and the respective part of Section 5. 
Due to the lack of branching,
results in subsections \ref{subs_subrun_order} and \ref{subs_derivation} can
be obtained in a much more straightforward way for regular grammars.
A reader interested in the second
main result (that the equivalence problem for commutative context-free languages over
a fixed alphabet is in \PiP) has to read Section 2 (without concentrating on regular
grammars), Section 3, and the respective part
of Section 5; Section 4 should also be of interest for such readers.

\section{Preliminaries}\label{secprelim}

\subsection{Vectors and matrices, and semilinear sets}

In this subsection we present the basic notation, notions and facts from linear algebra,
which will be used throughout the paper.

As usual, we denote the set of integers by $\bbZ$, the set of non-negative integers
by $\bbN$, the set of rational numbers with $\bbQ$, and the set of real numbers with
$\bbR$. We also denote the set of
non-negative real numbers with $\bbP$ (we do not use the more standard notation
$\bbR^+$ to avoid double indexing).

We also use the notation $\rint{a}{b}$ for the interval of all real numbers between
$a$ and $b$, and $\zint{a}{b}$ for the interval of all integers between $a$ and $b$.

Let $X$ be a finite set. We interpret $\bbN^X$ as multisets over $X$: $z \in \bbN^X$
represents a multiset where each $x \in X$ appears exactly $z_x$ times. 
For a $x \in X$, by $[x] \in \bbN^X$ we represent the multiset which
represents $x$: $[x](y) = 1$ iff $y=x$, 0 otherwise. The set $\bbN^X$ is 
a subset of $\bbZ^X$ (intuitively, we allow the elements of $X$ to appear in
negative quantities), $\bbQ^X$, $\bbP^X$, and $\bbR^X$.

For $v \in \bbR^X$, let $\nsum{v} = \sum_{x \in X} |v_x|$, and $\nmax{v} = \max_{x \in X} |v_x|$.

We also interpret elements of $\bbR^X$ as (column) vectors. 
By $\bbR^{X\times Y}$ we denote the set of matrices over $\bbR$ with
columns indexed by $X$, and rows indexed by $Y$; $A^{X \times Y} \subseteq
\bbR^{X \times Y}$ denotes matrices where all entries are in $A \subseteq \bbR$.
For $M \in \bbR^{X \times Y}$, we use
the notation $M^x_y$ for the coefficient in row $y \in Y$ and column $x \in X$. For
$M \in \bbR^{X \times Y}$, $N \in \bbR^{Y \times Z},$ and $v \in \bbR^X$ we define
the multiplication in the usual way:

\[ Mv \in \bbR^Y,\ (Mv)_y   = \sum_{x \in X} M^x_y v_x\]
\[ NM \in \bbR^{X \times Z}, \ (NM)^x_z = \sum_{y \in Y} N^x_y M^y_z\]

For a matrix $M \in \bbR^{X \times Y}$ and a set $S \subseteq \bbR^X$, by $MS \subseteq \bbR^Y$ we denote
$\set{Ms: s \in S}$. Similarly, for sets $S_1, S_2 \subseteq \bbR^X$, we define
$S_1+S_2 = \set{s_1+s_2: s_1 \in S_1, s_2 \in S_2}$ and 
$S_1-S_2 = \set{s_1-s_2: s_1 \in S_1, s_2 \in S_2}$ (the algebraic sum and difference).

For a finite set $S \subseteq \bbR^X$ and $A \subseteq \bbR$, we define

\[\osum{A}{S} = \set{\sum_{s \in S} sa_s : \forall s\ a_s \in A}.\]
Typically, $A$ will be one of $\bbZ$ ($\osum{\bbZ}{S}$ is the additive group
generated by $S$), $\bbN$ ($\osum{\bbN}{S}$ is the additive monoid),
$\bbR$ ($\osum{\bbR}{S}$ the linear space spanned by $S$), 
$\bbP$ ($\osum{\bbP}{S}$ is the cone spanned by $S$),
or $\zint{0}{H}$ (we limit the number of uses of
each element of $S$).

A set $V \subseteq \bbZ^X$ is {\bf linear} iff it can be written as
$V = v_0 + \osum\bbN P$, where $v_0 \in \bbZ^X$ and the set $P \subseteq \bbZ^X$ is finite.
The vector
$v_0$ is called the {\bf base} of $V$, and the elements of $P$ are called {\bf periods}.
Moreover, we say that a linear set $V$ is a {\bf simple linear set} iff the elements of $P$
are linearly independent.

A set $V \subseteq \bbZ^X$ is {\bf semilinear} iff it is a union of finitely many linear sets.

A semilinear set $V \subseteq \bbZ^X$ is called a {\bf simple bundle} iff it can be
written as $V = W + \osum\bbN P$, where $W \subseteq \bbZ^X$ is finite and the elements
of $P$ are linearly independent. We say that a bundle is {\bf bounded} by $(B,Y)$ iff 
$\nmax{v} \leq B$ for each $v \in W$, and $\nmax{p} \leq Y$ for each $p \in P$.

The following lemmas from the linear algebra will be important for us. We denote
the determinant of $M \in \bbR^{X \times X}$ by $\det M$.

\begin{lemma}\label{smalldet}
Let $M \in \rmint{C}^{X \times X}$ be a matrix. Then the determinant of $M$ is bounded
by $\hadamard{|X|}{C} = |X|! C^{|X|}$. Moreover, if $M \in \bbZ^{X \times X}$,
the determinant of $M$ is an integer.
\end{lemma}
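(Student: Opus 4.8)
The plan is to prove the two claims separately, starting with the easy one. First I would handle integrality: if $M \in \bbZ^{X \times X}$, then $\det M$ is a $\bbZ$-linear combination (via the Leibniz expansion $\det M = \sum_{\sigma} \sgn(\sigma) \prod_{x} M^x_{\sigma(x)}$) of products of integer entries, hence an integer. This needs essentially no work beyond citing the definition of the determinant.

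For the size bound, I would also start from the Leibniz formula. Each of the $|X|!$ permutations $\sigma$ contributes a term of absolute value $\prod_{x \in X} |M^x_{\sigma(x)}| \leq C^{|X|}$, since every entry lies in $\rmint{C}$, i.e.\ has absolute value at most $C$. Summing over all $|X|!$ permutations and applying the triangle inequality gives $|\det M| \leq |X|!\, C^{|X|} = \hadamard{|X|}{C}$, which is exactly the stated bound. This is the crude combinatorial bound rather than the sharp Hadamard inequality $|\det M| \leq \prod_i \neuc{\text{row}_i} \leq (\sqrt{|X|}\,C)^{|X|}$; the paper only needs the weaker form, so there is no reason to invoke orthogonalization or Gram determinants.

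I do not anticipate a genuine obstacle here: the lemma is a standard packaging of the Leibniz expansion, and the only thing to be a little careful about is the convention that $\rmint{C}$ denotes the closed interval $[-C,C]$, so that ``$M \in \rmint{C}^{X\times X}$'' means every coefficient has absolute value $\le C$ — the bound then follows term by term. If one wanted to be slightly slicker, one could instead argue by induction on $|X|$ using cofactor expansion along the first row: $|\det M| \leq \sum_{y} |M^{x_0}_y|\, |\det M_{x_0,y}| \leq |X| \cdot C \cdot (|X|-1)!\, C^{|X|-1} = |X|!\, C^{|X|}$, with the determinant of the $0\times 0$ matrix equal to $1$ as the base case; but the direct Leibniz argument is shortest and makes the integrality claim immediate as well.
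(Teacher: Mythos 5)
Your argument is correct and matches the paper's proof, which likewise derives both the bound $|X|!\,C^{|X|}$ and integrality directly from the Leibniz expansion (the paper merely adds the remark that Hadamard's sharper bound $C^{|X|}|X|^{|X|/2}$ is also available). No gaps; the cofactor-expansion alternative you sketch is unnecessary but equally valid.
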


\begin{proof}
This follows straightforwardly from the Leibniz formula for determinant. In fact,
Hadamard \cite{hadamard} has shown a better bound: 
$\hadamard{|X|}{C} = C^{|X|} |X|^{|X|/2}$. \myqed
\end{proof}

Let us recall the well known Cramer's rule \cite{cramer}:

\begin{lemma}\label{cramer}
Let $A \in \bbR^X_X$ be a non-degenerate matrix, and $b \in \bbR^X$. Then the system
of equations $Av=b$ has a unique solution given by $v_x = \det(A_x) / \det(A)$, where
$A_x$ is obtained by replacing the $x$-th column of $A$ with the vector $b$.
If $A$ is not a non-degenerate matrix, the system of equations either has no solution,
or infinitely many solutions.
\end{lemma}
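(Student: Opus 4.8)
The plan is to derive the explicit formula from the multilinearity and alternating properties of the determinant, which I take as known. First I would note that, since $A$ is non-degenerate, its determinant is non-zero, so its columns $a_x$ (for $x\in X$) are linearly independent and hence form a basis of $\bbR^X$; consequently there is exactly one $v\in\bbR^X$ with $b=Av=\sum_{x\in X}v_x a_x$. Uniqueness also follows directly: if $Av=Av'$ then $A(v-v')=0$, and linear independence of the columns forces $v=v'$.

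For the explicit formula, fix $x\in X$ and consider $A_x$, the matrix whose $y$-th column is $a_y$ for $y\neq x$ and $b$ for $y=x$. Substituting $b=\sum_{y\in X}v_y a_y$ into the $x$-th column and using linearity of $\det$ in that column, I get $\det(A_x)=\sum_{y\in X}v_y\det(M_{x,y})$, where $M_{x,y}$ denotes $A$ with its $x$-th column replaced by $a_y$. If $y\neq x$ then $M_{x,y}$ has two identical columns, so $\det(M_{x,y})=0$, while $M_{x,x}=A$. Hence $\det(A_x)=v_x\det(A)$, which rearranges to $v_x=\det(A_x)/\det(A)$ since $\det(A)\neq 0$. (Equivalently, one could invoke $A^{-1}=\frac{1}{\det A}\,\mathrm{adj}(A)$ and recognise $\det(A_x)$ as the cofactor expansion of $A$ along its $x$-th column applied to $b$; the multilinearity argument simply avoids developing the adjugate.)

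For the last sentence, suppose $A$ is degenerate, so $\det(A)=0$, the columns of $A$ are linearly dependent, and therefore $\ker A=\{w:Aw=0\}$ is a non-zero linear subspace of $\bbR^X$, in particular infinite since $\bbR$ is infinite. The solution set of $Av=b$ is either empty or, if $v_0$ is one solution, the coset $v_0+\ker A$; in the latter case it is infinite. This gives the stated dichotomy.

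I do not expect a genuine obstacle here — this is a classical fact included only for later reference. The one point needing a word of care is the degenerate case, where the conclusion ``infinitely many'' (rather than merely ``more than one'') uses that the kernel of a singular real matrix is a non-trivial subspace of a vector space over an infinite field; this is exactly the standard equivalence $\det A=0$ iff the columns of $A$ are linearly dependent iff $\ker A\neq\{0\}$.
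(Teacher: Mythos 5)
Your proof is correct: the multilinearity-plus-alternating argument for the formula $v_x=\det(A_x)/\det(A)$, the uniqueness via linear independence of the columns, and the coset argument ($v_0+\ker A$ with $\ker A$ a non-trivial subspace over the infinite field $\bbR$) for the degenerate case are all sound. Note, however, that the paper does not prove this lemma at all --- it simply recalls Cramer's rule as a classical fact with a citation --- so there is nothing to compare against; your write-up is the standard textbook derivation and would serve perfectly well if a proof were wanted.
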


The following corollary is straightforward:
\begin{lemma}\label{syseq}
Let $X$ be a fixed set of indices, and $A \in \bbN$. There is a number $B$, polynomial
in $A$, such that whenever the system of equations $Mx=v$, where
$M \in [-A..A]^{X\times X}$, $v \in [-A..A]^X$ has a unique solution $x\in\bbR^X$,
we have $x = x'/C$, where $x' \in [-B..B]^X$, and $C \in [-B..B]$.
\end{lemma}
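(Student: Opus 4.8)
The plan is to derive Lemma~\ref{syseq} directly from Cramer's rule (Lemma~\ref{cramer}) and the determinant bound (Lemma~\ref{smalldet}), observing that when $X$ is fixed all the relevant quantities are \emph{polynomially} bounded in $A$ because the exponents $|X|$ and the factorials $|X|!$ are now constants.

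\medskip

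First I would set $n = |X|$, which is a fixed constant throughout. Suppose the system $Mx = v$ with $M \in \zmint{A}^{X\times X}$ and $v \in \zmint{A}^X$ has a unique solution. By the second part of Lemma~\ref{cramer}, uniqueness forces $M$ to be non-degenerate, i.e. $\det M \neq 0$; and since $M$ has integer entries, Lemma~\ref{smalldet} gives that $C := \det M$ is a nonzero integer with $|C| \leq \hadamard{n}{A} = n!\,A^n$. Next, for each $x \in X$, Cramer's rule yields $x_x = \det(M_x)/\det(M)$, where $M_x$ is $M$ with its $x$-th column replaced by $v$. Since $v \in \zmint{A}^X$ as well, the matrix $M_x$ still has all entries in $\zmint{A}$ and integer entries, so $x'_x := \det(M_x)$ is an integer with $|x'_x| \leq \hadamard{n}{A} = n!\,A^n$.

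\medskip

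Setting $B := n!\,A^n$ then finishes the argument: we have $x = x'/C$ with $x' \in \zmint{B}^X$ and $C \in \zmint{B}$ (and $C \neq 0$), and $B$ is polynomial in $A$ since $n$ is fixed — indeed it is $O(A^n)$ with the constant $n!$ absorbed. The only points that need a word of care are (i) that the denominators appearing in Cramer's rule are the \emph{same} for every coordinate, namely $\det M$, so a single common $C$ works; and (ii) that we must invoke uniqueness precisely to guarantee $\det M \neq 0$, since otherwise $C = 0$ and the claim is vacuous or meaningless. I do not anticipate any genuine obstacle here: the lemma is essentially a bookkeeping consequence of the two preceding lemmas together with the hypothesis that $X$ is fixed, and the proof is a two-line application once $B$ is chosen as above.
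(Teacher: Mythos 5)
Your proposal is correct and follows exactly the paper's (one-line) proof: apply Cramer's rule (Lemma~\ref{cramer}) together with the Hadamard bound (Lemma~\ref{smalldet}), noting that $|X|$ is a fixed constant so $|X|!\,A^{|X|}$ is polynomial in $A$. You merely spell out the bookkeeping (uniqueness forces $\det M \neq 0$, and the common denominator $C = \det M$ works for all coordinates) that the paper leaves implicit.
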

\begin{proof}
We apply the Cramer's rule (Lemma \ref{cramer}) and the Hadamard bound (Lemma \ref{smalldet}).\myqed
\end{proof}

\begin{lemma}\label{detgroup}
Let $M \in [-C..C]^{X \times X}$ be a non-degenerate matrix. Then \[(\det M) \bbZ^X \subseteq M \bbZ^X.\]
\end{lemma}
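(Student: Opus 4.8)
The plan is to use the classical adjugate (cofactor) matrix identity, which says that for any square matrix $M$ over a commutative ring, $\operatorname{adj}(M)\, M = M\, \operatorname{adj}(M) = (\det M) I$, where $\operatorname{adj}(M)$ is the transpose of the cofactor matrix of $M$. The key point for us is that when $M \in [-C..C]^{X \times X}$, every entry of $\operatorname{adj}(M)$ is, up to sign, an $(|X|-1)$-minor of $M$, hence in particular an integer; so $\operatorname{adj}(M) \in \bbZ^{X \times X}$.

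First I would fix an arbitrary $v \in \bbZ^X$ and consider $w := \operatorname{adj}(M)\, v$. Since $\operatorname{adj}(M)$ has integer entries and $v \in \bbZ^X$, we get $w \in \bbZ^X$. Then I would compute $M w = M\, \operatorname{adj}(M)\, v = (\det M)\, v$. This exhibits $(\det M) v$ as an element of $M \bbZ^X$, namely $M$ applied to the integer vector $w$. Since $v \in \bbZ^X$ was arbitrary, this proves $(\det M)\bbZ^X \subseteq M\bbZ^X$, which is exactly the claim. Note that non-degeneracy of $M$ is not even strictly needed for this inclusion, though it is the case of interest (and if $\det M = 0$ the statement is trivial anyway).

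There is essentially no obstacle here; the only thing to be careful about is the bookkeeping that $\operatorname{adj}(M)$ genuinely has integer entries, which follows because its entries are signed $(|X|-1)\times(|X|-1)$ minors of the integer matrix $M$ (an immediate consequence of the integrality clause in Lemma~\ref{smalldet} applied to submatrices). One could alternatively phrase the argument via Cramer's rule (Lemma~\ref{cramer}): for each $x \in X$ the system $My = (\det M)[x]$ has the unique solution whose $y$-coordinate is $\det(M_y)/\det(M)$ with $M_y$ the matrix $M$ with its $y$-th column replaced by $(\det M)[x]$; expanding that replaced column along itself shows the coordinate equals a signed minor times $\det M$ divided by $\det M$, hence an integer — so the columns of $\operatorname{adj}(M)$ (scaled appropriately) land in $M^{-1}(\det M)\bbZ^X \cap \bbZ^X$. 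I would present the adjugate version since it is cleaner and avoids the case split in Cramer's rule.
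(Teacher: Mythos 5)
Your proposal is correct and is essentially the paper's approach made explicit: the paper's proof simply declares the lemma ``immediate from Cramer's rule'' (adding only an informal intuition that the quotient group $\bbZ^X / M\bbZ^X$ has $\det M$ elements, whence $(\det M)v \equiv 0$), and your computation $M\,\mathrm{adj}(M)\,v = (\det M)v$ with $\mathrm{adj}(M) \in \bbZ^{X\times X}$ is exactly the standard way to cash out that appeal --- indeed you spell out the Cramer's-rule phrasing yourself. If anything your write-up is more complete than the paper's, and your observation that non-degeneracy is not actually needed for the inclusion is also accurate.
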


\begin{proof}
Again, this is immediate from Cramer's rule (Lemma \ref{cramer}).
The intuition is as follows.
For $u, v \in \bbZ^\Alphabet$, we say that $u \equiv v$ iff $u - v \in M \bbZ^X$.
The quotient group $\bbZ^\Alphabet /_\equiv$ has $\det M$ elements (intuitively,
for $|X| = 2$, the number of elements is equal to the area of the
parallelogram given by columns of $M$; this intuition also works in other
dimensions). Thus, $(\det M) v \equiv 0$.\myqed
\end{proof}

\begin{lemma}\label{detrewrite}
Let $P \subseteq [-C..C]^X$ be a linearly dependent set of vectors. Then for some
$\alpha \in \bbZ^P$ we have $\sum_{v \in P} \alpha_v v = 0$, where $\nmax{\alpha} \leq 
\hadamard{|X|}{C}$, and $\alpha_v > 0$ for some $v \in P$.
\end{lemma}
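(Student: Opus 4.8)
The plan is to extract a single integer linear relation from a non-degenerate square subsystem and bound its coefficients by determinants. We may assume $C \ge 1$ (if $C = 0$ then $P \subseteq \{0\}$ and there is nothing interesting to do). First, fix a maximal linearly independent subset $L = \{v_1,\dots,v_k\} \subseteq P$; since $P$ is linearly dependent we have $L \ne P$, so we may choose some $v_0 \in P \setminus L$. By maximality of $L$ the vector $v_0$ lies in $\osum{\bbR}{L}$, so $v_0 = \sum_{i=1}^k \lambda_i v_i$ for uniquely determined $\lambda_i \in \bbR$, and note $k \le |X|$.

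Since $v_1,\dots,v_k$ are linearly independent, the $|X| \times k$ matrix having these vectors as columns has rank $k$; pick a set $Y \subseteq X$ of $k$ row indices for which the $k \times k$ submatrix $A$ with columns $v_1|_Y,\dots,v_k|_Y$ is non-degenerate. Restricting the identity $v_0 = \sum_i \lambda_i v_i$ to the coordinates in $Y$ gives $A\lambda = v_0|_Y$ with $\lambda = (\lambda_1,\dots,\lambda_k)^T$, and since $A$ is non-degenerate $\lambda$ is exactly its unique solution. Cramer's rule (Lemma \ref{cramer}) then gives $\lambda_i = \det(A_i)/\det(A)$, where $A_i$ is $A$ with its $i$-th column replaced by $v_0|_Y$. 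All entries of $A$ and of $A_i$ lie in $[-C..C]$, and both are integer matrices, so $\det(A)$ is a nonzero integer.

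Now set $\alpha \in \bbZ^P$ by $\alpha_{v_0} = -\det(A)$, $\alpha_{v_i} = \det(A_i)$ for $1 \le i \le k$, and $\alpha_v = 0$ for every remaining $v \in P$. Then
\[ \sum_{v\in P}\alpha_v v \;=\; -\det(A)\,v_0 + \sum_{i=1}^k \det(A_i)\,v_i \;=\; \det(A)\Bigl(-v_0 + \sum_{i=1}^k \lambda_i v_i\Bigr) \;=\; 0, \]
and, replacing $\alpha$ by $-\alpha$ if necessary, we may assume $\alpha_{v_0} > 0$, so some coordinate of $\alpha$ is positive. By the Hadamard bound (Lemma \ref{smalldet}), $|\det(A)| \le \hadamard{k}{C}$ and $|\det(A_i)| \le \hadamard{k}{C}$; since $k \le |X|$ and $\hadamard{m}{C} = m!\,C^m$ is nondecreasing in $m$ for $C \ge 1$, we get $\nmax{\alpha} \le \hadamard{|X|}{C}$, as required.

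The only step requiring care is the choice of the row set $Y$: it must be made after establishing that $L$ is independent (so that $A$ is genuinely invertible), and one must observe that the rational vector $\lambda$ recovered from the restricted $k \times k$ system coincides with the coefficient vector of the true relation $v_0 = \sum_i \lambda_i v_i$ in $\bbR^X$ — which it does, simply because that relation exists and restricting a valid vector identity to a subset of its coordinates preserves it. Everything else is a direct invocation of Cramer's rule and the determinant bound.
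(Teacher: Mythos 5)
Your proof is correct, but it takes a genuinely different route from the paper's. You fix a maximal linearly independent subset $L$, adjoin one dependent vector $v_0$, pass to a non-degenerate $k\times k$ row-minor $A$, and read off an integer relation directly from Cramer's rule, bounding each coefficient separately by the minor's Hadamard bound $\hadamard{k}{C}\le\hadamard{|X|}{C}$. The paper instead works with a \emph{minimal} linearly dependent set, singles out the element $u$ whose coefficient in the (unique up to scalar) relation is largest in absolute value, extends $P\setminus\set{u}$ by unit vectors to a full column basis of $\bbR^X$, and invokes Lemma \ref{detgroup} to conclude that $(\det M)u$ has integer coordinates in that basis; the bound $\nmax{\alpha}\le\det M\le\hadamard{|X|}{C}$ then comes from the uniqueness-up-to-scalar argument combined with the choice of $u$ as the maximal-coefficient element. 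Your argument is more direct: it avoids the minimality and uniqueness considerations and Lemma \ref{detgroup} altogether, relying only on Lemmas \ref{cramer} and \ref{smalldet}, and it even gives the marginally sharper bound $\hadamard{k}{C}$ with $k$ the rank of $P$; the price is that your relation is supported only on $L\cup\set{v_0}$ rather than on an entire minimal dependent set, but the lemma as stated (and its use in Lemma \ref{detrewritemuch}) does not require full support, so nothing is lost. The one delicate step --- that the solution of the restricted $k\times k$ system coincides with the coefficient vector of the true relation in $\bbR^X$ --- is exactly the point you flag, and your justification of it is correct.
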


\begin{proof}
Without loss of generality we can assume that $P$ is a minimal linearly dependent set.
Let $(\beta_v) \in \bbR^P$ be the set of coefficients for which $\sum_{v\in V} \beta_v v=0$,
and such that $\beta_v\neq 0$ for some $v \in P$. Since $P$ is minimal, we know that
$(\beta_v)$ is unique up to a constant: that is, if $(\beta_v)$ and $(\beta'_v)$ have
this property, then $\beta'_v = q \beta_v$ for some $q \in \bbR$.

Let $u \in P$ be the element such that $|\beta_u| \geq |\beta_v|$ for each $v \in P$.
Let $P_0 = P - \set{u}$. We know that the $\osum{\bbR} P_0$ = $\osum{\bbR} P$.

Since $P$ was minimal, we know that $P_0$ is linearly independent set.
Let $P_1 \supseteq P_0$ be such that $P_1 \subseteq [-C..C]^X$, and $P_1$ is a base
of $\bbR^X$. This can be done by extending $P_0$ by unit vectors which are not yet in
the subspace spanned by $P_0$.

Let $M \in [-C..C]^{X \times X}$ be a matrix whose columns are the elements of $P_1$.
Since $P_1$ is a base, each vector $v$ can be written as a linear combination of
elements of $P_1$ with real coefficients in a unique way.
From Lemma \ref{detgroup} we know that for $v=(\det M)u$ the coefficients are
integers; moreover, since $(\det M)u \in \osum{\bbR} P = \osum{\bbR} P_0$,
the coefficients
are 0 for elements of $P_1 \backslash P_0$. Thus, we get $\sum_{v \in P} \alpha_v v = 0$,
where $\alpha_u = \det M$
and $\alpha_v$ are integers. Since $(\beta_v)$ was unique up to a constant, and $u$
was the vector with the largest coefficient, we know that the same holds for
$(\alpha_v)$. From Lemma \ref{smalldet} we know that
$\alpha_v \leq \det M \leq \hadamard{|X|}{C}$ for all $v \in P$. \myqed
\end{proof}

\begin{lemma}\label{detrewritemuch}
Let $P \subseteq [-C..C]^X$ be a finite set of vectors, and $w \in \osum\bbN P$. 
Let $H = \hadamard{|X|}{C}$. Then there is a linearly independent set
$P_0 \subseteq P$ such that $w \in \osum{[0..H]}P + \osum\bbN P_0$. In other words,
we can assume that all the periods except ones from $P_0$ are multiplied by factors
bounded by $H$.
\end{lemma}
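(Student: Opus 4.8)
The plan is to argue by induction on $|P|$, using Lemma \ref{detrewrite} to peel off one period at a time whenever $P$ is linearly dependent. If $P$ is already linearly independent, we take $P_0 = P$ and $w \in \osum{\bbN}P = \osum{\bbN}P_0$, so there is nothing to do. Otherwise, write $w = \sum_{v \in P} c_v v$ with all $c_v \in \bbN$. Since $P$ is linearly dependent, Lemma \ref{detrewrite} gives a relation $\sum_{v \in P} \alpha_v v = 0$ with $\nmax{\alpha} \leq H$ and $\alpha_{v^*} > 0$ for some $v^* \in P$. The idea is to add a suitable integer multiple of this relation to the representation of $w$ so as to zero out one coefficient, while keeping all coefficients non-negative.

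Concretely, among the indices $v$ with $\alpha_v > 0$, pick the one, say $u$, minimizing $\lfloor c_v / \alpha_v \rfloor$, and let $t = \lfloor c_u / \alpha_u \rfloor \geq 0$. Replacing $(c_v)$ by $(c_v - t\alpha_v)$ changes nothing about the value $w = \sum_v c_v v$ (we subtracted $t$ times a zero vector), makes the $u$-coefficient lie in $[0, \alpha_u - 1] \subseteq [0..H]$, and — by the choice of $u$ — keeps every coefficient with $\alpha_v > 0$ non-negative; coefficients with $\alpha_v \leq 0$ only increase, so they stay non-negative too. Now separate: let $c_u' = c_u - t\alpha_u \in [0..H]$ be the contribution that we freeze as a bounded multiple of $u$, and apply the induction hypothesis to $P' = P \setminus \{u\}$ and $w' = w - c_u' u \in \osum{\bbN}P'$. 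This yields a linearly independent $P_0 \subseteq P' \subseteq P$ with $w' \in \osum{[0..H]}P' + \osum{\bbN}P_0$, hence $w = c_u' u + w' \in \osum{[0..H]}P + \osum{\bbN}P_0$, as required.

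The one point that needs a little care is the claim that after subtracting $t$ times the relation, \emph{all} coefficients remain non-negative — this is exactly why $u$ is chosen to minimize $\lfloor c_v/\alpha_v\rfloor$ over the $v$ with $\alpha_v>0$, and why we must handle the $\alpha_v \le 0$ indices separately (for those, subtracting $t\alpha_v$ with $t \ge 0$ only adds a non-negative amount). I expect this bookkeeping, together with checking that the bound $[0..H]$ is preserved through the induction (each newly frozen coefficient is $< \alpha_u \le H$, and the inductive call reuses the same $H$ since $H$ depends only on $|X|$ and $C$, not on $|P|$), to be the main obstacle; everything else is routine. Note the lemma as stated does not assert $w \in \osum{[0..H]}(P \setminus P_0) + \osum{\bbN}P_0$ with the bounded part supported off $P_0$, only $w \in \osum{[0..H]}P + \osum{\bbN}P_0$, which the above argument delivers directly.
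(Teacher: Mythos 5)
Your proof is correct and follows essentially the same route as the paper: both hinge on Lemma \ref{detrewrite} and an induction that subtracts integer multiples of a vanishing combination of periods while preserving non-negativity of the coefficients. The only difference is bookkeeping: the paper applies the relation to the subset of periods whose coefficients are at least $H$ (inducting on the size of that subset and subtracting one copy of the relation at a time, stopping just before a coefficient would turn negative), whereas you apply it to all of $P$, subtract $t=\lfloor c_u/\alpha_u\rfloor$ copies in one shot with $u$ chosen to minimize that floor, and induct on $|P|$ --- both yield the same bound.
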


\begin{proof}
Assume that $w = \sum_{v \in P} n_v v$. Let $H = \hadamard{|X|}{C}$.

Let $P^* = \set{v \in P: n_v \geq H}$. Induction over the cardinality of $P^*$.

If $P^*$ is linearly independent, we are done (we can simply set $P_0 = P^*$).

Otherwise, from Lemma \ref{detrewrite}
we know that there is a set of coefficients $(\alpha_v) \in \bbZ^{P_0}$ such that
$\sum_{v \in P_0} \alpha_v v = 0$, and $\alpha_v \leq H$ for each
$v \in P_0$, and at least one $\alpha_v$ is positive.
For each $v \in P_0$, we subtract $\alpha_v$ from $n_v$; this does not effect the
equation $w = \sum_{v \in P} n_v v$. We repeat this until we get $n_v - \alpha_v < 0$
for some $v \in P_0$ (this will have to eventually happen, since at least one $\alpha_v$
is positive). Since $n_v - \alpha_v < 0$ and $\alpha_v \leq H$, we get that $n_v < H$.
Thus, the new set $P^*$ is a subset of the old one, and we can apply the induction
hypothesis. \myqed
\end{proof}

\subsection{Commutative grammars, runs, and cycles}

A commutative grammar (see for example \cite{esparza2}) is a tuple $G = (\Alphabet, \States, \qini, \Trans)$, where $\Alphabet$ is
a finite alphabet of {\bf terminal symbols}, $\States$ is a set of {\bf non-terminal symbols},
$\qini \in \States$ is the {\bf initial non-terminal symbol}, and $\Trans \subseteq \States \times \bbZ^\Alphabet \times \bbN^\States$ is the
{\bf transition relation}. We will write the transition $(q,a,t) \in \Trans$ as
$\mv{q}{a}{t}$, or using a multiplicative notation: $\mvm{q}{xy^{-1}}{q_1q_2^2}$ denotes
the transition $(q, [x]-[y], [q_1]+2[q_2])$. The intuition here is that our commutative
grammar allows us to use a
non-terminal symbol $q$ to produce the multiset of terminal and non-terminal symbols,
given by $a$ and $t$; we allow our terminal symbols to be produced in negative quantities.
For a transition $\trans = (q,a,t) \in \Trans$, we denote $q$, $a$ and $t$ by $\source(\trans)$,
$\out(\trans)$ and $\target(\trans)$, respectively.

We say that a grammar is {\bf positive} iff $\out(\trans) \geq 0$ for each $\trans \in \Trans$.
Note that our definition is more general than the usual one (e.g., \cite{esparza2}),
since we allow non-positive commutative grammars.

Runs, paths, and cycles are usually defined as sequences or trees of transitions.
However, we will define them as multisets of transitions. The next subsection will
show how our definitions are related to the usual ones.

Let $R \in \bbN^\Trans$ be a multiset of transitions. Let 
$\source(R) = \sum_{\trans \in \Trans} R_\trans\ [\source(\trans)]$,
$\out(R) = \sum_{\trans\in\Trans} R_\trans\ \out(\trans)$,
$\target(R) = \sum_{\trans \in \Trans} R_\trans\ \target(\trans).$
Intuitively, the number $\source(R)$ counts times each non-terminal symbol has been
used, while $\target(R)$ counts the number of times each non-terminal symbol has been
produced, and $\out(R)$ counts the number of times each terminal symbol has been
produced. For $p, q \in \States$ we say that $p \ra_R q$ iff for some $\trans \in \Trans$ such
that $R_\trans>0$, $p = \source(\trans)$ and $\target(\trans)(q) > 0$; i.e., a transition which produces
$q$ from $p$ appears in $R$ with a non-zero quantity. By $\ra^*_R$ we denote the
transitive reflexive closure of $\ra_R$. For a multiset $s \in \bbN^\States$ and 
$q \in \States$, we say that $s \ra^*_R q$ iff 
$p \ra^*_R q$ for some $p \in s$.
By $\supp(R) \subseteq \States$ we denote the set of non-terminals
$q \in \States$ such that $(\source(R))_q > 0$.

We say that a multiset of transitions $R \in \bbN^\Trans$ is a {\bf (commutative) subrun} from
$s \in \bbN^\States$ to $t \in \bbN^\States$
if the following two conditions are satisfied:

\begin{itemize}
\item {\it Euler condition:} $\source(R) - s = \target(R) - t$;
\item {\it Connectivity:} whenever $q \in \source(R)$, we have $s \ra^*_R q$.
\end{itemize}

Intuitively, in a subrun, for each state $q$, we
start with $s_q$ copies of state $q$, 
during the run we use up $\source_q(R)$ copies and produce
$\target_q(R)$ more copies, and $t_q$ copies remain at the end.
The Euler condition says that these numbers
agree. In the case where $s=t$ it means that each state is produced exactly as
many times as it is produced, just like in the classic Euler condition which says
that each vertex has the same in-degree and out-degree. In the next subsection we
will show that, as long as our conditions on a commutative subrun are satisfied,
there is an ordering such that we never use up something which has not yet been
produced.

The following kinds of subruns will be of most interest for us:
\begin{itemize}
\item A {\bf run from $p$} is a subrun from $[p]$ to 0.
\item A {\bf path from $p_1$ to $p_2$} is a subrun from $[p_1]$ to $[p_2]$.
\item A {\bf cycle from $p$} is a path from $p$ to $p$.
\end{itemize}

By $\out(G)$ we denote the (commutative) language of $G$, or the set of $\out(R)$
for all commutative runs $R$ from $\qini$.

We say that $G$ is {\bf in normal form} iff for each transition $(q,a,t)$ in $\Trans$
we have $|t| \leq 2$ and $|a| \leq 1$. Moreover, we say that $G$ is {\bf regular}
iff for each transition $(q,a,t)$ in $\Trans$ we have $|t| \leq 1$ and $|a| \leq 1$.
If $G$ is not in normal form, it is straightforward to construct a grammar $G'$ in
normal form such that $\out(G) = \out(G')$. This is done by replacing each transition
which does not satisfy the restriction by several simpler transitions, adding additional
non-terminals. For example, $\mvm{q}{a_1a_2}{q_1q_2q_3}$ is replaced
with $\mvm{q}{a_1}{q_1q'}$ and $\mvm{q'}{a_2}{q_2q_3}$, where $q'$ is
a new non-terminal. 

Usually, we will assume that our grammars are in normal form; in this case,
the size of the grammar can be described by stating the number of non-terminals $N$
and the size of the alphabet $A$, 
since for a grammar in normal form the number of transitions is bounded by $O(N^3 \alphs)$.
In the sequel, many results will work both for regular grammars and the general case,
but be much stronger in the regular case -- as a typical example, for regular grammars
we can achieve a polynomial bound, but for non-regular grammars in normal form we can
only achieve an exponential one. Both the
polynomial bound in the regular case and the exponential bound in the general case will
be of interest for us.
The notation $\gam N$ introduced in Lemma
\ref{gammalemma} below, which is linear in $N$ for a regular $G$ and exponential
for a non-regular $G$, should make it clear that a result is given both for regular
and general grammars, but it is stronger for regular ones.

We say that a cycle $C$ is {\bf simple} iff it is not a sum of two smaller non-zero
cycles. We say that a run $R$ is a {\bf skeleton run} iff it cannot be written as
$R = R_0+C$, where $C$ is a cycle, and $\supp(R)$ = $\supp(R_0)$.


\subsection{Commutative subruns can be ordered}
\label{subs_subrun_order}

In this subsection we show that our commutative subruns can be ordered
correctly, that is,
in each non-empty commutative subrun from $s$ we can choose the first transition $\delta$ such
that the remaining part is a subrun from $s - \source(\delta) + \target(\delta)$;
in other words, we can order the transitions in a subrun in such a way that, if 
we start from $s$, and
each transition consumes $\source(\delta)$ and produces $\target(\delta)$, we can
arrange the transitions in a way that we never consume something which has not been
produced yet.
This result is practically
equivalent to Theorem 3.1 in \cite{espfund}, where it has been stated in the
setting of communication-free Petri nets. Also note that, for regular grammars
and $\nsum{s} = 1$, this can be seen as a restatement of the well known Euler's theorem.

\begin{theorem}\label{subrun_lemma}
If $R$ is a non-empty subrun from $s$ to $t$, then there is a $\delta \in R$ such that
$\source(\delta) \in s$, and
$R' := R-\delta$ is a subrun from $s' := s-\source(\delta)+\target(\delta)$ to $t$.
\end{theorem}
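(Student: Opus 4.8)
The plan is as follows. Write $q_0=\source(\delta)$, $R'=R-\delta$, and $s'=s-[q_0]+\target(\delta)$, and, extending the notation for multisets of transitions, write $\supp(u)=\set{p\in\States:u_p>0}$ for a multiset $u$ over $\States$. The Euler condition for $R'$ comes for free, since $\source(R')-s'=\source(R)-s-\target(\delta)$ and $\target(R')-t=\target(R)-\target(\delta)-t$, so $\source(R')-s'=\target(R')-t$ is exactly the Euler condition for $R$; moreover $s'\in\bbN^\States$ as soon as $q_0\in s$, and $R'\in\bbN^\Trans$ as soon as $\delta\in R$. So the task reduces to choosing $\delta\in R$ with $\source(\delta)\in s$ for which $R'$ still satisfies the connectivity condition from $s'$. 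Below we use the easy inclusions $\supp(s)\setminus\set{q_0}\subseteq\supp(s')$, $\supp(\target(\delta))\subseteq\supp(s')$, and $\supp(R')\subseteq\supp(R)$.

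The first step is to isolate a sufficient condition on $\delta$. Note that the only edges of $\ra_R$ which fail to be edges of $\ra_{R'}$ have source $q_0$, and such an edge $q_0\ra_R x$ is lost only when $\delta$ is the unique transition of $R$ witnessing it and $R_\delta=1$; in that case $x\in\supp(\target(\delta))\subseteq\supp(s')$. I claim that \emph{if $q_0\in s$ and either $q_0$ is $\ra^*_{R'}$-reachable from $\supp(s')$ or $q_0\notin\supp(R')$, then $R'$ satisfies connectivity from $s'$.} To see this, take $q\in\supp(R')\subseteq\supp(R)$ and, by connectivity of $R$, a $\ra_R$-path to $q$ whose start $p$ lies in $\supp(s)\subseteq\supp(s')\cup\set{q_0}$. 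If some edge of this path is not an edge of $\ra_{R'}$, look at the last such edge: its head lies in $\supp(s')$ and every later edge survives, so $q$ is $\ra^*_{R'}$-reachable from $\supp(s')$. If every edge survives, the path is a $\ra_{R'}$-path from $p$ to $q$; if $p\in\supp(s')$ we are done, and if $p=q_0$ then $q_0\notin\supp(R')$ is impossible here --- if the path is trivial then $q_0=q\in\supp(R')$, and otherwise a surviving first edge would have to use the unique transition $\delta$ out of $q_0$, which has $R_\delta=1$, a contradiction --- so by hypothesis $q_0$, and hence $q$, is $\ra^*_{R'}$-reachable from $\supp(s')$.

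It remains to produce a $\delta$ meeting this condition. First, $\supp(s)\cap\supp(R)\neq\emptyset$: for any transition $d$ with $R_d>0$, the state $\source(d)$ lies in $\supp(R)$ and is $\ra^*_R$-reachable from $\supp(s)$, and the start of a witnessing path lies in $\supp(s)\cap\supp(R)$. Fix $q_0\in\supp(s)\cap\supp(R)$. If $(\source(R))_{q_0}=1$, let $\delta$ be the unique transition of $R$ with source $q_0$; then $q_0\notin\supp(R')$ and the sufficient condition holds. If $(\source(R))_{q_0}\geq2$ and $s_{q_0}\geq2$, then $q_0\in\supp(s')$ for \emph{any} $\delta$ with source $q_0$, so any such $\delta$ works. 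In the remaining case $(\source(R))_{q_0}\geq2$ and $s_{q_0}=1$, the Euler condition at $q_0$ gives $(\target(R))_{q_0}=(\source(R))_{q_0}-1+t_{q_0}\geq1$, so some transition of $R$ produces $q_0$ and hence $q_0$ has a $\ra_R$-in-neighbour $v\in\supp(R)$; by connectivity $v$ is $\ra^*_R$-reachable from $\supp(s)$. If $v$ is reachable from some $p\in\supp(s)\setminus\set{q_0}$, then so is $q_0$: take any $\delta$ with source $q_0$ together with a \emph{simple} $\ra_R$-path from $p$ to $q_0$; being simple it meets $q_0$ only at the end, hence uses no transition out of $q_0$, hence survives into $\ra_{R'}$, showing $q_0$ is $\ra^*_{R'}$-reachable from $\supp(s')$. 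Otherwise $v$ is not reachable from $\supp(s)\setminus\set{q_0}$, so (being reachable from $\supp(s)$ and $q_0\in\supp(s)$) we get $q_0\ra^*_R v\ra_R q_0$, a positive-length closed walk, which contains a simple cycle $q_0\ra_R y_1\ra_R\cdots\ra_R q_0$ through $q_0$; take $\delta$ to witness its first edge $q_0\ra_R y_1$, so the rest of the cycle avoids $q_0$, survives into $\ra_{R'}$, and shows $q_0$ is $\ra^*_{R'}$-reachable from $y_1\in\supp(\target(\delta))\subseteq\supp(s')$.

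Apart from the bookkeeping of the first paragraph, the only use of the Euler hypothesis is the inequality $(\target(R))_{q_0}\geq1$ in the last case. I expect the main obstacle to be that bookkeeping inside the sufficient condition: identifying precisely which $\ra_R$-edges can fail to survive in $\ra_{R'}$, verifying that their heads fall into $\supp(s')$, and handling the borderline case $p=q_0$. The rest uses only the elementary facts that a reachable state is reachable by a simple path, and that a positive-length closed walk through $q_0$ contains a simple cycle through $q_0$. \myqed
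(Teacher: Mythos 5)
Your proof is correct. It relies on the same elementary ingredients as the paper's argument --- the observation that deleting one copy of $\delta$ can only destroy $\ra_R$-edges with source $q_0=\source(\delta)$, and that the heads of such lost edges land in $\supp(\target(\delta))\subseteq\supp(s')$; minimal or simple paths to steer around the deleted transition; the Euler condition, invoked exactly once, to produce a transition generating $q_0$ in the critical case $s_{q_0}=1$ with $q_0$ used more than once; and a cycle through $q_0$ in the residual situation --- but it is organized along a genuinely different decomposition. The paper fixes $\delta$ by a top-level case split on whether $\ra_R$ has a cycle starting in $s$, and then re-establishes connectivity separately for every $r\in\supp(R')$ by repairing a minimal witnessing path in each subcase. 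You instead first prove a sufficient condition that localizes the entire verification to the single state $q_0$ (either $q_0$ is $\ra^*_{R'}$-reachable from $\supp(s')$, or $q_0\notin\supp(R')$), established via the ``last lost edge'' argument, and only then choose $\delta$ by a case analysis on the multiplicities $(\source(R))_{q_0}$ and $s_{q_0}$, with the reachability of the in-neighbour $v$ splitting the final case. What your organization buys is that the path-repair work is done once, uniformly in $q$, so the case analysis only has to certify a property of $q_0$ itself; the paper's version avoids stating the auxiliary condition but repeats essentially the same path-surgery in each of its subcases. Both proofs are of the same elementary combinatorial character, and neither is more general than the other.
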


\begin{proof}
There are two cases:
\begin{itemize}
\item the relation $\ra_R$ has a cycle starting in $s$, i.e., there is a $p \in s$ such
that $p \ra_R q$ and 
$q \ra^*_R p$. In this case, there is a $\delta\in R$ such that $\source(\delta) = p$
and $\target(\delta) = q$. We have to show that $R'$ is a subrun from $s'$ to $t$.
Let $r \in \source(R')$. Since $R$ was a subrun from $s$,
we know that $s \ra^*_{R} r$, and we have to show that
$s' \ra^*_{R'} r$. Take a minimal path $\pi$ witnessing $s \ra^*_{R} r$. 
Since $\pi$ is minimal, only its first transition starts from
an element of $s$, and in particular, can be equal to $\delta$.

\begin{itemize}
\item If $\pi$ did not start in $p$, it is still a path for $\ra_{R'}$ ($\delta$ was not in $\pi$).
\item If $\pi$ started with $\delta$, we just remove $\delta$
from $\pi$, thus obtaining a path witnessing $s' \ra^*_{R} r$. 
\item If $\pi$ started in $p$ but not with $\delta$, we know that $q \ra^*_R p \ra^*_R r$,
and both subpaths do not contain $\delta$ (provided that we take a minimal path
$q \ra^*_R p$) and thus they are still paths for $\ra_{R'}$.
\end{itemize}

\item otherwise, take any $\delta$ such that $\source(\delta) = p \in s$. Such a
$\delta$ must exist from the connectedness condition. Again, we have
to show that $R'$ is a subrun from $s'$ to $t$. 
Let $r \in \source(R')$. Since $R$ was
a subrun from $s$, we know that $s \ra^*_{R} r$, and we have to show that
$s' \ra^*_{R'} r$. Take a minimal path witnessing $s \ra^*_{R} r$. 
Since $\pi$ is minimal, only its first transition starts from
an element of $s$, and in particular, can be equal to $\delta$.
There are four subcases:

\begin{itemize}
\item If $\pi$ did not start in $p$, it is still a path for $\ra_{R'}$ ($\delta$ was not in $\pi$).
\item If $\pi$ started with $\delta$, we just remove $\delta$ from $\pi$,
thus obtaining a path witnessing $s' \ra^*_{R} r$.
\item If $\pi$ started in $p$ but not with $\delta$, and $s_p \geq 2$, then
$s'(p) \geq 1$, so $p \in s'$ and the path is still valid for $R'$.
\item If $\pi$ started in $p$ with $\delta' \neq \delta$, and $s_p = 1$,
we know that $\source_p(R) - s_p = \target_p(R) - t_p$. Since $\source(\delta) =
\source(\delta') = p$ and $s_p = 1$, $\target_p(R) > 0$, and thus
there must be a transition $\delta''$ such that
$p \in \target(\delta'')$. We know that $s \ra^*_R \source(\delta'')$, and 
the path cannot
include $\delta$ -- otherwise, we get a sequence of paths
$s \ra^*_R p \ra_R \target(\delta) \ra_R^* \source(\delta'') \ra_R p$, and thus
we get a cycle, which was dealt with in the previous case. Thus, we have
$s \ra^*_{R'} \source(\delta'') \ra_{R'} p \ra^*_{R'} r$.\qedhere
\end{itemize}
\end{itemize}
\end{proof}

\subsection{Derivation trees}
\label{subs_derivation}

In this subsection we compare our commutative runs and cycles with the usual derivation
trees. We obtain a subrun from a derivation tree simply by counting how many times
each transition has been used; using the result of the previous section,
we also show that this process can be reversed,
i.e.,~a derivation tree exists for each commutative subrun. 
We then use the derivation
trees to show upper bounds on the size of simple cycles and skeleton runs.

Again, this is much easier for regular grammars --- in this case, derivation trees are
simply paths, i.e., sequences of transitions $\transf_0, \transf_1, \transf_2, \ldots, \transf_d$ such
that $[\source(\transf_{i+1})]$ equals $\target(\transf_i)$.

Let $G$ be a commutative grammar in normal form. A {\bf derivation tree} from $p \in \States$
is a tuple $T = (V, v_0, P, \transf)$, such that:

\begin{itemize}
\item $V$ is an arbitrary set of vertices,
\item $v_0 \in V$ is a special vertex, called the root of $V$,
\item $P$ is a function from $V-\set{v_0}$ to $V$ (parent), such that for each $v \in V$
there is a $n \in \bbN$ (called the {\bf depth} of $v$) such that $P^n(v) = v_0$,
\item $\transf$ is a function from $V$ to $\Trans$. We will use $\source(v)$ and $\target(v)$
for $\source(\transf(v))$ and $\target(\transf(v))$, respectively.
\item $\source(\transf(v_0)) = p$,
\item for each $v \in V$, we have $F(v) \geq 0$, where
\[F(v) = \target(v) - \sum_{w: P(w) = v} [\source(w)].\]
\end{itemize}

We also denote $F(T) = \sum_{v \in V} F(v)$, $\source(T) = \sum_{v \in V} [\source(v)]$,
and $\target(T) = \sum_{v \in V} \target(v)$.

Intuitively, each vertex $v$ of the derivation tree represents that we are using a
non-terminal and produce new terminals and non-terminals, according to the transition
$\transf(v)$.
Children of $v$ can use the non-terminals produced. $F(v)$ represents the ``free''
non-terminals
which have been produced, but have not been used by children; $F(v) \geq 0$ represents
the fact that children cannot use non-terminals which have not been produced.

Let $\under(T): \Trans \ra \bbN$ be the function counting the number of times
each transition has been used in the derivation tree $T$: 
$(\under(T))_\trans = |\set{v \in V: \transf(v) = \trans}|$.

We say that a derivation tree $T$ from $p$ is {\bf full} iff $F(T) = 0$, and a {\bf path to $p_2$}
iff $F(T) = [{p_2}].$ We say that a derivation tree $T$ is {\bf cyclic} from $p$
iff it is a path derivation tree from $p$ to $p$.

\begin{lemma}\label{derive_subruns_easyway}
If $T$ is a derivation tree from $p$,
then $\under(T)$ is a commutative subrun from $[p]$ to $F(T)$.
\end{lemma}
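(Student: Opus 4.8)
The plan is to check the two defining conditions of a commutative subrun directly for $R := \under(T)$, exploiting the tree structure of $T$. First I would record the bookkeeping identities $\source(R) = \source(T)$ and $\target(R) = \target(T)$: partitioning $V$ according to which transition each vertex carries gives $\source(R) = \sum_{\trans \in \Trans} R_\trans [\source(\trans)] = \sum_{v \in V}[\source(\transf(v))] = \source(T)$, and likewise for $\target$. So it remains to prove the Euler condition $\source(T) - [p] = \target(T) - F(T)$ and the connectivity condition that every $q$ with $(\source(T))_q > 0$ satisfies $p \ra^*_R q$.

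For the Euler condition I would sum the defining equation $F(v) = \target(v) - \sum_{w : P(w) = v}[\source(w)]$ over all $v \in V$. The double sum $\sum_{v\in V}\sum_{w:P(w)=v}[\source(w)]$ counts each non-root vertex exactly once, through its unique parent, so it equals $\sum_{w \in V \setminus \{v_0\}}[\source(w)] = \source(T) - [\source(v_0)] = \source(T) - [p]$, where the last step uses $\source(\transf(v_0)) = p$. Hence $F(T) = \target(T) - (\source(T) - [p])$, which is exactly the Euler condition after rearrangement.

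For connectivity I would show by induction on the depth $n$ of $v \in V$ that $p \ra^*_R \source(v)$; since $(\source(R))_q > 0$ forces $q = \source(v)$ for some $v$, and $p$ lies in the support of $[p]$, this yields $[p] \ra^*_R q$ as required. The base case $n = 0$ is $v = v_0$, where $\source(v_0) = p$ and reflexivity applies. For the step, let $w = P(v)$, of depth $n - 1$; by the inductive hypothesis $p \ra^*_R \source(w)$, so it suffices to establish $\source(w) \ra_R \source(v)$. The transition $\transf(w)$ has $\source(\transf(w)) = \source(w)$ and occurs in $T$ (namely at $w$), so $R_{\transf(w)} > 0$; and since $v$ is a child of $w$, the inequality $F(w) \geq 0$ gives $[\source(v)] \leq \sum_{u : P(u)=w}[\source(u)] \leq \target(w)$, whence $\target(\transf(w))(\source(v)) \geq 1 > 0$. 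These are precisely the requirements in the definition of $\ra_R$, so $\source(w) \ra_R \source(v)$ and the induction closes.

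I do not expect a genuine obstacle: the argument is pure bookkeeping on the tree. The one place deserving care is the inductive step for connectivity, where one must confirm both that $\transf(w)$ appears in $R$ with positive multiplicity and that its target actually contains the state $\source(v)$ --- the former because $w$ is a vertex of $T$, the latter because the children's sources are dominated by $\target(w)$ thanks to $F(w) \geq 0$.
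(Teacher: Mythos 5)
Your proof is correct and follows essentially the same route as the paper: summing the defining identity for $F(v)$ over all vertices (with each non-root vertex counted once via its parent) to get the Euler condition, and using the parent--child structure to get connectivity. Your inductive step merely spells out in more detail the paper's one-line observation that $\source(P(v)) \ra_{\under(T)} \source(v)$ for every vertex $v$, including the check via $F(w)\geq 0$ that $\target(\transf(w))$ indeed contains $\source(v)$.
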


\begin{proof}
\begin{eqnarray*}
F(T) &=& \sum_{v \in V} \left(\target(v) - \sum_{w: P(w)=v} [\source(w)]\right) \\
&=& \sum_{v \in V} \target(v) - \sum_{v \in V - \set{v_0}} [\source(v)] \\
&=& \target(T) - \source(T) + [\source(v_0)] \\
&=& \target(U(T)) - \source(U(T)) + [{p}].
\end{eqnarray*}
We also have that $P(v) \ra_{\under(T)} v$ for each $v \in V$, thus $U(T)$ is indeed connected from $P(v_0) = p$.\myqed
\end{proof}

\begin{theorem}\label{derive_subruns}
If $p \in \States$, and
$R \in \bbN^\Trans$ is a commutative subrun from $[p]$ to $t\in\bbN^\States$,
then $R = \under(T)$ for some derivation tree $T$ from
$p$ such that $F(T) = t$.
\end{theorem}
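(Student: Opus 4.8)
The plan is to reverse the construction of Lemma \ref{derive_subruns_easyway}: given a commutative subrun $R$ from $[p]$ to $t$, build a derivation tree $T$ from $p$ with $\under(T) = R$ and $F(T) = t$, by induction on $\nsum{R}$, using Theorem \ref{subrun_lemma} to peel off transitions one at a time in a legal order. If $R$ is empty, then the Euler condition gives $t = [p]$, and the single-vertex ``tree'' consisting only of a root labelled by any transition with $\source = p$ does not quite work (we must not consume $p$); instead the base case is a tree with one vertex $v_0$ but we have to be slightly careful — actually the clean base case is when $R$ is empty: take $V = \emptyset$? That is not allowed since $v_0 \in V$. So I will instead set up the induction so that the root is always present and handle the empty-$R$ degenerate case by a direct construction, or more smoothly, strengthen the statement to track ``forests'' rather than single trees; see the obstacle paragraph below.

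Here is the main inductive step. Suppose $\nsum{R} \geq 1$. By Theorem \ref{subrun_lemma} there is a transition $\delta \in R$ with $\source(\delta) = p' \in [p]$, i.e. $\source(\delta) = p$, such that $R' := R - \delta$ is a subrun from $s' := [p] - [p] + \target(\delta) = \target(\delta)$ to $t$. Now $\target(\delta) \in \bbN^\States$ is a multiset of non-terminals, say $\target(\delta) = [q_1] + \dots + [q_k]$ with $k = \nsum{\target(\delta)} \leq 2$ (normal form). I would like to split $R'$ into subruns $R_1, \dots, R_k$ where $R_i$ is a subrun from $[q_i]$ to some $t_i$ with $\sum_i R_i = R'$ and $\sum_i t_i = t$, recurse on each, obtaining derivation trees $T_i$ from $q_i$ with $\under(T_i) = R_i$ and $F(T_i) = t_i$, and finally glue: create a fresh root $v_0$ with $\transf(v_0) = \delta$, and attach the roots of $T_1, \dots, T_k$ as its children. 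Then $F(v_0) = \target(\delta) - \sum_i [\source(\text{root of }T_i)] = \target(\delta) - \sum_i [q_i] = 0$, the tree condition $F(v) \geq 0$ holds at every vertex, $\source(\transf(v_0)) = p$, $\under(T) = \delta + \sum_i \under(T_i) = \delta + R' = R$, and $F(T) = F(v_0) + \sum_i F(T_i) = \sum_i t_i = t$, as required.

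The main obstacle is the splitting of $R'$ into per-child subruns $R_i$. The connectivity and Euler conditions of $R'$ are stated relative to the combined source multiset $\target(\delta) = \sum_i [q_i]$, and there is no canonical way to partition the transition-multiset $R'$ so that each piece is separately connected from its own $q_i$ and separately satisfies Euler. The right fix is to prove, by induction on $\nsum{R}$, a stronger forest version: for any $s \in \bbN^\States$ and any subrun $R$ from $s$ to $t$, there is a finite family of derivation trees (a forest) whose roots have sources exactly matching $s$ (with multiplicity), with the transitions partitioned among them, such that the $\under$-values sum to $R$ and the $F$-values sum to $t$. The inductive step then uses Theorem \ref{subrun_lemma} to remove one $\delta$ with $\source(\delta) \in s$, recurse on $R - \delta$ as a subrun from $s' = s - \source(\delta) + \target(\delta)$ to $t$, obtaining a forest, and then reorganize: the tree whose root carries (one of) the $\target(\delta)$ non-terminals gets adopted as a child of a new root labelled $\delta$; more precisely one matches each unit of $\target(\delta)$ to a distinct root in the recursively obtained forest whose source is that non-terminal, hangs all of them under the new $\delta$-root, and leaves the remaining trees (those rooted at the other elements of $s'$, i.e. the elements of $s$ minus $\source(\delta)$) untouched. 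One must check that such a matching exists — it does, because $s' \supseteq \target(\delta)$ as multisets and the recursively produced forest has exactly one root per unit of $s'$ with the correct source — and that connectivity is preserved, which it is because each retained tree $T_i$ already satisfies $P(v) \ra_{\under(T_i)} v$ internally, and Lemma \ref{derive_subruns_easyway} then certifies that each resulting tree yields a genuine subrun. Specializing the forest statement to $s = [p]$, the forest is a single tree (one root, source $p$), giving exactly Theorem \ref{derive_subruns}. The degenerate empty-$R$ case is handled inside this induction: a subrun from $s$ to $t$ with $R = \emptyset$ forces $s = t$ by Euler, and one returns the forest of $\nsum{s}$ single-vertex trees — but since a derivation tree must have a root labelled by \emph{some} transition, for the $s = [p]$ specialization with $R$ empty one needs $t = [p]$ and the statement degenerates; I would phrase Theorem \ref{derive_subruns} so that the empty derivation tree is permitted (or note that a grammar with $p$ having no outgoing transition and $R$ empty trivially has $t = [p]$ and nothing to derive). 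The bookkeeping of which root absorbs which non-terminal is the only place where care is genuinely needed; everything else is the additive accounting already rehearsed in Lemma \ref{derive_subruns_easyway}.
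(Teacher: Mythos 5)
Your proof is correct in substance, but it is organized differently from the paper's. The paper also relies on repeated use of Theorem \ref{subrun_lemma}, yet it never splits the run and never needs a forest generalization: it grows a single partial tree $T$ top-down, adding one leaf per step, and maintains the invariant that $R-\under(T)$ is a subrun from the multiset $F(T)$ of currently free non-terminals to $t$. Theorem \ref{subrun_lemma} (stated for an arbitrary multiset source) then always supplies the next transition $\delta'$ with $\source(\delta')\in F(T)$; attaching $\delta'$ below any vertex $v$ with $\source(\delta')\in F(v)$ preserves the invariant, and once $\under(T)=R$, Lemma \ref{derive_subruns_easyway} yields $F(T)=t$. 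Your route instead peels the first transition, recurses, and grafts, which forces you to strengthen the statement to forests over a multiset source $s$, to admit trivial (unexpanded) components so that units of $s$ surviving into $t$ are accounted for, and to verify the matching between $\target(\delta)$ and the roots of the recursively obtained forest. All of this can be made to work --- your diagnosis of why the naive per-child splitting of $R'$ fails is exactly right, and the forest invariant you propose is precisely the multiset bookkeeping that the paper's $F(T)$ performs implicitly --- but the single-tree invariant buys the same thing with less machinery. One caveat applies equally to both arguments: in the degenerate case $R=0$ (a subrun from $[p]$ to $[p]$) the statement only holds if one admits the empty derivation tree, since every tree in the paper's sense contains at least the root and hence has $\under(T)\neq 0$; the paper's proof silently assumes $R$ non-empty when it invokes Theorem \ref{subrun_lemma} for the root, so your explicit discussion of this point is a fair observation rather than a defect of your argument.
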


\begin{proof}
We will construct the tree $T$ inductively.
We start with $T_0 = (V, v_0, P, \transf)$,
where $\transf(v_0)$ is $\delta$ from Theorem \ref{subrun_lemma} for $R$, and $V = v_0$.
We will keep the following invariant: $U(T) \leq R$, and $R-U(T)$ is a subrun from
$F(T)$ to $t$. From Theorem \ref{subrun_lemma} the invariant is satisfied for $T_0$.
Suppose that the invariant is also satisfied for $T$, and $U(T) < R$.
. From Theorem $\ref{subrun_lemma}$ again
there exists a $\delta' \in R-U(T)$ such that $\source(\delta') \in F(T)$. Since 
$\source(\delta') \in F(T)$, there exists a $v$ such that $\source(\delta') \in F(v)$.
We obtain a new tree $T'$ by adding a new vertex $w$ such that $P(w) = v$ and 
$\transf(w) = \delta'$. From Theorem $\ref{subrun_lemma}$, the invariant is also satisfied
for $T'$.

Finally, we get a tree $T$ such that $U(T) = R$. From Lemma \ref{derive_subruns_easyway} we know that $F(T) = t$.
\myqed
\end{proof}

\begin{corol}\label{derive_runs_and_paths}
The following two conditions are equivalent:
\begin{enumerate}
\item $R \in \bbN^\Trans$ is a commutative run from $p$,
\item $R = \under(T)$ for some full derivation tree $T$ from $p$.
\end{enumerate}

Also, the following two conditions are equivalent:
\begin{enumerate}
\item $C \in \bbN^\Trans$ is a path from $p_1$ to $p_2$,
\item $C = \under(T)$ for some path derivation tree $T$ from
$p_1$ to $p_2$.
\end{enumerate}

In particular, the following two conditions are equivalent:
\begin{enumerate}
\item $C \in \bbN^\Trans$ is a cycle from $p$,
\item $C = \under(T)$ for some cyclic derivation tree $T$ from $p$.
\end{enumerate}
\end{corol}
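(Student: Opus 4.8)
The plan is to read off all three equivalences directly from Lemma \ref{derive_subruns_easyway} and Theorem \ref{derive_subruns}, simply by unwinding the definitions of \emph{run}, \emph{path}, \emph{cycle} on the one hand, and of \emph{full}, \emph{path}, and \emph{cyclic} derivation tree on the other. Recall that a run from $p$ is a subrun from $[p]$ to $0$, a path from $p_1$ to $p_2$ is a subrun from $[p_1]$ to $[p_2]$, and a cycle from $p$ is a path from $p$ to $p$; and that a derivation tree from $p$ is full iff $F(T)=0$, a path to $p_2$ iff $F(T)=[p_2]$, and cyclic from $p$ iff it is a path derivation tree from $p$ to $p$. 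With these dictionaries in hand the corollary is essentially a change of vocabulary.

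For the first equivalence I would prove both directions. If $R\in\bbN^\Trans$ is a run from $p$, then by definition it is a commutative subrun from $[p]$ to $0$, so Theorem \ref{derive_subruns} applied with $t=0$ yields a derivation tree $T$ from $p$ with $R=\under(T)$ and $F(T)=0$, i.e.~a full derivation tree. Conversely, if $R=\under(T)$ for a full derivation tree $T$ from $p$, then $F(T)=0$ and Lemma \ref{derive_subruns_easyway} says $\under(T)$ is a subrun from $[p]$ to $F(T)=0$, which is exactly a run from $p$.

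The second equivalence is obtained by the identical argument with $t=[p_2]$ in place of $t=0$: Theorem \ref{derive_subruns} turns a path (subrun from $[p_1]$ to $[p_2]$) into a derivation tree $T$ from $p_1$ with $F(T)=[p_2]$, i.e.~a path derivation tree from $p_1$ to $p_2$, and Lemma \ref{derive_subruns_easyway} gives the reverse implication. The third equivalence is then the special case $p_1=p_2=p$ of the second, since by definition a cycle from $p$ is precisely a path from $p$ to $p$ and a cyclic derivation tree from $p$ is precisely a path derivation tree from $p$ to $p$; nothing further needs to be checked.

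I do not expect any real obstacle here: the content of the statement has already been done in Lemma \ref{derive_subruns_easyway} (which handles the ``tree $\Rightarrow$ subrun'' direction, including connectivity via $P(v)\ra_{\under(T)} v$) and Theorem \ref{derive_subruns} (the ``subrun $\Rightarrow$ tree'' direction, built inductively using Theorem \ref{subrun_lemma}). The only thing to be careful about is to instantiate $t$ correctly in each of the three cases and to note that the cycle case is a pure specialization rather than a separate argument.
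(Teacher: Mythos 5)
Your proposal is correct and matches the paper's argument: the paper simply declares the corollary ``straightforward from Lemma \ref{derive_subruns_easyway} and Theorem \ref{derive_subruns},'' and your unwinding of the definitions with $t=0$, $t=[p_2]$, and the cycle case as the specialization $p_1=p_2=p$ is exactly the intended justification.
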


\begin{proof}
Straightforward from Lemma \ref{derive_subruns_easyway} and
Theorem \ref{derive_subruns}. \myqed
\end{proof}

\begin{lemma}\label{gammalemma}
Let $G$ be a grammar in normal form.
Let $D$ be a derivation tree such that $D$ has no vertices at depth $M$.
Then $|D| < \gam M$, where $\gam M = M+1$ if $G$ is regular, and $2^{M+1}$ otherwise.
\end{lemma}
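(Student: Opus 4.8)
The plan is to bound the number of vertices of a derivation tree $D$ by controlling, level by level, how many vertices can occur at each depth. Since $D$ is a derivation tree for a grammar in normal form, each vertex $v$ carries a transition $\transf(v)$ with $|\target(\transf(v))| \leq 2$; hence the number of children of $v$, which is at most $|\target(v)|$ (because $F(v) = \target(v) - \sum_{w: P(w)=v}[\source(w)] \geq 0$ forces $\sum_{w:P(w)=v}[\source(w)] \leq \target(v)$ coordinatewise, so $v$ has at most $\nsum{\target(v)} \leq 2$ children), is at most $2$ in general, and at most $1$ if $G$ is regular (since then $|\target(\transf(v))| \leq 1$).

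First I would let $n_k$ denote the number of vertices of $D$ at depth exactly $k$, for $k = 0, 1, \ldots, M-1$ (there are no vertices at depth $M$ by hypothesis, and none deeper since every vertex's ancestor chain reaches the root, so depths are $\leq M-1$). We have $n_0 = 1$ (the root). In the regular case each vertex has at most one child, so $n_{k+1} \leq n_k$, giving $n_k \leq 1$ for all $k$, and therefore $|D| = \sum_{k=0}^{M-1} n_k \leq M < M+1 = \gam M$. In the general case each vertex has at most two children, so $n_{k+1} \leq 2 n_k$, giving $n_k \leq 2^k$, and therefore $|D| = \sum_{k=0}^{M-1} n_k \leq \sum_{k=0}^{M-1} 2^k = 2^M - 1 < 2^{M+1} = \gam M$.

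The only genuine point to get right is the claim that a vertex has at most $\nsum{\target(v)}$ children; everything else is the geometric-series estimate above. This follows because $P$ assigns to each non-root vertex a parent, so the multiset $\sum_{w: P(w)=v}[\source(w)]$ counts exactly the children of $v$ (each child contributing $1$ to the total), and the condition $F(v) \geq 0$ bounds this multiset, hence its total mass, by $\nsum{\target(v)}$, which is $\leq 2$ by normal form ($\leq 1$ for regular grammars). I do not expect any real obstacle here — the lemma is essentially the observation that a tree of depth less than $M$ with branching bounded by the number of non-terminals produced in one transition has the stated size bound, and normal form makes that branching factor $2$ (respectively $1$).
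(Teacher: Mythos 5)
Your proof is correct and follows essentially the same route as the paper: bound the branching factor by $2$ (or $1$ in the regular case) using normal form, count at most $C^k$ vertices at depth $k$, and sum the geometric series. You simply make explicit the justification via $F(v)\geq 0$ that the paper leaves as an immediate observation.
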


\begin{proof}
If each vertex has at most $C$ children, then there are at most
$C^d$ vertices at depth $d$. For regular grammars $C=1$, and for
grammars in normal form, $C=2$. All the hypotheses follow from
simple calculations. \myqed
\end{proof}

\begin{theorem}\label{cyclerunlimit}
Let $G$ be a grammar in normal form with $\qstate$ non-terminals. 
If $C$ is a simple cycle, then $|C| < \gam{\qstate}$.
If $R$ is a skeleton run, then $|R| < \gam{\qstate^2}$.
\end{theorem}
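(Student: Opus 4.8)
The plan is to pass from cycles and skeleton runs to their derivation trees via Corollary~\ref{derive_runs_and_paths}, bound the depth of these trees, and then apply Lemma~\ref{gammalemma}.

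For the first assertion, let $C$ be a simple cycle from some $p$ and let $T$ be a cyclic derivation tree from $p$ (so $F(T)=[p]$) with $\under(T)=C$. I claim that $T$ has no vertex at depth $\qstate$, which by Lemma~\ref{gammalemma} gives $|C|=|T|<\gam{\qstate}$. Suppose otherwise; then some root--leaf path of $T$ runs through $\qstate+1$ vertices, so by the pigeonhole principle two of them, an ancestor $u$ and a proper descendant $v$, satisfy $\source(u)=\source(v)=:q$. The idea is to peel off the loop between $u$ and $v$: deleting the subtree rooted at $v$ from the one rooted at $u$ and letting the parent of $v$ reabsorb the freed copy of $q$ produces a cyclic derivation tree from $q$, whose underlying multiset is, by Corollary~\ref{derive_runs_and_paths}, a nonzero cycle $C'$ with $C'\le C$; while grafting a copy of $T_v$ in place of $T_u$ inside $T$ gives a derivation tree whose underlying multiset is $C-C'$. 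The point is to check, via the Euler bookkeeping built into the definition of a derivation tree --- concretely, that in a cyclic derivation tree from $p$ there is exactly one vertex $w$ with $F(w)=[p]$ while $F(w')=0$ for all others --- that this grafted tree is again a cyclic derivation tree from $p$, so that $C-C'$ is a nonzero cycle as well. Then $C=C'+(C-C')$ exhibits $C$ as a sum of two smaller nonzero cycles, contradicting simplicity.

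For the second assertion, let $R$ be a skeleton run and $T$ a full derivation tree from $p$ with $\under(T)=R$; here I aim to show that $T$ has no vertex at depth $\qstate^2$, so that $|R|<\gam{\qstate^2}$. The peeling step is the same, and in fact cleaner, since in a full derivation tree every vertex $w$ has $F(w)=0$, so the peeled piece $C'$ is automatically a cycle; the new requirement is that the remainder $R_0:=R-C'$ must satisfy $\supp(R_0)=\supp(R)$, otherwise the decomposition $R=R_0+C'$ does not contradict $R$ being a skeleton run. To arrange this I choose the repeated pair carefully. Along a root--leaf path of length $\qstate^2$, attach to each vertex $w$ the set $S_w\subseteq\States$ of non-terminals occurring as sources in the subtree $T_w$; this sequence is nonincreasing and its values are nested nonempty subsets of a $\qstate$-element set, so it takes at most $\qstate$ distinct values, and hence some $\qstate+1$ consecutive vertices of the path share the same $S_w$. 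Pigeonhole inside that block yields an ancestor $u$ and descendant $v$ with $\source(u)=\source(v)$ and $S_u=S_v$; the peeling then deletes only transitions whose sources already occur in $T_v$, which survives in the grafted tree, so $\supp(R_0)=\supp(R)$. This contradicts the definition of a skeleton run, so no vertex of $T$ lies at depth $\qstate^2$, and Lemma~\ref{gammalemma} finishes the proof.

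The step I expect to be the main obstacle is the bookkeeping in the peeling construction: one must verify that after the deletion and the regrafting both the Euler condition and connectivity are restored \emph{exactly}, so that the two pieces are genuinely nonzero cycles (respectively a skeleton-compatible run and a cycle) rather than merely subruns between the wrong pair of endpoints. This is also where the two different bounds originate --- in the cyclic case the unique vertex carrying the surviving non-terminal $[p]$ must be taken into account when deciding which repeated pair to peel, which is what keeps the bound at $\qstate$, whereas in the full case all $F$-values vanish but the extra demand $\supp(R_0)=\supp(R)$ costs an additional factor of $\qstate$ to locate a window on which the set of occurring non-terminals is constant.
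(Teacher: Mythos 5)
Your argument follows the paper's proof in all essentials: both halves pass to derivation trees via Corollary~\ref{derive_runs_and_paths}, bound the tree depth by a pigeonhole argument combined with a peel-and-regraft surgery, and finish with Lemma~\ref{gammalemma}. For simple cycles your peeling between two equal-source vertices on a deep branch is exactly the paper's construction. For skeleton runs you deviate in the combinatorial selection: the paper extracts $\qstate+1$ occurrences of a single non-terminal from a branch of depth $\qstate^2$, peels $\qstate$ nested cycles $C_1,\dots,C_\qstate$, and then finds $k$ with $U_k=U_{k-1}$ in the chain of supports $U_K=\supp\bigl(R_0+\sum_{j\le K}C_j\bigr)$, whereas you fix a block of $\qstate+1$ consecutive branch vertices on which the set of sources occurring in the subtree is constant and peel a single cycle from inside it. Both are double pigeonholes costing the same $\qstate^2$; yours removes one cycle and yields $\supp(R-C')=\supp(R)$ directly, while the paper's avoids reasoning about subtree source-sets --- the difference is cosmetic. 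One honest caveat: the bookkeeping obstacle you flag in the cyclic case is real. If the unique vertex carrying the free copy of $[p]$ lies strictly inside the peeled loop, the two pieces are a run from $p$ and a subrun from $[q]$ to $[q]+[p]$ rather than two cycles, and it is not clear that a repeated pair avoiding this can always be found on a branch of depth only $\qstate$; you gesture at choosing the pair ``taking the special vertex into account'' without giving a rule. However, the paper's own proof does not address this point either (it simply asserts that both pieces are cyclic derivation trees), so on this step your write-up matches the published argument's level of rigor; only the skeleton-run half, where all $F$-values vanish, is fully clean in both treatments.
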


\begin{proof}
First, let $C$ be a simple cycle. From Corollary \ref{derive_runs_and_paths} we know
that $C = \under(T)$ for some cyclic derivation tree $T = (V,v_0,P,d)$ from $p$.

We will show that $T$ has no vertex at depth $N$.
Indeed, suppose otherwise that $T$ has a vertex $v$ at depth $n > N$. 
Consider a branch of the derivation tree: $v, p(v), p^2(v), \ldots, p^n(v) = v_0$.
Since $n \geq N$, we have $\source(p_i(v)) = \source(p_j(v))$ for some $i < j$. Let $V_1$ be all the
descendants of $p_i(v)$ (inclusive), $V_2$ be all the other descendants of $p_j(v)$, and $V_3$
be all the other vertices. Then $T_1 = (V_2, p_i(v), P, d)$ is a cyclic derivation tree,
and so is  $T_2 = (V_1 \cup V_3, p_i(v), P^*, d)$, where $P^*(p_i(v)) = P(p_j(v))$ and
$P^*(w) = P(w)$  for all other $w \in V_1 \cup V_3$. Hence, $C$ is not a simple cycle
($C = \under(T) = \under(T_1) + \under(T_2)$).

Now, let $R$ be a skeleton run. From Corollary \ref{derive_runs_and_paths} we know
that $R = \under(T)$ for some full derivation tree $T = (V,v_0,P,d)$ from $q_0$.

We will show that $T$ has no vertex at depth $N^2$.
Indeed, suppose otherwise that $T$ has a vertex $v$ at depth $n \geq \qstate^2$. 
Consider a branch of the derivation tree: $v, p(v), p^2(v), \ldots, p^n(v) = v_0$.
Since $n \geq \qstate^2$, there are indices $i_0, i_1, \ldots, i_N$ such that $\source(p_{i_k}(v))$
is the same non-terminal $q$ for each $k$. By repeating the construction above $N$ times,
we can decompose $R = R_0 + \sum_{k=1}^{\qstate} C_k$, where $R_0$ is a run and $C_k$ is the
cycle between $p_{i_{k-1}}(v)$ and $p_{i_k}(v)$. Let $U_K = \supp(R_0 + \sum_{k=1}^{K} C_k$. Since
$U_0 \subseteq U_1 \subseteq U_2 \ldots \subseteq U_N \subseteq \States$, $U_0$ is not empty, 
and $|\States|=N$, there must be $k$ such that $U_k = U_{k-1}$. Since $C_k$ does not add any
new non-terminals to the support, we have that $\supp(R-C_k) = \supp(R)$.
This contradicts the assumption that $R$ was a skeleton run ($R-C_k$ is a run).

All the hypotheses follow from the Lemma \ref{gammalemma}. \myqed
\end{proof}

\subsection{Compact Representation of a Commutative Regular Language}

In this subsection, we will show how to obtain a compact representation of a regular
commutative language over a fixed alphabet: such a language is a union of polynomially
many polynomially bounded simple bundles. This compact representation will be used in 
Section \ref{seccomplexity} below to prove that the membership problem is in P for
regular commutative grammars over a fixed terminal language. We also obtain a
less compact representation in the non-regular case.

\begin{theorem}\label{decomposition}
Let $G$ be a grammar in normal form with $N$ non-terminals over an alphabet
of size $\alphs$, and 
let $R$ be a run. Let $\frunbound = \gam{\qstate^2} + (2\gam{\qstate})^{1+\alphs} \hadamard\alphs{\gam{\qstate}}$.
Then $\out(R) = \out(R_1) + \sum \out(C_k) n_k$, where:
\begin{itemize}
\item $R_1$ is a run such that $|R_1| \leq \frunbound$,
\item each $C_k$ is a simple cycle from some $q \in \supp(R_1)$,
\item $\out(C_k)$ are linearly independent,
\item $n_k \in \bbN$.
\end{itemize}
\end{theorem}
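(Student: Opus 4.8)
The plan is to peel a long run $R$ down to a skeleton run by repeatedly removing simple cycles, then control how many of those cycles we actually need once we only care about the terminal output $\out$. First I would write $R = R_0 + \sum_j C'_j$ where $R_0$ is a skeleton run and the $C'_j$ are simple cycles: this follows by iterating the definition of skeleton run (each time $R$ is not a skeleton we split off a cycle $C$ with $\supp(R) = \supp(R - C)$, and the cycle can be taken simple because a non-simple cycle is a sum of smaller ones, at least one of which keeps the support). By Theorem \ref{cyclerunlimit} we have $|R_0| < \gam{\qstate^2}$ and each simple cycle satisfies $|C'_j| < \gam{\qstate}$, and crucially each such cycle is a cycle from some non-terminal in $\supp(R_0)$ (the support is preserved throughout the peeling). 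So $\out(R) = \out(R_0) + \sum_j \out(C'_j)$, and we have already met the bound on the skeleton part and the ``simple cycle from $q \in \supp(R_1)$'' requirement; what remains is to reduce the multiset $\{\out(C'_j)\}$ to a linearly independent family plus a small correction absorbed into $R_1$.

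Next I would group the $\out(C'_j)$ by value: there are at most $(2\gam{\qstate})^{\alphs}$ distinct possible values of $\out(C)$ for a simple cycle $C$, since $\nmax{\out(C)} \le |C| < \gam{\qstate}$ over an alphabet of size $\alphs$ (using $|\out(\transf)| \le 1$ in normal form), so $\sum_j \out(C'_j) = \sum_k \out(D_k) m_k$ for finitely many distinct simple cycles $D_k$ with multiplicities $m_k \in \bbN$. Now apply Lemma \ref{detrewritemuch} to the finite set $P = \{\out(D_k)\}$ (which lies in $[-C..C]^{\alphs}$ with $C = \gam{\qstate}$, so $H = \hadamard{\alphs}{\gam{\qstate}}$) and the vector $w = \sum_k \out(D_k) m_k$: this yields a linearly independent subset $P_0 \subseteq P$ such that $w \in \osum{[0..H]}{P} + \osum{\bbN}{P_0}$. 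In other words $w = \sum_k \out(D_k) a_k + \sum_{D_k \in P_0} \out(D_k) n_k$ with $a_k \le H$ and $n_k \in \bbN$. The $\out(D_k)$ for $D_k \in P_0$ are the linearly independent cycles we keep (with coefficients $n_k$); the remaining bounded part $\sum_k \out(D_k) a_k$ — finitely many cycle copies, each of size $< \gam{\qstate}$, and there are at most $(2\gam{\qstate})^{\alphs}$ distinct cycles each used at most $H$ times — gets folded into the skeleton run. Set $R_1 = R_0 + \sum_k a_k D_k$ (as a multiset of transitions); then $\out(R_1) = \out(R_0) + \sum_k \out(D_k) a_k$ and $R_1$ is still a run (adding cycles from non-terminals already in the support preserves the Euler and connectivity conditions, by Corollary \ref{derive_runs_and_paths} applied to the derivation trees, or directly). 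Each kept cycle $D_k$ is a cycle from some $q \in \supp(R_0) \subseteq \supp(R_1)$, so the support condition holds.

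Finally I would verify the size bound. We have $|R_1| \le |R_0| + \sum_k a_k |D_k|$; here $|R_0| < \gam{\qstate^2}$, the number of distinct simple cycle values contributing is at most $(2\gam{\qstate})^{\alphs}$, each $a_k \le H = \hadamard{\alphs}{\gam{\qstate}}$, and each $|D_k| < \gam{\qstate} \le 2\gam{\qstate}$, giving $\sum_k a_k |D_k| < (2\gam{\qstate})^{\alphs} \cdot \hadamard{\alphs}{\gam{\qstate}} \cdot 2\gam{\qstate} = (2\gam{\qstate})^{1+\alphs}\hadamard{\alphs}{\gam{\qstate}}$, so $|R_1| \le \gam{\qstate^2} + (2\gam{\qstate})^{1+\alphs}\hadamard{\alphs}{\gam{\qstate}} = \frunbound$ as required. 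The main obstacle I anticipate is the bookkeeping around ``folding cycles into a run'': one must check that the multiset $R_0 + \sum_k a_k D_k$ really is a commutative run (not merely that its $\out$ is right), which is where the derivation-tree machinery of the previous subsection — specifically Corollary \ref{derive_runs_and_paths} and the fact that attaching a cyclic derivation subtree at a vertex with the matching source non-terminal yields a valid full derivation tree — does the work. A secondary subtlety is making sure the cycles split off during the skeleton-peeling are genuinely \emph{simple} and rooted in the surviving support; this needs the observation that the support only shrinks (never grows) under the peeling, so every cycle encountered is based at a non-terminal of $\supp(R_0)$.
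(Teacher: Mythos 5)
Your proposal is correct and follows essentially the same route as the paper's proof: peel $R$ down to a skeleton run bounded via Theorem \ref{cyclerunlimit}, merge simple cycles with equal output (at most $(2\gam{\qstate})^{\alphs}$ distinct values), apply Lemma \ref{detrewritemuch} so that only a linearly independent family keeps large multiplicities, and fold the at-most-$H=\hadamard{\alphs}{\gam{\qstate}}$ copies of the remaining cycles into $R_1$, yielding exactly the bound $\frunbound$. In fact you spell out two points the paper leaves implicit (that the peeled cycles can be taken simple and rooted in the preserved support, and that $R_0+\sum_k a_k D_k$ is again a run), which is a welcome clarification rather than a deviation.
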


\begin{proof}
Let $R$ be a run. As long as $R$ is not a skeleton run, we can decompose $R$ as 
a sum of a smaller run, and a simple cycle. Thus, we obtain that $\out(R) = \out(R_0) + \sum_{k \in K} \out(C_k) n_k$.

Suppose that $\out(C_k) = \out(C_l)$ for some $k \neq l$. Then we remove $l$ from $K$, and 
add $n_l$ to $n_k$. The equation $\out(R) = \out(R_0) + \sum_{k \in K} \out(C_k) n_k$ still holds.

Let $K^* = \set{k \in K: n_k \geq \hadamard\alphs{\gam \qstate}}$. 
From Lemma \ref{detrewritemuch} we can assume that $\out(C_k)$ are linearly
independent for $k \in K^*$.


Since for each $k$ we have $\nsum{C_k} < \gam N$, and $\out(C_k)$ is different
for each $k$, we have at most $(2\gam \qstate)^\alphs$ cycles there. 
Let $R_0 = R_1 + \sum_{k \in {K-K^*}} C_k n_k$. We have $|R_0| \leq |R_1| + 
(2\gam \qstate)^{\alphs+1} \hadamard\alphs{\gam \qstate}$, and $\out(R) = \out(R_0) + \sum_{k \in K^*} \out(C_k) n_k$.
\myqed
\end{proof}

\begin{corol}\label{regularbundle}
If $G$ is a regular grammar with $N$ non-terminals over an alphabet of size $\alphs$, then
$\Parikh(G)$ is a union of at most $N^{\alphs^2}$ simple bundles bounded by $(\frunbound,\qstate)$.
\end{corol}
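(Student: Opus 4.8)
The plan is to specialize Theorem \ref{decomposition} to the regular case and then group runs by their ``shape'' so that each group forms a single simple bundle. First, I would observe that for a regular grammar $G$, Theorem \ref{decomposition} applies with $\gam{\qstate} = \qstate+1$ and $\gam{\qstate^2} = \qstate^2+1$, so each run $R$ decomposes as $\out(R) = \out(R_1) + \sum_k \out(C_k) n_k$ where $|R_1| \leq \frunbound$, each $C_k$ is a simple cycle from some non-terminal in $\supp(R_1)$, the $\out(C_k)$ are linearly independent, and $n_k \in \bbN$. This already gives us that $\Parikh(G) = \out(G)$ is contained in the union, over all runs $R_1$ with $|R_1| \leq \frunbound$, of sets of the form $\set{\out(R_1)} + \osum{\bbN}{\set{\out(C_1),\dots,\out(C_m)}}$ with the $\out(C_k)$ linearly independent; conversely every such vector is in $\out(G)$ by Corollary \ref{derive_runs_and_paths} (the cycles attach to $\supp(R_1)$, so $R_1 + \sum_k n_k C_k$ is again a genuine run). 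Thus $\Parikh(G)$ is a union of simple bundles; the remaining work is to bound the number of bundles and the size of their bases and periods.

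Next I would set up the counting. A simple cycle $C$ from a non-terminal $q$ satisfies $|C| < \gam{\qstate} = \qstate+1$ by Theorem \ref{cyclerunlimit}, hence $\nmax{\out(C)} \leq \nsum{\out(C)} \leq |C| < \qstate+1 \leq \qstate$ (using that each transition of a normal-form grammar outputs at most one terminal symbol). So each candidate period $p = \out(C)$ lies in $\zmint{\qstate}^{\alphs}$; there are at most $(2\qstate+1)^{\alphs}$ such vectors, and the periods of each bundle form a linearly independent subset, so at most $\alphs$ of them. The point is that a simple bundle in our decomposition is determined by the pair consisting of the finite set of bases $W$ and the period set $P$, and I claim we may organize the decomposition so that $P$ depends only on a bounded amount of data. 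The cleanest way is: fix a linearly independent set $P \subseteq \zmint{\qstate}^{\alphs}$ of size $\leq \alphs$; let $W_P$ be the set of all $\out(R_1)$ arising, over all runs $R_1$ with $|R_1| \leq \frunbound$ and all decompositions as above whose period set is exactly $P$ (with the residual low-multiplicity cycles from $K - K^*$ already folded into $R_1$ as in the proof of Theorem \ref{decomposition}). Then $\Parikh(G) = \bigcup_P (W_P + \osum{\bbN}{P})$, a union of simple bundles indexed by the linearly independent subsets $P$, each bounded by $(\frunbound, \qstate)$ since $\nmax{\out(R_1)} \leq \nsum{\out(R_1)} \leq |R_1| \leq \frunbound$ and $\nmax{p} \leq \qstate$ for $p \in P$.

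Finally I would bound the number of such $P$. The number of subsets of $\zmint{\qstate}^{\alphs}$ of size at most $\alphs$ is at most $\big((2\qstate+1)^{\alphs}\big)^{\alphs} = (2\qstate+1)^{\alphs^2}$, which for $\qstate \geq 1$ is at most $N^{c\alphs^2}$ for a small constant; massaging the base to get exactly the clean bound $N^{\alphs^2}$ claimed in the statement is where I would have to be slightly careful about the arithmetic — using the sharper Hadamard-type estimate on how many linearly independent lattice vectors of bounded norm can coexist, rather than the crude $(2\qstate+1)^{\alphs^2}$, and absorbing constants. I expect this bookkeeping step — getting the count of bundles down to precisely $N^{\alphs^2}$ — to be the only real obstacle; everything else is a direct reading of Theorems \ref{decomposition} and \ref{cyclerunlimit} together with Corollary \ref{derive_runs_and_paths}.
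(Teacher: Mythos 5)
Your proposal follows essentially the same route as the paper's own proof: specialize Theorem \ref{decomposition}, bundle together the runs that use the same (linearly independent) set of simple-cycle outputs, and count the possible period sets, with the bases bounded by $\frunbound$ and the periods by $\qstate$. The only place you hesitate — getting exactly $N^{\alphs^2}$ rather than $(2\qstate+1)^{\alphs^2}$ — is not a real obstacle, since the paper's own count (``at most $N^\alphs$ simple cycles, hence at most $(\qstate^\alphs)^\alphs$ linearly independent sets'') is no sharper than yours and the stated bound is meant in that same loose spirit.
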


Also note that $\frunbound$ is polynomial for regular grammars over a fixed alphabet.

\begin{proof}
From Theorem \ref{decomposition} we know that for each run $R$ we have
$\out(R) = \out(R_1) + \sum \out(C_k) n_k$, where $\out(R_1)$ is bounded by $\frunbound$ and
$C_k$ is a simple cycle. We bundle the runs which use the same cycles together.
Since simple cycles are bounded by $N$, there are at most $N^\alphs$ of them, and there
are at most $(\qstate^\alphs)^\alphs$ sets of linearly independent simple cycles.\myqed
\end{proof}

It is possible to get a better bound on the number of simple bundles in dimension 2, even for non-regular grammars.

\begin{corol}\label{dimtwobundle}
If $G$ is a grammar in normal form with $N$ non-terminals over a two-letter alphabet, then
$\Parikh(G)$ is a union of $O(\qstate^2)$ simple bundles bounded by $(\frunbound,\gam{\qstate})$.
\end{corol}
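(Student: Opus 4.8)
The plan is to sharpen the argument of Corollary~\ref{regularbundle}. There, $\Parikh(G)$ was covered by one simple bundle for each \emph{set of linearly independent simple cycles}; over a two-letter alphabet such a set has at most two members, but there may still be exponentially many simple cycles, since Theorem~\ref{cyclerunlimit} only bounds their outputs by $\nmax{\out(C)}<\gam\qstate$. A naive count would therefore give $(\gam\qstate)^{O(1)}$ bundles instead of $O(\qstate^2)$. The key is that only $O(1)$ ``canonical'' cycle directions per non-terminal are ever needed, so that the bundles can be indexed, up to bounded extra data, by \emph{pairs} of non-terminals.

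The first ingredient is a quantitative structure lemma for finitely generated submonoids of $\bbZ^\Alphabet\cong\bbZ^2$: such a monoid $M$ is a union of $O(1)$ simple bundles whose periods are small relative to a chosen generating set of $M$ --- for a two-dimensional pointed cone they are the primitive generators of its two extreme rays, while for a ray, a line, a half-plane, or the whole plane one needs one, two, or a handful more pieces, together with a reduced basis of the lattice generated by $M$. I would apply this to the monoid $M_q\subseteq\bbZ^\Alphabet$ generated by the outputs $\out(C)$ of all simple cycles $C$ from $q$. Since those generators satisfy $\nmax{\out(C)}<\gam\qstate$ by Theorem~\ref{cyclerunlimit}, the lemma produces a set $\mathcal P_q$ of $O(1)$ canonical directions, each of norm $O(\gam\qstate)$, such that $M_q$ is a union of simple bundles with periods from $\mathcal P_q$ and bases of norm $O((\gam\qstate)^2)$. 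The crucial consequence is that for every simple cycle $C$ from $q$ the ray $\osum{\bbN}{\set{\out(C)}}$, and indeed all of $M_q$, is absorbed by these $\mathcal P_q$-bundles.

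The second step assembles the global covering. Given a run $R$, Theorem~\ref{decomposition} writes $\out(R)=\out(R_1)+\sum_k\out(C_k)n_k$ with $|R_1|\le\frunbound$, with the $\out(C_k)$ linearly independent (hence at most two of them), and with each $C_k$ a simple cycle from a state of $S:=\supp(R_1)$; in particular $\out(R)\in\out(R_1)+M_{q_1}+M_{q_2}$ for suitable $q_1,q_2\in S$. Now $M_{q_1}+M_{q_2}$ is again a finitely generated submonoid of $\bbZ^2$, so by the structure lemma its pieces have periods coming from the extreme rays of the cone spanned by $\mathcal P_{q_1}\cup\mathcal P_{q_2}$, or from a reduced basis of the lattice it generates; either way these periods lie in the \emph{global} pool $\mathcal P:=\bigcup_q\mathcal P_q$, which has size $O(\qstate)$. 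Grouping all runs by the pair $(q_1,q_2)$ and by which piece of the decomposition of $M_{q_1}+M_{q_2}$ is used --- $O(1)$ possibilities per pair --- yields $O(\qstate^2)$ classes; the bundle of a class is the union of the (boundedly many, boundedly large) bases contributed by its runs, together with the common period set. Each such bundle has periods of norm $O(\gam\qstate)$ and bases of norm $O(\frunbound+(\gam\qstate)^2)$, and is contained in $\Parikh(G)$: every base element lies in $\out(R_1')+M_{q_1}+M_{q_2}$ for a realizable run $R_1'$ with $q_1,q_2\in\supp(R_1')$, hence is realizable by appending simple cycles from $q_1$ and $q_2$ to $R_1'$, and adding the common periods keeps us inside the relevant $\mathcal P$-bundle of $M_{q_1}+M_{q_2}$, hence inside $\Parikh(G)$. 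Conversely the union of all these bundles covers $\Parikh(G)$ by the displayed decomposition. The one- and zero-dimensional cases only give one-period or constant bundles, which are even fewer.

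The main obstacle is the quantitative submonoid structure lemma, specifically the non-pointed cones. When the cycle outputs span a line, a half-plane, or all of $\bbR^2$, there are no extreme rays from which to read off the periods, and one must instead extract a basis of the lattice generated by a few cycle outputs and bound its norm; this should follow from a Hadamard/Cramer-type estimate in the spirit of Lemmas~\ref{smalldet}, \ref{detgroup} and~\ref{detrewritemuch} (a lattice generated by vectors of norm $\le B$ in $\bbZ^2$ has a basis of norm $O(B)$), and one must also check that the lattices that actually occur are few --- which follows because the generating pool $\mathcal P$ has size $O(\qstate)$, so each such lattice is determined, up to $O(1)$ data, by a pair of non-terminals. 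One then has to verify that the period bound $\gam\qstate$ and the base bound $\frunbound$ survive the $O(1)$ factors and the conductor corrections; the latter is comfortable, since the summand $\gam{\qstate^2}$ of $\frunbound$ dominates all terms of order $(\gam\qstate)^{O(1)}$. The remaining bookkeeping --- that $\supp(R_1)$ is preserved throughout the reduction underlying Theorem~\ref{decomposition}, that merging the bases of all runs in a class is legitimate, and that the conductor corrections stay within the stated bounds --- is routine.
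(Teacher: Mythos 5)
Your proposal is correct in substance and rests on the same key observation as the paper's proof --- in dimension two only $O(1)$ extremal cycle directions per non-terminal can ever be needed as periods, so the bundles can be indexed by pairs drawn from a pool of size $O(\qstate)$ --- but you reach it by a genuinely different formal route. The paper argues directly: for each non-terminal $q$ it fixes the cycles $C_1(q),C_2(q)$ with the extreme proportions of the two letters (plus a possible third cycle when the grammar is non-positive and the cycle outputs positively span the whole plane, which is exactly your non-pointed case), notes that every other cycle from a state of $\supp(R)$ is a positive combination of these, and then reruns the argument of Corollary \ref{regularbundle} with the periods restricted to the pool $\set{C_i(q)}$, giving $O(\qstate^2)$ bundles; no monoid structure lemma appears. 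Your detour through the submonoids $M_q$ and $M_{q_1}+M_{q_2}$ and a quantitative structure lemma for finitely generated submonoids of $\bbZ^2$ buys something real: it makes explicit the integrality issue (``positive combination'' only yields rational coefficients, so lattice/conductor corrections must be folded into the bases) which the paper's one-paragraph proof passes over, and your covering and containment bookkeeping for the merged bundles is sound. The costs are, first, that the structure lemma is your load-bearing unproved step --- it does hold in dimension two and is provable with Lemmas \ref{smalldet}, \ref{syseq} and \ref{detgroup}, e.g.\ by writing $Dv\in\osum\bbN{\set{e_1,e_2}}$ for $D$ the determinant of the extreme pair and folding at most $D-1$ residual copies of each of the $O((\gam{\qstate})^2)$ distinct cycle outputs into the base --- and, second, that this gives bases with a polynomial-in-$\gam{\qstate}$ slack rather than the literal bound $\frunbound$; note also that your claim that the summand $\gam{\qstate^2}$ of $\frunbound$ dominates $(\gam{\qstate})^{O(1)}$ is valid only for non-regular grammars, although for regular ones the other summand of $\frunbound$ is already of that polynomial order, so the slack is harmless for every later use of the corollary.
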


\begin{proof}
Let $\Alphabet = \{a_1, a_2\}$. First assume that the grammar is positive.

For each non-terminal $q$, among all the cycles from $q$, let $C_i(q)$ be
the one with the greatest proportional amount of $a_i$, for $i=1,2$. All the other
cycles from $q$ fall in the angle between $C_1(q)$ and $C_2(q)$.

Now, for each run $R$, let $C_i(R)$ be the one with the greatest proportional amount
of $a_i$, among all cycles from $q \in \supp(R)$.

Proceed as in the proof of Corollary \ref{regularbundle}, except now we can assume that
the cycles $C_1$, $C_2$ are always $C_1(R)$ and $C_2(R)$ (instead of arbitrary cycles),
as all other cycles can be written as positive linear combinations of these two.
Since $C_1(R)$ and $C_2(R)$ are chosen from $C_i(q)$, there at at most $N^2$ simple
bundles.

For non-positive grammars, for each non-terminal $q$, one of the two cases holds:
\begin{itemize}
\item all cycles from $q$ (and their positive combinations) fall in the angle between
the two extreme cycles $C_1(q)$ and $C_2(q)$,
\item positive linear combinations of cycles from $q$ cover the full plane, and can be written
as positive combinations of $C_1(q)$, $C_2(q)$ and $C_3(q)$ which are cycles from $q$.
\end{itemize}

\noindent Thus, we can always use one of the at most three cycles $C_i(q)$ for some $q \in \supp(R)$.
There are still $O(\qstate^2)$ combinations of them.\myqed
\end{proof}

\section{Window Theorem for Commutative Grammars}\label{seccommg}

This section is devoted to the following result, which we call the \emph{window theorem}: 
in order to determine whether two commutative languages $\Parikh(G_1)$ 
and $\Parikh(G_2)$ over a fixed alphabet $\Alphabet$ of size $A$ defined by
grammars $G_1$ and $G_2$ with $N$ non-terminals are disjoint or
equal, it is enough to only examine a small window of size which is
single exponential in $N$. This result will be instrumental in the proof that
inclusion, universality, and disjointness problems are in \PiP\ for commutative
grammars over an alphabet of fixed size (Theorem \ref{inclcf} below).

The situation is much easier for $\alphs=2$ than in the general case. From Corollary
\ref{dimtwobundle}
we know that each of $\Parikh(G_1)$ and $\Parikh(G_2)$ is a union of $O(N^2)$ simple bundles
bounded by $(\frunbound,\gam{\qstate})$. In this case, we can show the following result:
for each $v \in \bbZ^\Alphabet$ we can find a $v_0$ of single exponential magnitude
such that $v_0$ is in exactly the same of our $O(N^2)$ bundles as $v$. This is achieved
in Lemma \ref{secbound} in the following way:
\begin{itemize}
\item Let $W + \osum\bbN \Cout$ be one of the bundles. 
From Lemma \ref{detgroup} we know that each member of $(\det M) \bbZ^\Alphabet$,
where $M$ is the matrix whose columns are $\Cout$, is a member of $\osum\bbZ \Cout$.
Hence, if $w \in (\det M) \bbZ^\Alphabet$, then $v \in W + \osum\bbZ \Cout$ iff
$v+w \in W + \osum\bbZ \Cout$. Since
there is just a polynomial number of bundles, we can use the least common multiple
of the determinants to ensure that the equivalence above is satisfied
for each bundle.
\item The bundle is $W + \osum\bbN \Cout$, not $v \in W + \osum\bbZ \Cout$, thus
we need to ensure that the signum of the coefficients remains unchanged.
This is done by partitioning $\bbR^\Alphabet$ into \emph{regions}. Two points $v$, $v_0$
are in the same region if $v \in w + \osum\bbP\Cout$ iff $v_0 \in w + \osum\bbP\Cout$
for each $w, \Cout$ satisfying the necessary bounds. We show that we can additionally
ensure that the single exponentially bounded $v_0$ is in the same region as $v$,
which proves the lemma.
\end{itemize}

\noindent However, this approach no longer works for $\alphs>2$, as $\Parikh(G_i)$ no longer
needs to be a union of polynomial number of simple bundles; section \ref{hardgrammar}
below is devoted to showing an example of such a grammar.

We solve this problem by using the regions again. Although $\Parikh(G_i)$ need not
be a union of a polynomial number of simple bundles, this is true when we consider
regions separately: for each region $r$, $\Parikh(G_i) \cap r$ equals $U \cap r$, where
$U$ is a union of polynomial number of simple bundles. This is proven 
in Lemma \ref{mainbound} below.

The rest of this section provides a detailed statement and proof of our window theorem.
In fact, we will prove the following Lemma about semilinear sets, without
directly using the assumption that our semiliner sets come from commutative grammars;
the window theorem about commutative grammars (Theorem \ref{normalcommcorol} below)
will follow easily from it.

\begin{lemma}\label{normalcommlemma}
Let $\frunboundx \in \bbN$.
Let $S_1$ and $S_2$ be two semilinear sets over the same fixed alphabet $\Alphabet$
of size $\alphs$, given as $S_k = \bigcup_{i \in I_k} W_i + \osum\bbN{\calZ_i}$, where
$W_i\subseteq [-\frunboundx..\frunboundx]^\Alphabet$,
$\calZ_i\subseteq [-Y..Y]^\Alphabet$, for every $i \in I_1 \cup I_2$.
Then there exists a number
$\grammarboundl = O((\frunboundx+Y)^{|I_1 \cup I_2|})$ such that
$S_1 \subseteq S_2$ iff          
$S_1 \cap [-\grammarboundl..\grammarboundl]^\Alphabet \subseteq S_2 \cap [-\grammarboundl..\grammarboundl]^\Alphabet$,
and
$S_1$ is disjoint with $S_2$ iff
$S_1 \cap [-\grammarboundl..\grammarboundl]^\Alphabet$ is disjoint with $S_2 \cap [-\grammarboundl..\grammarboundl]^\Alphabet$.
\end{lemma}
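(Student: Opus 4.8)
The plan is to reduce the infinite problem to a finite one by carefully choosing a single bound $\grammarboundl$ so that membership of any vector $v$ in any of the finitely many linear pieces $W_i + \osum\bbN{\calZ_i}$ is ``detected'' by some nearby vector $v_0$ of bounded magnitude. The two key structural facts I would use are: (1) if $v \in W_i + \osum\bbN{\calZ_i}$, then by Lemma~\ref{detrewritemuch} we may assume $v \in \osum{[0..H]}{\calZ_i} + \osum\bbN{(\calZ_i)_0}$ for a linearly independent subset $(\calZ_i)_0 \subseteq \calZ_i$, where $H = \hadamard{\alphs}{Y}$; so every linear piece can be refined into \emph{simple} linear pieces, of which there are still only polynomially many (for fixed $\alphs$) and whose bases and periods are bounded by a single exponential in the data; and (2) for a simple linear piece $w + \osum\bbN{P}$ with $P$ linearly independent, membership of $v$ can be decided by solving $Mx = v - w$ where $M$ has columns $P$, and then checking that the solution $x$ lies in $\bbN^{P}$.

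The main technical device — as the paper's preceding discussion (Lemma~\ref{secbound}, Lemma~\ref{mainbound}) foreshadows — is the notion of \emph{region}: partition $\bbR^\Alphabet$ according to, for every bounded choice of base $w$ and linearly independent bounded period set $P$, whether $v \in w + \osum\bbP{P}$ (i.e.\ whether the real solution to $Mx = v-w$ is non-negative). Since there are only finitely many such $(w,P)$ with the required bounds, and for each the condition $v \in w + \osum\bbP{P}$ is a finite Boolean combination of linear inequalities with bounded integer coefficients (by Cramer's rule, Lemma~\ref{cramer}, and the Hadamard bound, Lemma~\ref{smalldet}), the region containing $v$ is a polyhedron cut out by single-exponentially bounded inequalities; hence if a region is nonempty it contains a point $v_0$ of single-exponential magnitude. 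Separately, using Lemma~\ref{detgroup}, I would let $D$ be the least common multiple of the determinants $\det M$ over all the finitely many bounded linearly independent period sets $P$; then $D\bbZ^\Alphabet \subseteq \osum\bbZ{P}$ for every such $P$, so translating $v$ by any element of $D\bbZ^\Alphabet$ does not change whether $v \in w + \osum\bbZ{P}$.

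Putting these together: given $v \in S_1$ (say), refine the piece containing $v$ to a simple linear piece $w + \osum\bbN{P}$; the region $r$ of $v$ is nonempty, so pick $v_0 \in r$ of single-exponential size, and then correct $v_0$ by an element of $D\bbZ^\Alphabet$ (which keeps it in the same region, since regions are closed under the relevant translations once we also make the region-defining polyhedra ``periodic enough'', or alternatively choose $v_0$ in $r$ already congruent to $v$ mod $D$, which is possible once the bound exceeds $D$ times the region's basic dimensions). The result is a $v_0$ of magnitude $O((\frunboundx+Y)^{|I_1\cup I_2|})$ — the exponent coming from the number of linear pieces, which controls the number of region-defining hyperplanes — that lies in exactly the same simple linear pieces (hence the same original pieces of $S_1$ and $S_2$) as $v$. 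Therefore $v \in S_1 \setminus S_2$ iff $v_0 \in S_1 \setminus S_2$, and $v \in S_1 \cap S_2$ iff $v_0 \in S_1 \cap S_2$, which gives both the inclusion and the disjointness equivalences once we take $\grammarboundl$ to be this single-exponential bound.

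The step I expect to be the main obstacle is making the ``same region $\Rightarrow$ same membership'' implication precise and getting the bound on $\grammarboundl$ to come out as claimed. The subtlety is that lying in the same \emph{real} cone $w + \osum\bbP{P}$ and being \emph{congruent} modulo $\osum\bbZ{P}$ together give membership in the integer monoid $w + \osum\bbN{P}$ — one has to combine the region information with the $D\bbZ^\Alphabet$-translation information simultaneously for all relevant $(w,P)$ at once, and verify that a single $v_0$ can satisfy all these constraints. This requires showing the intersection of the (nonempty) region $r$ with the appropriate coset of $D\bbZ^\Alphabet$ is nonempty and contains a small point, which is where the bound on $\grammarboundl$ in terms of the determinant bounds $\hadamard{\alphs}{\cdot}$ and the number $|I_1\cup I_2|$ of pieces gets pinned down.
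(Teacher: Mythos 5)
There is a genuine gap, and it is precisely the one the paper's Section~\ref{hardgrammar} is designed to expose. Your step (1) refines each bundle $W_i + \osum\bbN{\calZ_i}$ globally into simple bundles via Lemma~\ref{detrewritemuch}, one for each linearly independent subset of $\calZ_i$; the number of such subsets is polynomial in $|\calZ_i|$ and $Y$ (roughly $(2Y+1)^{\alphs^2}$ per piece), \emph{not} in $|I_1\cup I_2|$. This matters because in your final congruence step the bound on $v_0$ is governed by the least common multiple of the determinants of \emph{all} the simple pieces that $v_0$ must agree with simultaneously, so the exponent in your window bound is the number of refined pieces, not the number of original pieces. You thus obtain $\grammarboundl = O\left((\frunboundx+Y)^{|I_1\cup I_2|\cdot \mathrm{poly}(Y)}\right)$ rather than the claimed $O\left((\frunboundx+Y)^{|I_1\cup I_2|}\right)$. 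In the intended application (Theorem~\ref{normalcommcorol}) $\frunboundx$, $Y$ and $|\calZ_i|$ are single exponential in $N$ while $|I_1\cup I_2|$ is polynomial in $N$, so your bound is doubly exponential and the \PiP\ upper bound of Theorem~\ref{inclcf} is lost. The hard grammar of Section~\ref{hardgrammar} shows this is not an artifact of bookkeeping: for $\alphs>2$ the Parikh image genuinely need not be a union of few simple bundles, so any argument that first rewrites the sets as unions of simple bundles and only then runs the lcm/region argument inherits the blow-up.

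The paper's way around this, which is the one missing idea in your plan, is to perform the simplification \emph{per region} instead of globally: Lemma~\ref{mainbound} shows that inside a single weak region $\reg(r)$ the ray region $\tau(r)$ fits inside one simplicial subcone $\osum\bbP{\Cout_0}$ with $\Cout_0\subseteq\Cout$ linearly independent, and then, by repeatedly subtracting $D\,t$ with $t\in\tau(r)$ (Lemma~\ref{reducerlemma}) and using Lemma~\ref{detgroup}, one gets $S\cap\reg(r) = (W_1+\osum\bbN{\Cout_0})\cap\reg(r)$ with $W_1$ still single-exponentially bounded. Hence within each region only $|I_1\cup I_2|$ simple bundles (one per original piece) have to be handled simultaneously, and the lcm taken in Lemma~\ref{secbound} involves only $|I_1\cup I_2|$ determinants, which is exactly where the claimed exponent comes from. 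Apart from this, the remaining machinery you sketch --- regions defined by sign conditions with single-exponential coefficients, the lcm of determinants, and the observation that lying in the same real cone plus congruence modulo $\osum\bbZ{\Cout_i}$ yields membership in the integer bundle --- does match the paper's Lemmas~\ref{secbound} and~\ref{reducerlemma}; note also that the paper resolves the ``same region and same residue simultaneously'' difficulty you flag not by intersecting a region with a coset, but by constructing $v_0 = v - D^*(t_1+\cdots+t_K)$ explicitly, which stays in $\Reg(r)$ and is congruent to $v$ by construction.
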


\begin{theorem}\label{normalcommcorol}
Let $G_1$ and $G_2$ be two commutative grammars in normal form 
with at most $N$ non-terminals each,
over the same fixed alphabet $\Alphabet$ of size $\alphs$. Then there exists a number
$\grammarbound$ which is single exponential in $N$, such that
$\Parikh(G_1) \subseteq \Parikh(G_2)$ iff
$\Parikh(G_1) \cap [-\grammarbound..\grammarbound]^\Alphabet \subseteq \Parikh(G_2) \cap [-\grammarbound..\grammarbound]^\Alphabet$,
and
$\Parikh(G_1)$ is disjoint with $\Parikh(G_2)$ iff
$\Parikh(G_1) \cap [-\grammarbound..\grammarbound]^\Alphabet$ is disjoint with $\Parikh(G_2) \cap [-\grammarbound..\grammarbound]^\Alphabet$.
\end{theorem}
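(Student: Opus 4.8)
The plan is to derive the theorem as a direct corollary of Lemma \ref{normalcommlemma}, so the real work is reduced to checking that $\Parikh(G_1)$ and $\Parikh(G_2)$ admit the semilinear presentation required as input to that lemma, with the parameters $C$ and $Y$ controlled single-exponentially in $N$. First I would recall Theorem \ref{decomposition}: for a grammar $G$ in normal form with $N$ non-terminals over an alphabet of size $\alphs$, every run $R$ decomposes as $\out(R) = \out(R_1) + \sum_k \out(C_k) n_k$ with $|R_1| \le \frunbound$, the $C_k$ simple cycles from $\supp(R_1)$, the $\out(C_k)$ linearly independent, and $n_k \in \bbN$. Grouping runs by the pair $(\out(R_1), \{\out(C_k)\})$ gives $\Parikh(G) = \bigcup_{i \in I} W_i + \osum{\bbN}{\calZ_i}$, where each $W_i$ is a singleton (or small finite set) consisting of $\out(R_1)$ with $\nmax{\out(R_1)} \le \alphs \cdot \gam{N^2}$ or so — in any case bounded by $C := B_G$ up to an $\alphs$ factor — and each $\calZ_i$ is a set of linearly independent vectors $\out(C_k)$ with $\nmax{\out(C_k)} \le \gam{N} \le 2^{N+1}$ by Theorem \ref{cyclerunlimit}, so $Y := \gam{N}$. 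Since $\alphs$ is fixed, both $C$ and $Y$ are single exponential in $N$.

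The one genuinely delicate point is the size of the index set $I_1 \cup I_2$, because the bound $\grammarboundl = O((C+Y)^{|I_1 \cup I_2|})$ of Lemma \ref{normalcommlemma} is only single exponential in $N$ if $|I_1 \cup I_2|$ is \emph{polynomial} in $N$ (or at worst $\mathrm{poly}(N)$ inside the exponent after taking logs — more precisely, we need $|I_1\cup I_2|\cdot\log(C+Y)$ to be polynomial in $N$, hence $|I_1\cup I_2|$ polynomial in $N$). The number of distinct values of $\out(R_1)$ is at most $(2C+1)^\alphs$, which is single exponential in $N$, not polynomial; and the number of choices of the linearly independent cycle set $\{\out(C_k)\}$ is likewise only single exponential. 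So a naive count of $|I|$ is single-exponential, and plugging that into $(C+Y)^{|I|}$ would give a \emph{doubly} exponential bound. This is the main obstacle, and it is exactly why the paper's harder Lemma \ref{mainbound} (the region-based argument for $\alphs > 2$) is needed: I expect the correct route is to first invoke Lemma \ref{mainbound} (or Corollary \ref{dimtwobundle} when $\alphs=2$) to rewrite, \emph{region by region}, $\Parikh(G_i) \cap r$ as $U_i \cap r$ where $U_i$ is a union of only a \emph{polynomial} number of polynomially-many-dimensional simple bundles with base and period bounds still single-exponential in $N$. Running over the polynomially (in the relevant region sense) many regions and applying Lemma \ref{normalcommlemma} within each region then keeps $|I_1 \cup I_2|$ polynomial, so $\grammarboundl$ — and hence the global window $\grammarbound$ obtained as the maximum (or a suitable common multiple) over all regions — stays single exponential in $N$.

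With that in hand the proof concludes routinely: set $\grammarbound$ to be the bound produced by Lemma \ref{normalcommlemma} applied to the region-wise presentations of $\Parikh(G_1)$ and $\Parikh(G_2)$ (taking the maximum over the finitely many regions, which is still single exponential since each is single exponential and there are at most exponentially many regions, but the \emph{value} in each is controlled). Since inclusion $\Parikh(G_1) \subseteq \Parikh(G_2)$ holds iff it holds after intersecting both sides with $[-\grammarbound..\grammarbound]^\Alphabet$, and likewise for disjointness, by the conclusion of Lemma \ref{normalcommlemma} applied to each region and then unioned, the theorem follows. The only arithmetic to check is that composing the exponential bounds from Theorems \ref{decomposition}, \ref{cyclerunlimit}, the region machinery of Lemma \ref{mainbound}, and the $O((C+Y)^{|I|})$ of Lemma \ref{normalcommlemma} still yields a single exponential in $N$ when $\alphs$ is held fixed — which it does precisely because the region-wise bundle count is polynomial.
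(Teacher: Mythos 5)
There is a genuine gap, and it sits exactly at the point you flagged as ``the one genuinely delicate point.'' Your plan to recover a polynomial index set by first invoking Lemma \ref{mainbound} (regions) and then applying Lemma \ref{normalcommlemma} region by region does not work as stated: Lemma \ref{mainbound} takes a \emph{single} bundle $S=W+\osum\bbN\Cout$ (with $W$ merely norm-bounded and $\Cout$ not necessarily independent) and, within one region, replaces it by \emph{one} simple bundle; it never merges many bundles into fewer. So if you start, as you propose, from the grouping of runs by the pair $(\out(R_1),\{\out(C_k)\})$ --- which already yields exponentially many \emph{simple} bundles --- Lemma \ref{mainbound} leaves you with exponentially many bundles per region, and the bound $O((C+Y)^{|I_1\cup I_2|})$ is still doubly exponential. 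The statement quoted in the paper that ``region-wise the bundle count is polynomial'' is itself a consequence of a presentation with polynomially many (non-simple) bundles fed into Lemma \ref{mainbound}; it is not an independent fact you can invoke to shrink your exponential family, and for $\alphs>2$ Section \ref{hardgrammar} shows no global polynomial simple-bundle presentation exists. (Corollary \ref{dimtwobundle} rescues only $\alphs=2$.) Note also that the region machinery is internal to the proof of Lemma \ref{normalcommlemma}; its statement requires neither the $\calZ_i$ to be linearly independent nor the $W_i$ to be small in cardinality --- only the norm bounds $C$ and $Y$ and the number of indices matter.

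The missing idea is the choice of index set in the paper's proof: index not by the decomposition data but by subsets $J\subseteq\States_u$ of non-terminals with $|J|\leq\alphs$. Since the linearly independent cycle-output set $P$ of Theorem \ref{decomposition} has at most $\alphs$ elements, one may pick for each element an anchor state in $\supp(R_1)$, so $v\in\Parikh(G_u)$ iff there is such a $J$, a run $R_1$ with $J\subseteq\supp(R_1)$ and $|R_1|\leq\frunbound$, and $P\subseteq\Cout_J$ with $v\in\out(R_1)+\osum\bbN P$. Setting $W_J=\set{\out(R): R \mbox{ a run with } |R|\leq\frunbound,\ J\subseteq\supp(R)}$ and taking $\Cout_J$ to be \emph{all} simple-cycle outputs from states of $J$ gives $\Parikh(G_u)=\bigcup_{J}(W_J+\osum\bbN\Cout_J)$ with only $O(N^\alphs)$ indices, $W_J$ bounded by $\frunbound$ and $\Cout_J$ bounded by $\gam\qstate$ (both single exponential), and then Lemma \ref{normalcommlemma} applies directly with $|I_1\cup I_2|$ polynomial in $N$. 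Your write-up correctly assembles all the other ingredients (Theorems \ref{decomposition} and \ref{cyclerunlimit}, the norm bounds), but without this re-indexing step the argument does not go through.
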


\begin{proof}[Proof of Theorem \ref{normalcommcorol}]
Let $G_u = (\Alphabet, \States_u, q_u, \Trans_u)$. Without loss of generality we can assume
that the sets $\States_u$ are disjoint for $u \in \set{1,2}$; this way, we will be able to
identify the grammar of each run and cycle by mentioning one of the non-terminals used.

For $q \in \States_u$, let $\Cout_q = \set{\out(C): C\mbox{ is a simple cycle from }q}$,
and for $S \subseteq \States_u$, let $\Cout_S = \bigcup_{q \in S} \Cout_q$.

From Theorem \ref{decomposition} we know that 
$v \in \Parikh(G_u)$ iff there exists a run $R_1$ from $q_u$ such that $|R_1|$ is bounded
by $\frunbound$ and a linearly independent set of simple cycle outputs
$P \subseteq \Cout_{\supp{R_1}}$ such that $v \in \out(R_1) + \osum\bbN P$.

In particular, we know that $P$ is of cardinality at most $\alphs$. Let $I_u$ be the set of
all subsets of $\States_u$ of cardinality at most $\alphs$. Thus, we know that $v \in \Parikh(G_u)$
iff there exists $J \in I_u$, a run $R_1$ of $G_u$ such that $J \subseteq \supp(R_1)$
and $|R_1|$ is bounded by $\frunbound$, and a set $P \subseteq \Cout_J$ such that $v \in \out(R_1) + \osum\bbN P$.
Let $W_J = \set{\out(R): R\mbox{ is a run in $G_u$ bounded by $\frunbound$ such that $J \subseteq \supp(R)$}}$. Thus, we get
that $\Parikh(G_u) = \bigcup_{J \in I_u} (W_J + \osum\bbN \Cout_J)$. We know that $W_J$
is bounded by $\frunbound$ which is single exponential, and $\Cout_J$ is bounded by $Y=\gam \qstate,$
which is also single exponential. 

Since the cardinality of $I_u$ is polynomial in $N$, we get our claim by applying Lemma \ref{normalcommlemma}. \myqed
\end{proof}

\begin{proof}[Proof of Lemma \ref{normalcommlemma}]
We will need to introduce the notion of a \emph{region}.

Let $\calS_Y$ be the family of all linearly independent subsets of $[-Y..Y]^\Alphabet$
containing $\alphs-1$ elements. For each element $S$ of $\calS_Y$ and $v \in \bbR^\Alphabet$, let
$\phi_S(v)$ be the determinant of the matrix $M_{S,v}$ whose $\alphs-1$ columns are the
elements of $S$ (ordered in an arbitrary way), and the last column is $v$. From the
properties of the determinant, we know that $\phi_S(v) = \sum_{a \in \Alphabet} \alpha_a v_a$,
where $\alpha_a$ is the (possibly negated) determinant of the $(\alphs-1) \times (\alphs-1)$
submatrix  of $M_{S,v}$ which misses the column $v$ and the row $a$. In particular,
$\alpha_a \in [-\phibound..\phibound]$, where $\phibound = \hadamard{\alphs-1}{Y}$. Intuitively, for each
$S$, $\phi_S^{-1}(0)$ is the $\alphs-1$-dimensional subspace containing all elements
of $S$. By calculating $\phi_S(v)$ we can tell whether $v$ is above or below this
subspace. Let $\Phi_Y = \set{\phi_S: S \in \calS_Y}$. The cardinality of $\Phi_Y$ is bounded by
$(2Y+1)^{\alphs^2}$, and its elements have coefficients bounded by $\phibound$. In particular,
if $\phi \in \Phi_Y$, then $|\phi(v)| \leq \phibound \nsum{v}$, and $\nmax{\phi(v)} \leq \aphibound \nmax{v}$.

Now, let $L_B = [-2B, 2B]^\Alphabet$, and $\Regions_{B,Y}$ be the set of all functions from
$\Phi_Y \times L_B$ to $\{-1,0,1\}$.

For $r \in \Regions_{B,Y}$, we define strict regions $\Reg(r)$, weak regions $\reg(r)$,
and ray regions $\tau(r)$ as follows:

\begin{eqnarray*}
\Reg(r) &=& \set{x \in \bbR^\Sigma: \forall \phi \in \Phi_Y\ \forall l \in L_B\ \sgn(\phi(x-l)) = r_{\phi,l}}      \\
\reg(r) &=& \set{x \in \bbR^\Sigma: \forall \phi \in \Phi_Y\ \forall l \in L_B\ \sgn(\phi(x-l))\ r_{\phi,l} \geq 0} \\
\tau(r) &=& \set{x \in \bbR^\Sigma: \forall \phi \in \Phi_Y\ \forall l \in L_B\ \sgn(\phi(x))\ r_{\phi,l} \geq 0}
\end{eqnarray*}

The following picture shows what is going on for $\alphs=2$ and $Y=3$.

\begin{center}
\label{figure2d}
\includegraphics[width=8cm]{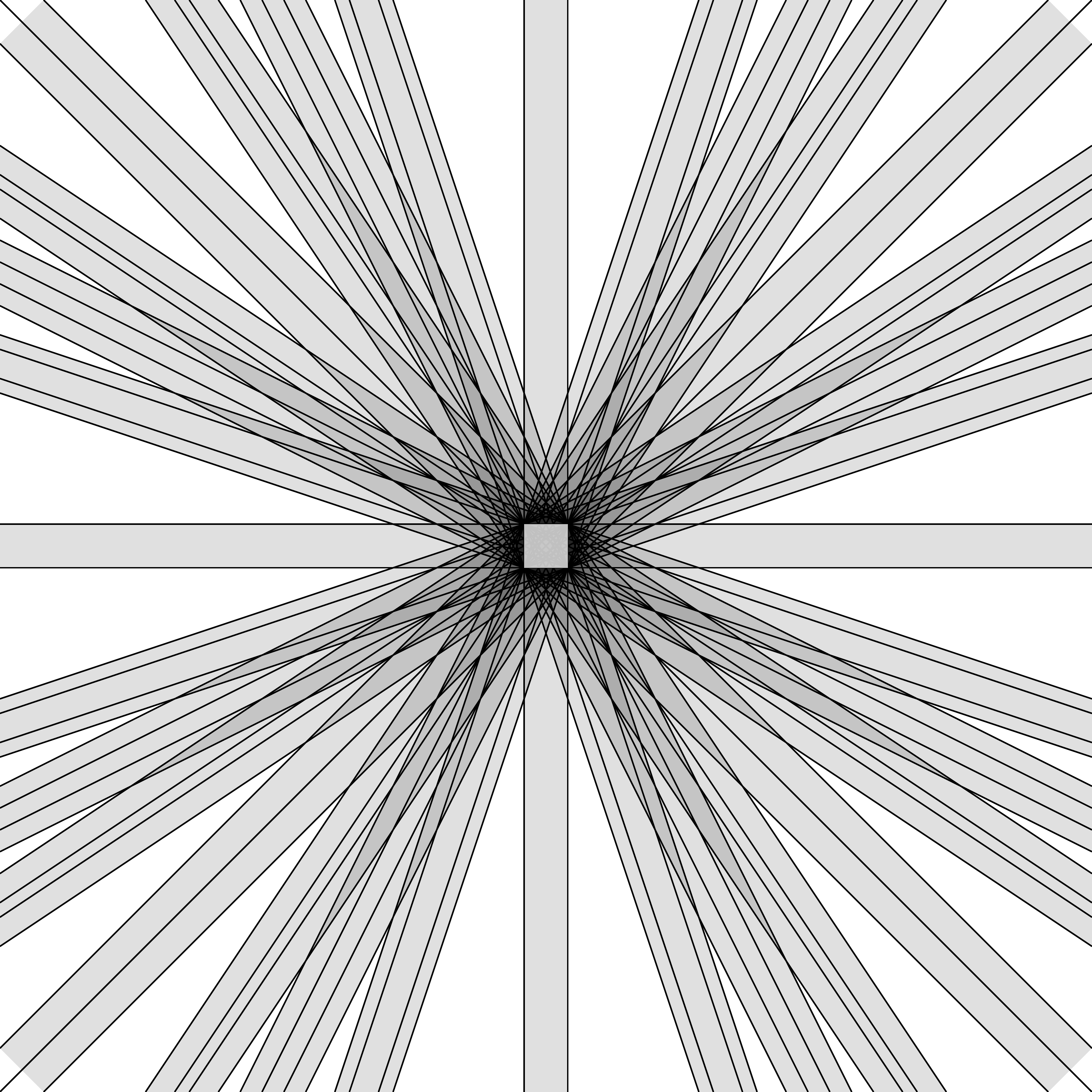}
\end{center}

The gray square in the center is $L_B$. From each point of the square, and for
each tuple of $\alphs-1$ vectors in $[-Y..Y]^\Alphabet$, we shoot a
hyperplane which is parallel to each of these vectors. For $\alphs=2$, this means
that we shoot a line in each direction given by some vector in $[-Y..Y]^\Alphabet$. 
There are 32 directions, and for each of them we get a bundle of lines, by taking
different points of $L_B$ to shoot from; such bundles of lines are marked gray on
the picture above.
Each strict region $\Reg(r)$ is given by its relation to each of these lines (above, below,
or on one of these lines). There are four types of regions:

\begin{itemize}
\item empty regions (the relations are inconsistent),
\item a single point ($r$ says that the point is on two (non-parallel) lines at once),
\item a semiline ($r$ says that the point is on a line and above some other line), 
\item a wedge ($r$ says that the point is below line $l_1$ and above line $l_2$, where
$l_1$ and $l_2$ are not parallel).
\end{itemize}

\noindent There are 32 wedge-shaped regions, an infinite number of semiline-shaped regions
(bundled into 32 packs of parallel semilines), and an infinite number of points.

Strict regions $\Reg(r)$ are disjoint and partition the plane, while the weak regions
$\reg(r)$ are their closures.
Ray regions $\tau(r)$ look similar to $\Reg(r)$, but we shoot lines only from 0, instead 
of each point from $L_B$. There are 32 wedge-shaped and 32
semiline-shaped regions,
each of them start at 0 (which is included into each $\tau(r)$). If $\Reg(r)$ is a single
point or the empty set, then $\tau(r) = \{0\}$.

The situation is more complicated in three dimensions, and it is also harder to draw.
We will show how $\tau(r)$ looks for $\alphs=3$ and $Y=2$. From the definition of
$\tau(r)$ we know that for each $\alpha>0$ we have $x \in \tau(r)$ iff
$\alpha x \in \tau(r)$. Thus, it is sufficient to draw only a situation on a planar
section of $\bbR^\Alphabet$.
Let $\Delta = \{v \in \bbR^\Alphabet: v \geq 0, \nsum v = 1\}$. The set $\Delta$ is a
triangle; this triangle is shown on the picture below.

\begin{center} \label{figure3d}
\includegraphics{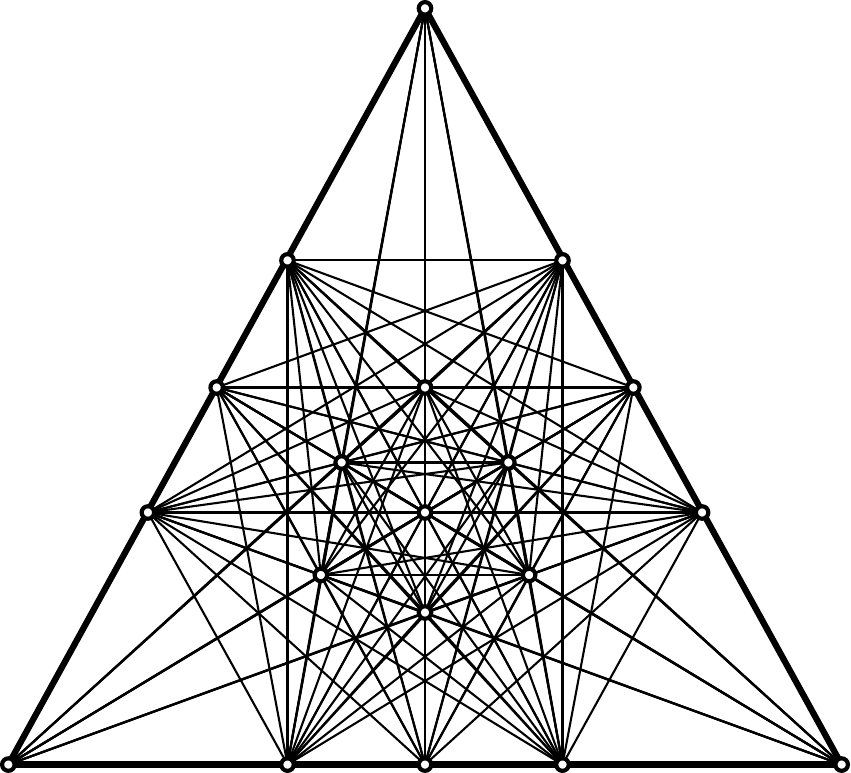}
\end{center}

The 19 white dots on the picture denote elements of
$[0..Y]^\Alphabet$; for each such $v$, its projection $v \over \nsum{v}$ is shown 
as a white dot on the triangle (in fact, we should also have white dots for
elements of $[-Y..Y]^\Alphabet$ with negative coordinates, but this would obfuscate the
picture too much --- their projections on our planar section do not fit in the triangle).
$\calS$ is the set of pairs of linearly independent elements of $\zmint{Y}^\Alphabet$,
or pairs of distinct white dots. For each $S \in \calS$, we have the linear function
$\phi_S$, which equals 0 on both elements of $\calS$. These are represented by the black
lines: for each two white dots, we have a black line going through these white dots. 

Now, let $r \in \Regions_{B,Y}$. The set $\tau(r)$ is the set of points which are on
the side of the semispace $\phi$ declared by $r_{\phi,l}$, for each $\phi$ and $l$
(or on this semispace, in case if $r_{\phi,l} =0$). In our case, semispaces are
the black lines. Therefore, $\tau(r) \cap \Delta$ could be either an
empty set (if this is inconsistent -- this happens for example when $r_{\phi,l_1} \neq 
r_{\phi,l_2}$), or a region of the triangle bounded by the black
lines, or one of the white dots, or one of the points where the black lines cross,
or a segment of a black line between two points where it crosses the other lines.
The set $\tau(r)$ can be described by multiplying $\tau(r) \cap \Delta$ by each scalar
$\alpha \geq 0$. Thus, each black line becomes a black plane, each white dot or black
line crossing becomes a semiline, each black line segment becomes an infinite triangle,
and each bounded region becomes an infinite cone.

Regions $\reg(r)$ and $\Reg(r)$ are harder to visualize --- we have to replace each
black plane by a pack of black planes by moving it by each vector in $L_B$, and again
split $\bbR^\Alphabet$ by these new black planes.

The following lemma describes the shape of the regions.

\begin{lemma}\label{regionshape}
Let $r \in \Regions_{B,Y}$. Then:

\begin{itemize}
\item $\tau(r)$ can be written as $\osum\bbP T_r$, where $T_r \subseteq [-\taulimit..\taulimit]^\Alphabet$,

\item $\reg(r) = W_r + \tau(r)$, where $W_r \subseteq [-\regbaselimit,\regbaselimit]^\Alphabet$.
\end{itemize}

In particular, if $\reg(r)$ is bounded, then $\tau(r) = \{0\}$.
$\taulimit$ is bounded polynomially by $Y$, and $\regbaselimit$ is bounded polynomially by $B$ and $Y$.
\end{lemma}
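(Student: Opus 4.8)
The plan is to handle the two bullet points in order, since the second reduces to the first plus a Carathéodory-type argument about the defining hyperplanes. For the first bullet, recall that $\tau(r)$ is a \emph{polyhedral cone}: it is the intersection of finitely many half-spaces $\set{x : \sgn(\phi(x))\, r_{\phi,l} \geq 0}$, each passing through the origin, where $\phi$ ranges over $\Phi_Y$ and $l$ over $L_B$. (Note that the $l$-dependence is spurious in $\tau$: the condition only constrains $\sgn(\phi(x))$, so after removing duplicates there are at most $|\Phi_Y|$ genuine constraints, each a half-space or hyperplane $\phi(x) \geq 0$, $\phi(x) \leq 0$, or $\phi(x) = 0$.) By the Minkowski--Weyl theorem, a polyhedral cone is finitely generated: $\tau(r) = \osum\bbP{T_r}$ for some finite $T_r$. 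The content of the lemma is the quantitative bound: the generators can be chosen with $\nmax{t} \leq \taulimit$ for $\taulimit$ polynomial in $Y$ (with the alphabet size $\alphs$ fixed). The standard way to get this is to observe that each extreme ray of the cone is the solution space of a subsystem of the defining linear (in)equalities in which $\alphs-1$ of them are tight, i.e.~set to equality; solving such a homogeneous $(\alphs-1)\times\alphs$ system gives a ray whose direction vector is, up to sign, the vector of $(\alphs-1)\times(\alphs-1)$ minors of the coefficient matrix. Since the coefficients of each $\phi \in \Phi_Y$ are bounded by $\phibound = \hadamard{\alphs-1}{Y}$, Lemma \ref{smalldet} bounds these minors by $\hadamard{\alphs-1}{\phibound}$, which is polynomial in $Y$ for fixed $\alphs$; this defines $\taulimit$. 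For the lower-dimensional faces (the ``black line segments'' and ``semilines'' of the pictures) one similarly takes extreme rays of the appropriate face, which are again solutions of such subsystems, so the same bound applies; adding these to $T_r$ gives $\tau(r) = \osum\bbP{T_r}$.

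For the second bullet, $\reg(r)$ is again a polyhedron, now cut out by the affine half-spaces $\set{x : \sgn(\phi(x-l))\, r_{\phi,l} \geq 0}$, whose recession cone is exactly $\tau(r)$ (replacing $x-l$ by $x$ kills the translation). By the decomposition theorem for polyhedra, $\reg(r) = \operatorname{conv}(W_r) + \osum\bbP{T_r} = W_r + \tau(r)$ for a finite vertex set $W_r$ — here one can take $W_r$ to be the vertices of $\reg(r)$, or, if $\reg(r)$ has no vertices (e.g.~a half-space-like region), the vertices of its intersection with an orthogonal complement of $\tau(r)$'s lineality, plus it suffices to take $W_r$ to be the single points and segment-endpoints appearing in the region description. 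Each vertex of $\reg(r)$ is the unique solution of a square subsystem $Mx = b$, where the rows of $M$ come from the $\phi \in \Phi_Y$ (entries bounded by $\phibound$, polynomial in $Y$) and $b$ has entries of the form $\phi(l)$ with $l \in L_B = [-2B,2B]^\Alphabet$, hence $|\phi(l)| \leq \phibound\nsum{l} \leq 2\alphs B\phibound$, which is polynomial in $B$ and $Y$. Applying Lemma \ref{syseq} (Cramer's rule plus the Hadamard bound), the solution has the form $x'/C$ with $x', C$ bounded polynomially in $\max(\phibound, 2\alphs B \phibound)$, hence polynomially in $B$ and $Y$; since $x$ is in fact an element of the polyhedron we can clear the denominator issue by noting $C$ divides into an integer bound — in any case $\nmax{x}$ is bounded polynomially in $B,Y$, which defines $\regbaselimit$. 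The final sentence ``if $\reg(r)$ is bounded then $\tau(r) = \{0\}$'' is immediate: the recession cone of a bounded polyhedron is $\{0\}$.

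The main obstacle I expect is bookkeeping around the \emph{lower-dimensional} faces rather than any deep difficulty: the Minkowski--Weyl / polyhedral decomposition machinery gives the structural form $\osum\bbP{T_r}$ and $W_r + \tau(r)$ essentially for free, but one must be careful that the bound $\taulimit$ (resp.~$\regbaselimit$) covers \emph{all} the generating rays and base points, including those coming from faces of every dimension, and that the degenerate cases (region a single point, a segment, a ray, the empty set) are consistent with the stated conclusions. It is cleanest to argue uniformly: every generator of $\tau(r)$ arises as a minor vector of an $(\alphs-1)\times\alphs$ integer matrix with entries bounded by $\phibound$, and every element of $W_r$ arises as a Cramer solution of an $\alphs \times \alphs$ system with coefficient bound $\phibound$ and right-hand-side bound $O(\alphs B \phibound)$; then invoke Lemmas \ref{smalldet} and \ref{syseq} once. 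A secondary subtlety is that $\tau(r)$ is defined via $\sgn(\phi(x))$ for $x \in \bbR^\Alphabet$, and for $\reg(r)$ to actually equal $W_r + \tau(r)$ one needs the half-spaces of $\reg(r)$ to be precisely the translates by the $l$'s of the central half-spaces of $\tau(r)$ — this holds because the same index set $\Phi_Y \times L_B$ with the same sign pattern $r$ governs both, so the recession cone of $\reg(r)$ is exactly $\tau(r)$ as claimed.
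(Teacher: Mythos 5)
Your proposal is correct in substance but reaches the second bullet by a genuinely different route than the paper. For $\tau(r)$ the two arguments essentially coincide: extreme directions are pinned down by $\alphs-1$ tight constraints whose coefficients are bounded by $\phibound$, and Hadamard/Cramer (Lemmas \ref{smalldet} and \ref{syseq}) give the polynomial bound $\taulimit$; the paper packages this by intersecting $\tau(r)$ with the simplex $\Delta$ and bounding the vertices of the resulting polytope, rather than by taking minor vectors of the tight subsystem. For $\reg(r)$, however, the paper does not invoke the Minkowski--Weyl/Motzkin decomposition at all: after normalizing so that the region lies in the nonnegative orthant, it defines $W_r$ directly as the set of points of $\reg(r)$ from which no positive multiple of any element of $T_r$ can be subtracted without leaving $\reg(r)$, proves $\reg(r)=W_r+\osum\bbP{T_r}$ by a greedy subtraction argument, and then bounds $W_r$ by a compactness/cluster-point argument showing that each set $H_x\cap W_r$ (cut out by the constraints tight at $x$) is a bounded polytope whose vertices are intersections of $\alphs$ hyperplanes $H^\reg_\phi$, hence polynomially bounded. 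Your route --- recession cone of $\reg(r)$ equals $\tau(r)$, polyhedron equals polytope plus recession cone, vertices bounded via Lemma \ref{syseq} --- is shorter and makes the quantitative bookkeeping more uniform; the paper's construction trades that for an explicit, self-contained $W_r$ that is bounded but not finite.

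Two points in your write-up need repair. First, the chain $\mathrm{conv}(W_r)+\osum\bbP{T_r}=W_r+\tau(r)$ with $W_r$ the \emph{finite} vertex set is false in general: in the strip $\set{(x_1,x_2): 0\leq x_1\leq 1,\ x_2\geq 0}$ the point $(1/2,0)$ is not a vertex plus a recession direction. Since the lemma only requires $W_r\subseteq[-\regbaselimit,\regbaselimit]^\Alphabet$ and not finiteness, the fix is immediate: take $W_r$ to be the convex hull of the vertices, which satisfies the same box bound; but as written the displayed equality is wrong. Second, the degenerate cases are only gestured at: if $\reg(r)$ contains a line it has no vertices, and if $\tau(r)$ contains a line it has no extreme rays, so ``vertices'' and ``extreme rays'' must be replaced by base points of minimal faces and by generators of the lineality space; these can again be obtained as solutions of square subsystems (complete a maximal independent set of tight constraints with unit rows and zero right-hand sides), so the polynomial bounds survive, but this needs to be said rather than deferred to ``the region description''. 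With these two repairs your argument establishes the lemma, including the final remark, since for nonempty $\reg(r)$ its recession cone is exactly $\tau(r)$.
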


For example, let us consider the two-dimensional case from the picture on page
\pageref{figure2d}. Take $r$ such that $\reg(r)$ is an infinite wedge with a vertex
in some point $w$. Then
$\tau(r)$ is a congruent wedge with a vertex in point 0, bounded by semilines parallel
to some two vectors $v_1, v_2 \in [-Y..Y]^\Alphabet$. $T_r$ will be simply $\{v_1, v_2\}$,
and $W_r$ will be $\{w\}$. Our lemma says that $w$ is bounded polynomially in $B$ and $Y$;
since the picture is big enough to already show 
all the 32 wedge vertices, we can take $\regbaselimit$ to
be (half of) the edge of the square shown on the picture.
In the three-dimensional case the situation will be more
difficult: elements ot $T_r$ will be bounded by a polynomial in $Y$ (instead of $Y$
itself), $W_r$ will no longer have to be a singleton, and the cardinality of set $T_r$
might be relatively large -- for example, 
on the picture on page 
\pageref{figure3d} there are ray regions $\tau(r) \cap \Delta$ bounded by five black
lines, so we will need a $T_r$ with five elements to describe the respective $\tau(r)$.

Having introduced the regions, we can use the following lemma:

\begin{lemma}\label{mainbound}
Let $B, Y \in \bbN$. Then there is a $\mainboundlimit$, bounded polynomially in $B$ and $Y$,
such that the following holds:

Let $S = W + \osum\bbN \Cout$, where $\Cout \subseteq \zmint{Y}^\Alphabet$ and 
$W \subseteq [-B..B]^\Alphabet$, and $r \in \Regions_{B,Y}$. Then there is a linearly
independent subset $\Cout_0 \subseteq \Cout$ and $W_1 \subseteq \zmint{\mainboundlimit}^\Alphabet$
such that $S \cap \reg(r) = (W_1 + \osum\bbN \Cout_0) \cap \reg(r)$.
\end{lemma}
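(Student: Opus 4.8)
The goal is to show that inside a fixed weak region $\reg(r)$, the simple bundle $S = W + \osum\bbN\Cout$ agrees with a \emph{simple} bundle $W_1 + \osum\bbN\Cout_0$ whose base vectors are polynomially bounded. The starting point is Lemma~\ref{detrewritemuch}: for each $w \in S$, write $w = w_0 + \sum_{c \in \Cout} n_c c$ with $w_0 \in W$, and use Lemma~\ref{detrewritemuch} (with $P = \Cout$, $C = Y$) to replace it by $w = w_0 + (\text{bounded part}) + \sum_{c \in \Cout_0} n_c c$ for some linearly independent $\Cout_0 \subseteq \Cout$; the bounded part contributes something of size $O((\text{poly in } Y))$, so $w \in (W_1' + \osum\bbN \Cout_0)$ with $W_1'$ bounded polynomially in $B$ and $Y$. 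So for a \emph{single} $w$ we already have what we want --- the difficulty is that $\Cout_0$ and the bounded remainder depend on $w$, so naively ranging over all $w \in S$ only gives $S \subseteq \bigcup (W_1' + \osum\bbN\Cout_0)$ over the finitely many linearly independent $\Cout_0 \subseteq \Cout$, not a single simple bundle. The region $\reg(r)$ is what lets us glue these pieces together.

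\medskip
First I would fix, for the rest of the argument, one maximal linearly independent subset $\Cout_0 \subseteq \Cout$ with the property that $\osum\bbP \Cout_0 \supseteq \tau(r)$ --- that is, the chosen independent periods already span (as a cone) the ``direction space'' of the region $r$. Such a $\Cout_0$ exists: by Lemma~\ref{regionshape}, $\tau(r) = \osum\bbP T_r$ is a cone; and since $S$ is a bundle with period set $\Cout$, any element of $S \cap \reg(r)$ that is far from $W$ forces $\tau(r)$ to lie in the cone $\osum\bbP\Cout$ (otherwise there is a linear functional $\phi \in \Phi_Y$ separating them, contradicting that both $w$ and $w + $ (large multiple of a period) stay in $\reg(r)$, whose walls are exactly the zero sets of the $\phi(\cdot - l)$). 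Having pinned $\Cout_0$ down, I claim every $w \in S \cap \reg(r)$ lies in $W_1 + \osum\bbN\Cout_0$ for a uniformly bounded $W_1$. Take the decomposition $w = w_0 + \sum_{c\in\Cout} n_c c$. Apply Lemma~\ref{detrewritemuch} \emph{but targeting $\Cout_0$ specifically}: we can rewrite so that all periods outside $\Cout_0$ are used at most $H = \hadamard\alphs Y$ times, at the cost of possibly changing which independent set we end up with. The point of fixing $\Cout_0$ via the region is that the ``leftover'' direction --- the part of $w$ not accounted for by bounded multiples of the $\Cout_0$ periods --- is forced to point into $\tau(r) \subseteq \osum\bbP\Cout_0$, so it can be absorbed into $\osum\bbN\Cout_0$ with the remainder bounded polynomially.

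\medskip
Concretely, the second step is: given $w \in S \cap \reg(r)$, write $w - w_1 = \sum_{c\in\Cout_0} m_c c$ where $w_1$ collects $w_0$ plus all the bounded-multiple contributions of periods outside $\Cout_0$ (bounded by $B + |\Cout|\cdot H\cdot Y$, i.e.\ polynomially in $B,Y$ since $\alphs$ is fixed). Because $\Cout_0$ is linearly independent, the coefficients $m_c$ are uniquely determined and are rational with denominators bounded by $\hadamard\alphs Y$ (Cramer, Lemma~\ref{cramer}/\ref{syseq}); since $w - w_1 \in \osum\bbZ\Cout$ we in fact get $m_c \in \bbZ$ after possibly enlarging $w_1$ by a bounded correction (using that $(\det M)\bbZ^\Alphabet \subseteq M\bbZ^\Alphabet$ from Lemma~\ref{detgroup}, exactly the trick already used in the proof of Lemma~\ref{detrewrite}). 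Finally, one must check the $m_c$ are \emph{nonnegative} up to a bounded defect: this is where $w \in \reg(r)$ is used again --- if some $m_c$ were very negative, $w_1 + \sum m_c c$ would fall on the wrong side of one of the separating hyperplanes $\phi_S$, contradicting $w \in \reg(r)$ together with the fact that $c \in \Cout_0$ determines the sign of $\phi_S(c)$ for the $\phi_S$ whose zero set is spanned by $\Cout_0 \setminus\{c\}$. Absorbing the bounded defect into $W_1$ gives $w \in W_1 + \osum\bbN\Cout_0$ with $\nmax{W_1}$ polynomial in $B,Y$; conversely $(W_1 + \osum\bbN\Cout_0)\cap\reg(r) \subseteq S$ holds because $W_1$ is built from elements of $S$ and $\Cout_0\subseteq\Cout$, so $W_1 + \osum\bbN\Cout_0 \subseteq S$ outright, and intersecting with $\reg(r)$ preserves this. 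Setting $\mainboundlimit$ to be this polynomial bound finishes the proof.

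\medskip
The main obstacle I anticipate is the sign/nonnegativity bookkeeping in the last step: translating ``$w$ stays inside the weak region $\reg(r)$'' into ``the coefficients $m_c$ in the $\Cout_0$-decomposition are bounded below'' requires matching up each period $c \in \Cout_0$ with the functional $\phi_S \in \Phi_Y$ that vanishes on the other $\alphs - 1$ periods and certifies the side of $c$; one has to confirm that the relevant $(\alphs-1)$-subsets of $\Cout_0$ (padded if $|\Cout_0| < \alphs-1$) actually lie in $\calS_Y$, i.e.\ are linearly independent and inside $\zmint{Y}^\Alphabet$ --- which they are, being subsets of $\Cout$ --- and that walking a large negative multiple of $c$ genuinely crosses $\phi_S^{-1}(0)$ rather than staying parallel to it. This is the crux; everything else is an application of the linear-algebra lemmas (\ref{smalldet}--\ref{detrewritemuch}) already in hand.
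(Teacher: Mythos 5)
Your choice of $\Cout_0$ through the region (a linearly independent subset of $\Cout$ whose cone contains $\tau(r)$) is the same as the paper's, but the central step of your argument has a genuine gap. You invoke Lemma \ref{detrewritemuch} ``targeting $\Cout_0$ specifically'', i.e.\ so that all periods outside $\Cout_0$ carry bounded coefficients; the lemma does not give this --- it only produces \emph{some} linearly independent $P \subseteq \Cout$ on which the large coefficients concentrate, and $P$ need not be your pre-chosen $\Cout_0$ (you concede this yourself: ``at the cost of possibly changing which independent set we end up with''). Once $P \neq \Cout_0$, your next equation $w - w_1 = \sum_{c \in \Cout_0} m_c c$ is unjustified: $w - w_1$ is a nonnegative integer combination of $P$ and need not even lie in the linear span of $\Cout_0$; and in the only case where that equation is automatic, namely $P = \Cout_0$, none of your Cramer/determinant/sign corrections are needed in the first place. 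The ``absorb the bounded defect into $W_1$'' step also endangers the converse inclusion: for $(W_1 + \osum\bbN\Cout_0) \cap \reg(r) \subseteq S$ you need every base in $W_1$ to lie in $S$ (the paper simply takes $W_1 = S \cap \zmint{\mainboundlimit}^\Alphabet$, making that direction trivial), but a base obtained from an element of $S$ by lattice and sign corrections is not an element of $S$ in general, despite your claim that ``$W_1$ is built from elements of $S$''.

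What is missing is precisely the transfer mechanism the paper supplies: an induction on $\nsum{v}$ in which one peels off quanta $Dt_i$, where $t_i \in \tau(r) \cap \zmint{\taulimit}^\Alphabet$ comes from Lemma \ref{reducerlemma} and $D$ is the determinant attached to $\Cout_0$, so that $Dt_i \in \osum\bbN\Cout_0$ by Lemma \ref{detgroup}. If some $t_i$ lies in $\osum\bbP P$, then $v - Dt_i$ remains in $S$ (the large coefficients on $P$ absorb the integral correction, Lemmas \ref{detgroup} and \ref{syseq}) and the induction hypothesis applies, after which $Dt_i$ is added back inside $\osum\bbN\Cout_0$; if no $t_i$ lies in $\osum\bbP P$, a separating functional $\phi \in \Phi_Y$ with $\phi(t_i) < 0$ and $\phi(P) \geq 0$ yields a contradiction once $\nsum{v}$ exceeds a polynomial bound. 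Your proposal replaces this inductive exchange between the two independent sets $P$ and $\Cout_0$ by a one-shot absorption, and that is exactly the part that does not go through: the fact that $w \in \reg(r) = W_r + \tau(r)$ puts $w$ within a bounded distance of the real cone $\osum\bbP\Cout_0$, but that gives membership in a bounded translate of the real cone, not membership in $\bigl(S \cap \zmint{\mainboundlimit}^\Alphabet\bigr) + \osum\bbN\Cout_0$, which is what the two-sided equality requires. (A smaller gap: you assert, without proof, that a single independent $\Cout_0$ with $\tau(r) \subseteq \osum\bbP\Cout_0$ exists; this needs the paper's observation that $\tau(r)$ cannot be subdivided by the facet hyperplanes of the subcones $\osum\bbP F$, those hyperplanes being among the $\phi \in \Phi_Y$.)
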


This lemma says that, if we restrict ourselves to a single region, 
we can improve our presentation $S_u = \bigcup_{i \in I_u} (W_i + \osum\bbN \Cout_i)$
by making the sets $\Cout_i$ linearly independent, while still keeping the bound
on $W_i$ exponential. Thus, in each region, $S_u$ is a union
of $|I_u|$ simple bundles bounded by $(\mainboundlimit,Y)$.

Now, we use the following lemma:

\begin{lemma}\label{secbound}
Let $\Alphabet$ be a fixed alphabet, $\mainboundlimit, Y \in \bbN$, and $I$ be a set of indices.
Then there exists a number $\secv = O((\mainboundlimit+Y)^{|I|})$ such that the following holds:

Let $V_i = W_i + \osum\bbN \Cout_i$ be a simple bundle bounded by $(\mainboundlimit,Y)$, 
for each $i \in I$, and $r \in \Regions_{\mainboundlimit,Y}$. Then for each
$v \in \bbZ^\Alphabet \cap \Reg(r)$ there exists a
$v_0 \in [-\secv,\secv]^\Alphabet \cap \Reg(r)$ such that for each
$i \in I$ we have $v \in V_i$ iff $v_0 \in V_i$.
\end{lemma}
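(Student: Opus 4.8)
The plan is to fix a region $r \in \Regions_{\mainboundlimit,Y}$, a point $v \in \bbZ^\Alphabet \cap \Reg(r)$, and for each bundle $V_i = W_i + \osum\bbN \Cout_i$ to argue that we may shrink $v$ in two independent senses without changing which bundles it lies in, nor its strict region. The first sense is \emph{modular}: let $M_i$ be the matrix whose columns are the (linearly independent) periods $\Cout_i$ extended to a basis of $\bbR^\Alphabet$ by unit vectors, and set $\Delta$ to be the least common multiple of the $|\det M_i|$ over all $i \in I$. Since $|\det M_i| \le \hadamard\alphs{Y}$ by Lemma \ref{smalldet}, $\Delta$ is bounded by $\hadamard\alphs{Y}^{|I|}$, which is $O((\mainboundlimit+Y)^{|I|})$. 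By Lemma \ref{detgroup}, $(\det M_i)\bbZ^\Alphabet \subseteq M_i \bbZ^\Alphabet = \osum\bbZ{\Cout_i} + \osum\bbZ{\text{(unit extension)}}$; more carefully, I want: if $w \in \Delta \bbZ^\Alphabet$ lies in the \emph{linear span} of $\Cout_i$, then $w \in \osum\bbZ{\Cout_i}$. This is exactly the content used in the proof of Lemma \ref{detrewrite}: within the span of $\Cout_i$, the lattice $\osum\bbZ{\Cout_i}$ contains $(\det M_i)$ times the integer lattice of that span. Hence, provided $v - v_0 \in \Delta \bbZ^\Alphabet$ and $v - v_0$ stays inside the relevant spans, membership of $v$ in $W_i + \osum\bbZ{\Cout_i}$ is equivalent to membership of $v_0$.

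The second sense is \emph{sign-preservation}, which is where the regions do their work, and which is the main obstacle. Membership in $V_i = W_i + \osum\bbN \Cout_i$ (as opposed to $W_i + \osum\bbZ \Cout_i$) requires that the integer combination use \emph{non-negative} coefficients; subtracting a vector $w$ that lies in the span of $\Cout_i$ could flip some coefficient negative. The fix is to subtract $w$ lying in the \emph{cone} $\osum\bbP{\Cout_i}$ rather than merely its span, so that the coefficients only decrease toward zero but stay non-negative (after first reducing the "small" part of $v$ by Lemma \ref{detrewritemuch}-style arguments so that only periods used a large number of times get touched). But a single $w$ must work simultaneously for all $i \in I$, and the cones $\osum\bbP{\Cout_i}$ need not have common interior directions. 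This is precisely why $v$ is restricted to a strict region $\Reg(r)$: the region records, via the sign of $\phi_S(v - l)$ for every hyperplane-defining $\phi_S \in \Phi_Y$ and every base point $l \in L_{\mainboundlimit}$, on which side of every relevant facet hyperplane $v$ lies. In particular it determines, for each bundle, which facet inequalities of $W_i + \osum\bbP{\Cout_i}$ are satisfied strictly. So I would show: there is a direction $d \in \tau(r) \cap \bbZ^\Alphabet$ — i.e.\ a ray direction of the region, realized as a $\bbZ$-combination of $T_r$ from Lemma \ref{regionshape}, hence of magnitude polynomial in $Y$ — such that for every $i$, either $d \in \osum\bbP{\Cout_i}$ (so subtracting multiples of $d$ keeps us in $V_i$), or $v$ is so far from the boundary of the corresponding cone in the $d$-direction that it never matters, or $v \notin V_i$ and moving along $-d$ keeps it outside. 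The key geometric claim is that moving from $v$ in the direction $-d$ by large amounts stays within $\Reg(r)$ (this is basically the defining property of $\tau(r)$ as the "recession cone" of $\Reg(r)$, made precise by the decomposition $\reg(r) = W_r + \tau(r)$ of Lemma \ref{regionshape}) and eventually enters a box of polynomially-bounded-times-$\Delta$ size.

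Concretely, here is the order of steps I would carry out. First, normalize: using a Lemma \ref{detrewritemuch}-type reduction inside each $V_i$, replace the presentation so that it suffices to track, for each $i$ with $v \in V_i$, a representation $v = w_i + \sum_{c \in \Cout_i} n_c^{(i)} c$ with the $w_i$ bounded and a \emph{linearly independent} subset carrying the large coefficients (this is where Lemma \ref{mainbound}'s output format is convenient — the bundles are already simple). Second, form the region-decomposition: write $\reg(r) = W_r + \tau(r)$ and $\tau(r) = \osum\bbP T_r$ with $T_r \subseteq \zmint{\taulimit}^\Alphabet$ from Lemma \ref{regionshape}; pick a single $d$ in $\osum\bbN T_r \cap \bbZ^\Alphabet$ lying in the relative interior of $\tau(r)$, of norm polynomial in $Y$. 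Third, walk $v$ along $-\Delta d$: each step subtracts $\Delta d \in \Delta\bbZ^\Alphabet$; by step one this preserves all $\bbZ$-membership relations, and I argue it preserves all $\bbN$-membership relations once $v$ is outside a polynomially-bounded slab around each cone boundary (inside the slab, membership is decided by finitely many inequalities whose truth is already pinned down by $r$ and by the bounded data, so I handle the residual "small" directions by brute force — choosing $v_0$ to agree with $v$ modulo $\Delta$ on a bounded box in those directions). Fourth, collect the bounds: the box we land in has size $O((\mainboundlimit + Y)^{|I|})$ coming from $\Delta$, times polynomial-in-$(\mainboundlimit,Y)$ factors from $W_r$, $T_r$, and the slab widths, all absorbed into $\secv = O((\mainboundlimit+Y)^{|I|})$. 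Finally, check $v_0 \in \Reg(r)$: since every intermediate point lay in $\Reg(r)$ (the walk direction $-d$ is a recession direction of the strict region, and staying on the correct side of every $\phi_S$-hyperplane is exactly what $r$ encodes), the endpoint does too. The delicate point throughout — and the one I expect to demand the most care — is reconciling the simultaneous requirements over all $i \in I$: a single translation vector $\Delta d$ must simultaneously respect the lattice $\osum\bbZ \Cout_i$, the cone $\osum\bbP \Cout_i$ (for those $i$ with $v \in V_i$), and the region $\Reg(r)$; the region machinery is what guarantees such a $d$ exists, but verifying that the sign data $r$ records is \emph{enough} sign data to certify cone membership for every one of the $|I|$ bundles is the heart of the argument.
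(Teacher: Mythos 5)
Your overall skeleton is the same as the paper's: take $D^*$ (your $\Delta$) to be the least common multiple of the determinants of the $\Cout_i$ (bounded via Lemma \ref{smalldet}), repeatedly subtract multiples of $D^*$ times directions of $\tau(r)$ coming from the decomposition of Lemma \ref{regionshape} (this reduction step is exactly Lemma \ref{reducerlemma}), keep the point in the same strict region $\Reg(r)$, and use Lemma \ref{detgroup} to transfer integrality. Up to that point you are reconstructing the paper's argument.

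The gap is in the sign-preservation step, which you yourself flag as ``the heart of the argument'' and leave unverified. Your mechanism --- choose a single direction $d$ lying in $\osum{\bbP}{\Cout_i}$ simultaneously for every $i$ with $v \in V_i$, and patch the remaining cases with ``slabs'' and a brute-force choice of $v_0$ agreeing with $v$ modulo $\Delta$ coordinatewise --- does not work as stated: there is no reason a common such $d$ exists (the cones $\osum{\bbP}{\Cout_i}$ may be lower-dimensional or disjoint apart from $0$), and coordinatewise agreement modulo $\Delta$ does not preserve membership in the non-product lattices $\osum{\bbZ}{\Cout_i}$, nor does it control the cone constraints. The point you are missing is that no cone condition on the step direction is needed at all. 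Extend $\Cout_i$ by unit vectors to a basis of $\bbR^\Alphabet$ and fix a witness base $w_0 \in W_i$; by Cramer's rule, the sign of each coefficient of $x - w_0$ over this basis equals (up to the fixed factor $\sgn\det M_i$) the sign of $\phi_S(x-w_0)$, where $S \subseteq \zmint{Y}^\Alphabet$ consists of the other $\alphs-1$ basis vectors and $w_0 \in [-\mainboundlimit..\mainboundlimit]^\Alphabet \subseteq L_{\mainboundlimit}$. These are precisely the sign data recorded by $r$, so $v, v_0 \in \Reg(r)$ forces the coefficients of $v - w_0$ and $v_0 - w_0$ to have identical signs, including zeros (which also disposes of the worry that $v - v_0$ leaves the span of $\Cout_i$). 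Since $v - v_0 \in D^*\bbZ^\Alphabet$, Lemma \ref{detgroup} makes the coefficient differences integers, so nonnegative integer coefficients for $v$ give nonnegative integer coefficients for $v_0$ and conversely; this gives $v \in V_i$ iff $v_0 \in V_i$ for every $i$ with no case analysis. A smaller inaccuracy: subtracting a recession direction of $\reg(r)$ does not in general keep you in the set; as in Lemma \ref{reducerlemma}, you must subtract a $t_i \in T_r$ whose coefficient in the current representation $v = w + \sum_i \alpha_i t_i$ exceeds $D^*$, which is why the walk cannot use one fixed $d$ chosen in advance.
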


By applying the lemma above to $I = I_1 \cup I_2$ and taking $\grammarboundl = \secv$,
we obtain our hypothesis. \myqed
\end{proof}

\begin{proof}[Proof of Lemma \ref{regionshape}]



Without loss of generality we can assume that $\Reg(r) \geq 0$.
Indeed, for each $a \in \alphs$, The set $\Phi_Y$ includes the linear function $\phi_a$ given by
$\phi_a(x) = x_a$. If $r_{\phi,0} \geq 0$, then $x_a \geq 0$ for all $x \in \Reg(r)$;
similarly if $r_{\phi,0} \leq 0$. In the second case, we can take the mirror image,
which is also a region.

Take $\phi \in \Phi_Y$. We can write the definitions of $\reg(r)$ and $\tau(r)$
as follows: $\reg(r) = \bigcap_\phi H^\reg_\phi$, $\tau(r) = \bigcap_\phi H^\tau_\phi$, where
\[H^\reg_\phi = \bigcap_{l \in L_B} \set{x \in \bbR^\Sigma:\ \sgn(\phi(x-l))\ r_{\phi,l} \geq 0},\]
\[H^\tau_\phi = \bigcap_{l \in L_B} \set{x \in \bbR^\Sigma:\ \sgn(\phi(x))\ r_{\phi,l} \geq 0}.\]
The sets $H^*_\phi$ are half-spaces or closed hyperplanes; $H^\tau_\phi$ (or its boundary) goes through 0,
and $H^\reg_\phi$ is parallel.

Let $\sigma(v) = \sum_i v_i$, and $\Delta = \set{v \geq 0: \sigma(v) = 1}$. The set
$\tau(r)$ is described by $\tau(r) \cap \Delta$: for $x \in \Delta$ and $\alpha>0$,
we have $\alpha x \in \tau(r)$ iff $x \in \tau(r) \cap \Delta$. The set $\Delta$ is
a bounded $\alphs-1$-dimensional polytope (simplex), and $\tau(r) \cap \Delta$ is
its intersection with a finite number of closed halfspaces and hyperplanes
($H^\tau_\phi$). Thus, $\tau(r) \cap \Delta$ is also
a bounded polytope. Each vertex of this polytope is the point where $A-1$ hyperplanes
of form $\phi(v)=0$ (where $\phi \in \Phi$) cross $\Delta$. The coefficients
of $\phi$ are bounded
polynomially by $Y$, so each vertex can be written as $t'_i = t_i / \sigma(t_i)$,
where $\sigma(t_i)$ is bounded polynomially by some $\taulimit$ (by Lemma \ref{syseq}). 
Let
$T_r = \{t_1, t_2, \ldots, t_N\}$. Note that the number of vertices, $N$, is bounded
polynomially.

It is easy to check that the set $\reg(r)$ is \emph{upward closed}, i.e., if
$x \in \reg(r)$, $i \in [1..N]$, and $\alpha \geq 0$, then $x+\alpha t_i \in \reg(r)$.

Let $W_r$ be the set of points in $\reg(r)$ such that for all $\alpha > 0$ and $i$,
$W_r-\alpha t_i \notin \reg(r)$. We have that $\reg(r) = W_r + \osum\bbP T_r$.
Indeed, $\supseteq$ follows from upward closedness, and for $\subseteq$,
take $x_0 \in \reg(r)$. For $i$ from
1 to $N$, let $x_i = x_{i-1} - \alpha_i t_i$, where $\alpha_i \in \bbP$ is the greatest 
number such that $x_{i-1} - \alpha_i t_i$ is still in $\reg(r)$ -- such an $\alpha$ must
exist, since $\reg(r)$ is closed, $\reg(r) \geq 0$, and $x_{i-1} - \alpha t_i$ has
a negative coordinate for large enough values of $\alpha$. We obtain a point $x_N$,
which must be in $W_r$. Indeed, suppose that $x' = x_N-\alpha t_i \in \reg(r)$. In that
case, $x_{i-1} - (\alpha_i + \alpha) t_i = x' + \sum_{j={i+1}}^{N} \alpha_j t_j\in \reg(r)$
because $x' \in \reg(r)$ and $\reg(r)$ is upward closed. This contradicts the assumption
that the value of $\alpha_i$ was maximum.

Take $x \in W_r$. Let $H^b_\phi = H^\reg_\phi$ in case if $H^\reg_\phi$ is a hyperplane,
or the boundary hyperplane of $H^\reg_\phi$ if $H^\reg_\phi$ is a half-space.
Let $\Psi_x \subseteq \Phi_Y = \set{\phi: x \in H^b_\phi}$. The set
$H_x := \bigcap_{\phi \in \Psi_x} H^b_\phi$ is an intersection of hyperplanes, and thus
it is a subspace.

The set $H_x \cap W_r$ has to be bounded. Indeed, suppose that $H_x \cap W_r$
is not bounded. Take a sequence $(v_i)$ of elements of $H_x \cap W_r$ such that
$\lim_{i\ra\infty} \sigma(v_i) = \infty$. Let $w_i = v_i / \sigma(v_i)$; we have 
$w_i \in \Delta$. Since $\Delta$ is compact, $w_i$ has a cluster point, $w$. 
For $\phi \in \Phi_Y$ and $l \in L_B$, we have 
$\phi(w) = \phi(w_i-l)/\sigma(w_i) + \phi(l)/\sigma(w_i)$. The first component is
zero or has sign $r(\phi,l)$ (since $v_i \in \reg(r)$)
and the second component tends to 0, thus $\phi(w)$ is either 0, or it has
sign $r(\phi,l)$. Thus, $w \in \tau(r)$, and also $w \in H_x-H_x$ (the subspace 
parallel to $H_x$ going through 0). The set $H_x \cap W_r$ is a polytope which is
unbounded in the direction of $w$, but this is impossible, since $x$ and $x+\alpha w$
cannot be both in $W_r$ (from the definition of $W_r$). Hence, a contradiction.

Moreover, the set $H_x \cap W_r$ has to be bounded polynomially. Indeed, the vertices
of $H_x \cap W_r$ are points where some $\alphs$ hyperplanes of form $H^\reg_\phi$
cross. These hyperplanes have coefficients bounded polynomially, and from Lemma \ref{syseq}
we get that the coordinates of vertices of $H_x \cap W_r$ are bounded polynomially.

Thus, $W_r = \bigcup_{x \in W_r} (H_x \cap W_r)$ is bounded polynomially.\myqed
\end{proof}

\begin{proof}[Proof of Lemma \ref{mainbound}]
If $\Reg(r)$ is bounded, then it is bounded polynomially in $B$ and $Y$ (Lemma \ref{regionshape}),
and therefore we are done. Thus, we can assume that $\Reg(r)$ is unbounded.

Suppose that there is a $t \in \tau(r)$ such that $t \notin \osum\bbP\Cout$.
We will show that in this case $\Reg(r)$ and $S = W + \osum\bbP\Cout$ are disjoint, and therefore we are done.

Indeed, note that $\osum\bbP\Cout$ is a $d$-dimensional cone (for some $d$), whose faces are
hyperplanes going through 0 and parallel to sets of $(d-1)$ vectors in $[-Y,Y]^\Alphabet$.
The ray region $\tau(r)$ cannot be subdivided into two smaller regions by such a hyperplane,
therefore there is one of these hyperplanes, say $\phi$, such that $\phi(t)<0$,
$\phi(\tau(r)) \leq 0$, and $\phi(\Cout) \geq 0$. 
Take $x \in \Reg(r)$ such that $x = w + \sum \alpha_i \Cout_i$, where
$w \in [-B..B]^\Alphabet$. 
Since $\phi(t) < 0$, for each $l \in L_B$
we have $r_{\phi,l} < 0$, and thus, since
$x \in \Reg(r)$, we have
$\phi(x-l) \leq 0$.
Since $L_B = [-2B,2B]^\Alphabet$, there is a $l_0 \in L_B$ such that $\phi(l_0) < \phi(w)$.
Thus, $\phi(x) \leq \phi(l_0) < \phi(w) \leq \phi(w) + \sum \alpha_i \phi(\Cout_i)
= \phi(x)$, which is a contradiction.

Thus, we can assume that $\tau(r) \subseteq \osum\bbP\Cout$. Each point in 
$\osum\bbP\Cout$ can be written as $\sum \alpha_i \Cout_i$, where $\alpha_i$ is positive
only for a linearly independent set of vectors in $\Cout$ 
(otherwise, if $\alpha_i$ are positive in a linearly dependent subset of $\Cout$,
we can modify the values so that one of them is 0 -- just like in Lemma \ref{detrewritemuch}).
Therefore,
$\tau(r) \subseteq \bigcup_{F \in \calF} \osum\bbP F,$ where $\calF$ is the family
of all linearly independent subsets of $\Cout$. Again, each 
$\osum\bbP F$ is a $d$-dimensional cone, whose faces are
hyperplanes going through 0 and parallel to sets of $(d-1)$ vectors in $[-Y,Y]^\alphs$.
The ray region $\tau(r)$ cannot be subdivided into two smaller regions by such a hyperplane,
thus there is a $F\in\calF$ such that $\tau(r) \subseteq \osum\bbP F$. Let 
$\Cout_0$ be this $F$, and let $D$ be determinant of the matrix whose columns are
$\Cout_0$ (we add unit vectors if there are less than $\alphs$ vectors in $\Cout_0$).
From Lemma \ref{smalldet} we know that $D \leq \hadamard\alphs{Y}$, which is single
exponential.

Now, let $W_1 = S \cap \zmint{\mainboundlimit}^\Alphabet$; the sufficient value of $\mainboundlimit$ will be
apparent from the sequel of the proof.
By induction over $\nsum{v}$, we will prove that each $v \in S$ can be written as
$v = w_1 + \osum\bbN\Cout_0$, where $w_1 \in W_1$.

For $v \in S$ such that $\nsum{v} \leq \mainboundlimit$, the hypothesis obviously holds.

For $v \in S$ such that $\nsum{v} \geq \mainboundlimit$ we apply the following Lemma:

\begin{lemma}\label{reducerlemma}
Let $\Alphabet$ be a fixed alphabet, and $B, Y, D \in \bbN$. Then there exists a constant
$\reduceto$ polynomial in $B$, $Y$ and $D$, such that for each
$r \in \Regions_{B,Y}$ and each $v \in \reg(r)$, if $\nmax{v} \geq \reduceto$, then
$v = v_0 + Dt$, where $v_0 \in \reg(r)$ and $t \in \tau(r) \cap [-\taulimit..\taulimit]^\Alphabet$
(where $\taulimit$ is from Lemma \ref{regionshape}).
Moreover, $v_0 \in \Reg(r)$ iff $v \in \Reg(r)$.
\end{lemma}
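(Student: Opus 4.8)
The plan is to use the structural description of regions from Lemma \ref{regionshape}: $\reg(r) = W_r + \osum\bbP T_r$, where $W_r \subseteq [-\regbaselimit..\regbaselimit]^\Alphabet$ and $T_r \subseteq [-\taulimit..\taulimit]^\Alphabet$, both polynomially bounded. Given $v \in \reg(r)$ with $\nmax v$ large, I first write $v = w + \sum_i \beta_i t_i$ with $w \in W_r$ and $\beta_i \geq 0$. The point is that if $\nmax v$ exceeds a suitable polynomial threshold $\reduceto$ (chosen larger than $\regbaselimit$ plus a polynomial in $\taulimit$, $D$, and $\alphs$), then at least one coefficient $\beta_i$ must be large — specifically, $\beta_i \geq D$ for some $i$ — because $w$ and each $t_i$ are polynomially bounded and there are only polynomially many $t_i$'s (the number of vertices $N$ in Lemma \ref{regionshape} is polynomial). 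Then I set $v_0 = v - D t_i = w + (\beta_i - D) t_i + \sum_{j \neq i} \beta_j t_j$, which is still in $\reg(r)$ by upward closedness (the key property established in the proof of Lemma \ref{regionshape}), and $t := t_i \in \tau(r) \cap [-\taulimit..\taulimit]^\Alphabet$ since $\osum\bbP T_r = \tau(r)$ and each $t_i \in T_r$. This gives the decomposition $v = v_0 + Dt$.

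For the "moreover" clause, I need $v_0 \in \Reg(r)$ iff $v \in \Reg(r)$. Since $\Reg(r)$ and $\reg(r)$ differ only by boundary hyperplanes — recall $\reg(r)$ is the closure of $\Reg(r)$ and they are distinguished by whether the signs $\sgn(\phi(x-l))$ are allowed to be zero — the question is whether $v$ and $v_0$ lie on the same side of, or on, each hyperplane $\phi(\cdot - l) = 0$ for $\phi \in \Phi_Y$, $l \in L_B$. Since $v_0 = v - Dt$ with $t \in \tau(r)$, we have $\phi(v_0 - l) = \phi(v - l) - D\phi(t)$, and $\phi(t)$ has sign $r_{\phi,l}$ or is zero (that is what membership in $\tau(r)$ means). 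If $\phi(t) = 0$ then $\phi(v_0 - l) = \phi(v-l)$ and there is nothing to check. If $\phi(t)$ has the sign $r_{\phi,l} = \sgn(\phi(v-l))$ (when $v \in \Reg(r)$, this sign is nonzero), then subtracting $D\phi(t)$ could in principle flip the sign — this is the delicate point. The fix is to enlarge $\reduceto$ so that whenever $v \in \Reg(r)$ and $\nmax v \geq \reduceto$, the quantity $|\phi(v-l)|$ is large (at least $D\phi(t)$ cannot overshoot it past zero) — but more carefully, one argues by choosing $t$ among the $t_i$ appropriately, or by noting that we are subtracting a bounded multiple $D\cdot(\text{polynomially bounded } t)$ so that only a bounded ambiguity near the boundary hyperplanes arises, and a polynomially large $\nmax v$ keeps $v_0$ away from those boundaries it was away from.

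The main obstacle I expect is precisely this last sign-preservation argument: ensuring that translating $v$ by $-Dt$ does not move it onto, or across, a boundary hyperplane $\phi(\cdot-l)=0$ that it was strictly off of. The cleanest route is probably to pick $i$ not arbitrarily but so that $\beta_i$ is the \emph{largest} coefficient (hence $\geq \nmax v / (\regbaselimit + \text{poly})$, comfortably $\geq D$), and then observe that in every constraint $\sgn(\phi(x-l)) = r_{\phi,l}$ which $v$ satisfies strictly, either $\phi(t_i) = 0$ (harmless) or $\phi(t_i)$ has the \emph{same} sign as $\phi(v-l)$, in which case $\phi(v-l) = \phi(w-l) + \sum_j \beta_j \phi(t_j)$ is already a sum of a bounded term and nonnegative-times-$\phi(t_j)$ contributions of the correct sign, and it dominates $D\phi(t_i)$ once $\beta_i \geq D$ — since $\beta_i \phi(t_i)$ alone is a term of the same sign with $|\beta_i \phi(t_i)| \geq D|\phi(t_i)|$. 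Assembling these inequalities and reading off the resulting polynomial bound on $\reduceto$ is then routine.
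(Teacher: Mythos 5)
Your proposal is correct and follows essentially the same route as the paper: decompose $v = w + \sum_i \beta_i t_i$ via Lemma \ref{regionshape}, choose $\reduceto$ so that some coefficient $\beta_i$ is comfortably larger than $D$, and set $v_0 = v - D t_i$ with $t_i \in T_r \subseteq \tau(r) \cap [-\taulimit..\taulimit]^\Alphabet$. The only difference is in the last step and is cosmetic: where you preserve the strict signs by expanding $\phi(v_0 - l)$ into weakly correctly-signed contributions dominated by the $(\beta_i - D)\phi(t_i)$ term, the paper observes that $v_0$ is a proper convex combination of $v$ and $v_1 = v - \beta_i t_i$, both in $\reg(r)$, which gives the same conclusion.
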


Take $v \in S$. By applying Lemma \ref{reducerlemma} iteratively, we know that
$v = v_0 + D(t_1 + \ldots + t_K)$, where $t_i \in \tau(r) \cap [-\taulimit..\taulimit]^\Alphabet$, and
$\nmax{v_0} \leq \reduceto$.

On the other hand, we know that $v = w_0 + w_{12}$, where $w_0 \in W$ and
$w_{12} = \sum_{v \in \Cout} \beta_v v$, where $\beta_v \in \bbN$. 
From Lemma \ref{detrewritemuch} we can assume that $\beta_v$ exceed
$H = \hadamard\alphs{Y}$ only in a linearly independent $P \subseteq \Cout$.
Take a large enough $J_0 \geq H$, polynomial in B and Y (the required value will be
apparent below).
We can write $v = w_0 + w_1 + w_2$, where 
$w_1 \in \osum{[0..J_0]}\Cout$ and $w_2 = \sum_{v \in P} \beta_v v$, where
each $\beta_v \geq J_0$, and $P$ is linearly independent.

Suppose that, for some $i$, $t_i \in \osum\bbP P$. From Lemma \ref{detgroup} we obtain
that $Dt_i = \sum_{v\in P} \gamma_v v$, where $\gamma_v \in \bbZ$.
Moreover, from Lemma \ref{syseq}, $\gamma_v \leq J_0$, if $J_0$ is large enough. Therefore,
$w_2 = w'_2 + Dt_i$, where $w'_2 = \sum_{v\in P} (\beta_v-\gamma_v v)$.
Take $v' = v - Dt_i$; we have $v' = w_0 + w_1 + w'_2 \in W + \osum\bbN \Cout = S$. 
From the induction hypothesis, $v' \in W_1 + \osum\bbN \Cout_0$, and $Dt_i\in\osum\bbN\Cout_0$
from Lemma \ref{detgroup}. Since $v = v'+Dt_i$, we obtain that 
$v \in W_1 + \osum\bbN \Cout_0$.

Otherwise, $\set{t_i:i=1,\ldots,K}$ is disjoint from $\osum\bbP P$. Therefore,
$\osum\bbP{\set{t_i}}$ is disjoint from $\osum\bbP P$ (except 0). By the same arguments as in
the beginning of this proof, there is a hyperplane $\phi \in \Phi_Y$ such that
$\phi(t_i) < 0$ for each $i$, and $\phi(P) \geq 0$.

What is the value of $\phi(v)$? On one hand, we have 
$\phi(v) = \phi(v_0) + D\sum_i \phi(t_i) < \aphibound \reduceto - DK$. On the other hand,
we have $\phi(v) = \phi(w_0 + w_1 + w_2) = \phi(w_0) + \phi(w_1) + \phi(w_2) =
\phi(w_0) + \sum_{v \in \Cout} \alpha_v \phi(v) + \sum_{v \in P} \beta_v \phi(v)
\geq -B \aphibound - J_0 |\Cout| Y \aphibound$.
From applying the triangle inequality to $v=v_0+D\sum_i t_i$, we get that 
$\nsum{v} \leq \nsum{v_0} + DK\alphs\taulimit$, and hence, $K \geq (\mainboundlimit - \reduceto)/D\alphs\taulimit$.
Thus,
$\aphibound H \reduceto - D (\mainboundlimit - \reduceto)/D\alphs\taulimit \geq -B \aphibound
-J_0 |\Cout| Y \aphibound$. This is a
contradiction for a large enough (but still polynomial) $\mainboundlimit$.
\myqed\end{proof}

\begin{proof}[Proof of Lemma \ref{secbound}]
Let $D_i$ be the determinant of $\Cout_i$ (if $\Cout_i$ includes less than $\alphs$
vectors, add unit vectors as usual). Let $D^*$ be the least common multiple of
all $D_i$. We know that $D_i$ is bounded polynomially in $Y$, so $D^*$ is bounded
polynomially by $Y^{|I|}$.

By applying Lemma \ref{reducerlemma} (with $D=D^*$) we get constants $\taulimit$ and $\reduceto$.
We know that each $v \in \Reg(r)$ such that $\nsum{v} \geq \reduceto$ can be written as
$v_0 + D^* t$, where $t \in \tau(r)$. We repeat this construction (adding all the $t$'s
together), until we get $\nsum{v_0} < \reduceto$.

Now, we have to show that $v \in V_i$ iff $v_0 \in V_i$.

Suppose that $v \in V_i$. Thus, we have $v = w_0 + \sum_k \alpha_k P_k$, where
$\alpha_k \geq 0$, and $(P_k)$ are the members of $\Cout_k$. We can write similarly
$v_0 = w_0 + \sum_k \beta_k P_k$. Note that $v$ and $v_0$ are in the same
$(\mainboundlimit,Y)$-region $r$. In particular,  $v-w_0$ and $v_0-w_0$ are on the same side of
each hyperplane going through a member of $L_\mainboundlimit$ and parallel to
$\alphs-1$ members of $\zmint{Y}^\Alphabet$; therefore,
since $\alpha_k \geq 0$, we also have $\beta_k \geq 0$. Moreover, since $v-v_0 = D^* t$,
by Lemma \ref{detgroup}, $v-v_0 \in \osum\bbZ \Cout_i$. Therefore, $\alpha_k-\beta_k$
has to be an integer, and therefore $\beta_k \in \bbN$. Thus, $v_0 \in V_i$.

The proof in the other direction goes in the same way.\myqed
\end{proof}

\begin{proof}[Proof of Lemma \ref{reducerlemma}]
Take $v \in \reg(r)$. From Lemma \ref{regionshape} we know that $v = w + \sum \alpha_i t_i$,
where $\set{t_i: i \in \set{1,\ldots,N}} = T_i \subseteq \zmint\taulimit^\Alphabet$,
and $w \in \zmint\regbaselimit^\Alphabet$. By taking a large enough $\reduceto$, we can ensure that
if $\nsum{x} \geq \reduceto$, then some $\alpha_i$ is greater than $D$. We have that
$v_0 = v-D t_i$ is a proper convex combination of $v$ and $v_1 = v-\alpha_i t_i$, which
are both in $\reg(r)$, thus $v_0$ is also in $\reg(r)$. And if $v \in \Reg(r)$, then
so is $v_0$ ($\phi(v_0-l)$ from the definition of $\reg(r)$ is a proper convex
combination of $\phi(v-l)$ and $\phi(v_1-l)$, $\phi(v-l)$ has the correct sign,
and $\phi(v_1-l)$ either has the correct sign or is 0).
\myqed\end{proof}

\section{A hard grammar}\label{hardgrammar}
As mentioned in the introduction to Section \ref{seccommg}, for $\alphs>2$ 
$\Parikh(G)$ is not a union of polynomial number of simple bundles, which makes it
impossible to use the simple reasoning mentioned. In this
section, we will show an example of a grammar over an alphabet with three symbols
where this is the case. We believe this example is interesting in its own right.

The proof has two parts. In the first part, we create a sequence of grammars $(G_n)$
over four terminal symbols ($x$, $y$, and two temporary ones), such that, if we
ignore the two temporary terminal symbols, the convex hull of $\Parikh(G_n)$ is a
bounded polygon with $2^n$ vertices. It is then straightforward to create a grammar
over three terminal symbols ($x$, $y$, $z$) which has the required property.

\begin{theorem}\label{hardgrammartheorem}
There is a positive grammar $G_n$ of linear size with terminal symbols
$x$, $y$, $S_{n+1}$ and $A_n$, and non-terminal symbols $S_0$ and $X_n$
(where $S_0$ is initial),
such that the vertices of the convex hull of
$\Parikh(G_n)$ are exactly the points $y^i x^{i(i+1)/2} S_{n+1}^{2i+1} A_n^{2N-2i-1}$
for $i = 0, \ldots, N-1$, and $X_n$ generates only $x^N$, where $N=2^n$.
\end{theorem}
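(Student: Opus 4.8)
The plan is to construct $G_n$ explicitly, using $O(n)$ auxiliary symbols arranged as a binary counter, and then to verify the three assertions — the vertex set of the convex hull, the Parikh language of $X_n$, and linear size — largely by induction on $n$.

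\emph{The $x$-doubling gadgets.} Put $X_0 \to x$ and $X_k \to X_{k-1}X_{k-1}$ for $1 \le k \le n$. An immediate induction via Corollary~\ref{derive_runs_and_paths} shows that the only Parikh vector derivable from $X_k$ is $x^{2^k}$; in particular $X_n$ generates exactly $\{x^N\}$. These gadgets are also the engine for the quadratic $x$-count below: a multiset of $X_k$'s whose indices are the set bits of an integer $c$ unfolds to exactly $x^c$, so an integer stored in binary ``over $\{X_0,\dots,X_n\}$'' can be converted to unary $x$'s by $G_n$ itself, at cost $O(n)$.

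\emph{The main grammar.} Keep two binary counters in the non-terminals: $c$, the number of ``steps'' performed so far (initially $0$), and $e$, a countdown (initially $2N-1$, decremented by $2$ per step). From $S_0$ first emit one terminal $S_{n+1}$ and initialize both counters. A ``step'', executed in phases encoded by tags on the counter non-terminals, (i) duplicates the representation of $c$, sending one copy into the $X$-gadgets (yielding $x^c$) and restoring the other; (ii) emits $y$, $S_{n+1}^{2}$, and one further $x$ (an extra $X_0$); (iii) binary-increments $c$ and decrements $e$ by $2$ using carry non-terminals. At any point one may instead halt, releasing the current value of $e$ as that many copies of the terminal $A_n$. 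A derivation performing $i$ steps thus outputs $y^i$, $x^{1+2+\cdots+i}=x^{i(i+1)/2}$, $S_{n+1}^{1+2i}$, and $A_n^{2N-1-2i}$, and since $i$ ranges over $\{0,\dots,N-1\}$ all $N$ claimed points are produced. The apparatus uses $X_0,\dots,X_n$, the $O(n)$ counter and carry non-terminals, $S_0$, and a bounded number of bounded transitions per layer, each with nonnegative output, so $G_n$ is positive and $|G_n|=O(n)$.

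\emph{Correctness of the hull; the obstacle.} Completeness (each of the $N$ points is in $\Parikh(G_n)$) is witnessed by the intended derivations above. The substantive direction is soundness: every Parikh vector derivable from $S_0$ lies in the convex hull $P_n$ of the $N$ claimed points. Here I would argue on derivation trees (Corollary~\ref{derive_runs_and_paths}) and, crucially, use Theorem~\ref{subrun_lemma} to renormalize the order in which phases and steps interleave — since $G_n$ is commutative, an arbitrary derivation may braid together several half-finished steps, or let the two counters drift, and one must show that no such behaviour pushes a coordinate outside $P_n$. The invariant to push through the induction is the system of linear inequalities defining $P_n$: in the $(y,x)$-plane, $x$ bounded above by the piecewise-linear envelope through the points $(i,i(i+1)/2)$ and below by the chord from $(0,0)$ to $(N-1,N(N-1)/2)$, together with the coupling of the $S_{n+1}$- and $A_n$-coordinates to $y$ (the relations ``$\#S_{n+1}+\#A_n = 2N$'' and ``$\#S_{n+1}=2y+1$'', weakened by the slack that incomplete derivations allow). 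Establishing this invariant — equivalently, showing that over-generation is harmless — is the step I expect to be the main obstacle.

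\emph{Reading off the vertices.} Since $\Parikh(G_n)$ contains all $N$ points and is contained in their convex hull $P_n$, its convex hull equals $P_n$, so it suffices to check that the vertices of $P_n$ are exactly those $N$ points. In the $(y,x)$-plane the points $(i,i(i+1)/2)$ lie on the graph of the strictly convex function $t\mapsto (t^2+t)/2$, hence are in strictly convex position and each is an extreme point; the remaining coordinates $2i+1$ and $2N-2i-1$ are affine in $i$, and appending affine coordinates to a finite point configuration preserves extremality and creates no new vertices, so all $N$ distinct lifted points are vertices of $P_n$ and there are no others. Finally $P_n$ is bounded because every coordinate of every claimed point is at most $2N$, and $X_n$ generates only $x^N$ by the first step, completing the verification.
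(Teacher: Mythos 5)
Your construction is not the paper's (the paper builds $G_{n+1}$ recursively from $G_n$ by turning the terminals $A_n$ and $S_{n+1}$ into non-terminals with rules $\mvm{A_n}{}{A_{n+1}^2}$, $\mvm{A_n}{}{S_{n+2}^2X_ny}$ and $\mvm{S_{n+1}}{}{A_{n+1}S_{n+2}}$, so that each occurrence of $A_n$ makes an \emph{independent} binary choice and the number of hull vertices doubles per level), and the route you chose has a genuine gap exactly where you flag it. The soundness direction --- every vector of $\Parikh(G_n)$ lies in the hull $P_n$ --- is not an incidental ``obstacle'' to be smoothed over later; for a counter-based design it is almost certainly false. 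A commutative context-free grammar is equivalent to a communication-free Petri net: every production rewrites a \emph{single} non-terminal, so there is no mechanism to synchronize several counter-bit non-terminals. Your ``step'' requires global, atomic operations on many symbols at once: duplicating all bits of $c$ in lockstep while restoring one copy, performing a binary increment with carries (whose effect depends on the joint state of several bits), decrementing $e$ by $2$ with borrows, and keeping $c$, $e$, the $y$-emissions and the $S_{n+1}$-emissions in step. Phase tags on individual non-terminals do not help, because different bit non-terminals can sit in different phases simultaneously and each derives its subtree independently; by Theorem~\ref{subrun_lemma}/Corollary~\ref{derive_runs_and_paths} any interleaving of these local rewrites is a legal derivation. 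Concretely, nothing prevents a derivation that emits $y$ and $S_{n+1}^2$ several times while feeding $c$ into the $X$-gadget only once (or with a stale value), producing a point with $y$-coordinate $i$ but $x$-coordinate well below $i(i+1)/2$; since the points $(i,i(i+1)/2)$ lie on a strictly convex curve, such a point is strictly below the lower envelope of $P_n$ in the $(y,x)$-plane, so the hull characterization fails. The same leakage breaks the coupling $\#S_{n+1}=2\#y+1$ and $\#S_{n+1}+\#A_n=2N$ that your argument needs.

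The parts you do carry out are fine but are the easy ones: that $X_n$ generates only $x^N$, that the intended derivations witness all $N$ points, and that lifting a strictly convex planar configuration by affine coordinates keeps all $N$ points extreme. What is missing is precisely the reason the paper uses its recursive doubling construction: there, the quadratic correlation between the $y$- and $x$-counts is created not by sequencing a counter but by the fact that a consistent all-rule-(1) or all-rule-(2) choice at level $n$ is extremal, while any mixed choice over the $2N-2i-1$ independent copies of $A_n$ yields a point inside the hull spanned by the consistent choices --- so over-generation really is harmless there, by convexity, with no synchronization required. To salvage your write-up you would have to either exhibit a concrete positive grammar and prove the invariant against \emph{all} interleavings (which the single-non-terminal-left-hand-side restriction makes hopeless for binary counters), or abandon the counter simulation in favour of a choice-based recursion of the paper's kind.
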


\begin{proof}
The proof is by induction.
For $n=0$ we simply put $\mv{S_0}{}{S_1 A_0}$, $\mv{X_0}{}{x}$.

Take the grammar $G_n$; we will construct the grammar $G_{n+1}$.
The symbol $X_{n+1}$ is obtained by the following rule: $\mvm{X_{n+1}}{}{X_n^2}$.

We replace the terminal $A_n$ with a non-terminal with two transition rules:
\begin{itemize}
\item (1) $\mvm{A_n}{}{A_{n+1}^2}$
\item (2) $\mvm{A_n}{}{S_{n+2}^2X_ny}$
\end{itemize}

The new vertices of the convex hull will be obtained by choosing one of the
two rules for $A_n$, and applying it consistently. Thus, 
from each vertex $y^i x^{i(i+1)/2} S_{n+1}^{2i+1} A_n^{2N-2i-1}$ of the
convex hull of $\Parikh(G_n)$, we get the following vertices:

\begin{itemize}
\item (1) $y^i x^{i(i+1)/2} S_{n+1}^{2i+1} A_{n+1}^{4N-4i-2}$
\item (2) $y^{2N-1-i} x^{i(i+1)/2+N(2N-2i-1)} S_{n+1}^{2i+1} S_{n+2}^{4N-4i-2}$
\end{itemize}

Now, we replace the terminal $S_{n+1}$ with a non-terminal with a single
transition rule $S_{n+1} \ra A_{n+1} S_{n+2}$. Hence, we get the following vertices:

\begin{itemize}
\item (1) $y^i x^{i(i+1)/2} S_{n+2}^{2i+1} A_{n+1}^{4N-2i-1}$
\item (2) $y^{2N-1-i} x^{i(i+1)/2+N(2N-2i-1)} A_{n+1}^{2i+1} S_{n+2}^{4N-2i-1}$
\end{itemize}

By taking $j = 2N-1-i$, we can rewrite the second row of vertices as follows:

\begin{itemize}
\item (1) $y^i x^{i(i+1)/2} S_{n+2}^{2i+1} A_{n+1}^{4N-2i-1}$
\item (2) $y^j x^{j(j+1)/2} S_{n+2}^{2j+1} A_{n+1}^{4N-2j-1}$
\end{itemize}

Note that $i$ runs from $0$ to $N-1$, and $j$ runs from $N$ to $2N-1$. We can thus
combine our two rows of vertices into one, indexed by $i$ running from $0$ to $2N-1$,
and thus obtain the induction thesis.\myqed
\end{proof}

Now, let $G_n'$ be obtained from $G_n$ by removing all the terminals $A_{n+1}$ and $S_{n+2}$.
The convex hull of $\Parikh(G_n')$ is the polygon $P$ 
with vertices $x^i y^{i(i+1)/2}$, for $i \in [0..N-1]$.

Now, we create a new grammar $G_n^\circ$ over ${x,y,z}$ whose Parikh image will
be geometrically the infinite cone with base $P$.
This is done simply by adding a new initial symbol $S^\circ$ to $G_n'$,
with two
transitions $\mvm{S^\circ}{z}{S_0S^\circ}$ and $\mvm{S^\circ}{0}{0}$.
It can be easily seen that periods
are single exponential, and since $\Parikh(G^\circ_n)$ is an infinite cone with $2^n$ edges, 
we cannot write $\Parikh(G^\circ_n)$ as a union of a polynomial
number of simple bundles -- indeed, each simple bundle can cover only at most 3 edges.

\section{Complexity results}\label{seccomplexity}

In this section, we provide the tight complexity bounds for the pproblems
we are considering. Some of the results have been previously known (e.g.,
\cite{hyunh1, espfund}), we include all the proofs for completeness.

\begin{theorem}\label{memberregular}
The following membership problem in commutative regular grammars is in P for a fixed
$\Alphabet$:

Given: a regular grammar $G$, $v \in \bbZ^\Alphabet$

Decide whether $v \in \out(G)$
\end{theorem}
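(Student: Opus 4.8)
The plan is to reduce, via Corollary~\ref{regularbundle}, the membership test to solving polynomially many linear systems, the only real work being to show that the bundle decomposition of that corollary can be produced in polynomial time.

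First, by Corollary~\ref{regularbundle}, $\Parikh(G)=\bigcup_{j}(W_j+\osum\bbN{P_j})$ is a union of at most $\qstate^{\alphs^2}$ simple bundles bounded by $(\frunbound,\qstate)$; since $\alphs$ is fixed and $\frunbound$ is polynomial in $\qstate$, there are polynomially many bundles, each $P_j$ is a linearly independent set of at most $\alphs$ vectors of polynomial magnitude, and each $W_j\subseteq\zmint{\frunbound}^\Alphabet$ has only $(2\frunbound+1)^\alphs$ elements, all of polynomial magnitude. Given such a decomposition, $v\in\bbZ^\Alphabet$ is handled as follows: for each $j$ and each $w\in W_j$, test whether $v-w\in\osum\bbN{P_j}$. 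Since $P_j$ is linearly independent, extend it to a basis of $\bbR^\Alphabet$ by unit vectors, express $v-w$ in this basis by Cramer's rule (Lemma~\ref{cramer}), and accept $(j,w)$ iff the coordinates along the added unit vectors vanish and those along $P_j$ are non-negative integers; accept $v$ iff some $(j,w)$ is accepted. All arithmetic is on numbers of polynomial bit-length (the basis matrix has entries of polynomial magnitude, so its determinant is polynomial by Lemma~\ref{smalldet}), so this part is polynomial.

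The obstacle is computing the decomposition: although the bundles are polynomially bounded, the runs and simple cycles witnessing them (as in the proof of Theorem~\ref{decomposition}) form an exponentially large family and cannot be enumerated directly. I would instead, for each non-terminal $q$, compute a polynomial-size generating set $Z_q$ of the submonoid of $\bbZ^\Alphabet$ generated by the cycle outputs from $q$: take $Z_q=\set{\out(C):C\text{ a cycle from }q,\ |C|\le\qstate}$. Every simple cycle uses at most $\qstate$ transitions (Theorem~\ref{cyclerunlimit}, since $\gam\qstate=\qstate+1$), hence has output of magnitude at most $\qstate$, and every cycle decomposes into simple cycles with additive outputs, so $\osum\bbN{Z_q}$ is exactly this monoid; and $Z_q$ is obtained by a dynamic program over walk length $\ell\le\qstate$ recording, for each length, the set of pairs $(\text{endpoint},\text{output})$, all outputs staying in $\zmint\qstate^\Alphabet$. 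The bundles are then indexed by the linearly independent subsets $P=\set{p_1,\dots,p_m}\subseteq\bigcup_q Z_q$ with $m\le\alphs$ (polynomially many); for a fixed $P$ put $Q_i=\set{q:p_i\in Z_q}$, and observe that a run $R'$ from $\qini$ can be augmented by cycles realizing all of $P$ exactly when $\supp(R')$ meets every $Q_i$. The base set $W_P=\set{\out(R'):R'\text{ a run from }\qini,\ |R'|\le L',\ \supp(R')\cap Q_i\ne\emptyset\ \forall i}$ (for a suitable polynomial bound $L'$) is computed by a dynamic program on runs from $\qini$ whose state is the triple $(\text{current non-terminal},\ \text{output so far}\in\zmint{L'}^\Alphabet,\ T\subseteq\set{1,\dots,m})$, where $T$ records which $Q_i$ the partial support has met; since $T$ takes at most $2^\alphs$ values the state space is polynomial --- this constant-size bitmask is what avoids remembering $\supp(R')$ itself, one of exponentially many subsets of the $\qstate$ non-terminals. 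That $\Parikh(G)=\bigcup_P(W_P+\osum\bbN P)$ over these $P$ follows from Theorem~\ref{decomposition} and Lemma~\ref{detrewritemuch}: a run is a bounded skeleton run plus simple cycles from its support, and Lemma~\ref{detrewritemuch} lets one absorb the cycles used fewer than $\hadamard\alphs\qstate$ times back into a still polynomially bounded run, leaving a linearly independent period set $P\subseteq Z_{\supp(R')}$; conversely, adding cycles based at support vertices preserves the Euler and connectivity conditions, so it yields a genuine run with the intended output.

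I expect the hard part to be precisely this effectiveness argument: choosing the polynomially bounded generating data $(Z_q,P,W_P)$ so that membership becomes pure linear algebra over a linearly independent set, and organizing the dynamic program so that the only information it must keep about the support is the constant-size bitmask $T$. The remaining ingredients --- the determinant and Cramer bounds (Lemmas~\ref{smalldet}--\ref{syseq}), the rewriting of linearly dependent period sets (Lemma~\ref{detrewritemuch}), and the bundle decomposition itself (Corollary~\ref{regularbundle}) --- are already available, and the final assembly is routine.
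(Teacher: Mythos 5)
Your proposal is correct and follows essentially the same route as the paper: a dynamic program computing outputs of cycles of length at most $\qstate$ and of runs of length at most $\frunbound$ together with bounded support information, combined with Theorem~\ref{decomposition} and linear algebra over a linearly independent period set. The only difference is bookkeeping --- the paper parametrizes the run table by subsets $P\subseteq\States$ of size at most $\alphs$ with $P\subseteq\supp(R)$ and then enumerates linearly independent cycle outputs based in $P$, whereas you enumerate the period set first and track a constant-size bitmask of which witness sets $Q_i$ the support has met --- which is an inessential variation.
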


To prove this Theorem, we will need the following two lemmas:

\begin{lemma}\label{findruns}

For $P \subseteq \States$, let 
\[\RunsTable_n(P,q) = \set{\out(R): R\mbox{ is a run from }q\mbox{, }|R| \leq n\mbox{, and }P \subseteq \supp(R)}.\]
For a regular grammar $G$ over $\Alphabet$ of size $\alphs$ with $\qstate$ nonterminals, 
the sets $\RunsTable_n(P,q)$ for all $P$ of cardinality at most $k$ and
all $n \leq B$ can be calculated in time $O((2B)^\alphs |G| \qstate^{k+1})$ and space
$O((2B)^\alphs \qstate^{k+1})$.
\end{lemma}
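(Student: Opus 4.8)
The plan is to compute the tables by a dynamic program over the recursive structure of runs. First I would isolate what a run looks like in a regular grammar: by Corollary~\ref{derive_runs_and_paths} a run from $q$ is $\under(T)$ for a full derivation tree $T$ from $q$, and regularity makes $T$ a path --- the root carries a transition $d=(q,a,t)$ with $|t|\le 1$, and fullness ($F(T)=0$, so $F(v)=0$ at every vertex) forces either $t=0$ with no children, or $t=[q']$ with a single child whose subtree is a full derivation tree from $q'$. Hence a run $R$ from $q$ is either $\set{(q,a,0)}$ for a transition $(q,a,0)\in\Trans$, or $R=\set{(q,a,[q'])}+R'$ for a transition $(q,a,[q'])\in\Trans$ and a run $R'$ from $q'$; in the first case $\out(R)=a$, $|R|=1$, $\supp(R)=\set{q}$, and in the second $\out(R)=a+\out(R')$, $|R|=1+|R'|$, $\supp(R)=\set{q}\cup\supp(R')$. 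This decomposition is exact in both directions, which is what makes the recursion below correct.

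This gives the recursion I would use: $\RunsTable_0(P,q)=\emptyset$, and for $n\ge 1$
\[
\RunsTable_n(P,q)\;=\;E(P,q)\ \cup\ \bigcup_{(q,a,[q'])\in\Trans}\bigl(a+\RunsTable_{n-1}(P\setminus\set{q},\,q')\bigr),
\]
where $E(P,q)=\set{a:(q,a,0)\in\Trans}$ if $P\subseteq\set{q}$ and $E(P,q)=\emptyset$ otherwise. Two features matter: $P\setminus\set{q}$ has cardinality at most $|P|\le k$, so the recursion stays inside the family of tables we are asked to produce; and it refers only to layer $n-1$, so the layers can be filled in the order $n=0,1,2,\dots$ with no dependencies inside a layer.

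For the resources: since every transition has $\nsum{\out(d)}\le 1$, a run $R$ with $|R|\le n$ satisfies $\nsum{\out(R)}\le n\le B$, so $\RunsTable_n(P,q)\subseteq\zmint{B}^\Alphabet$ and is stored as a bitvector of length $(2B+1)^\alphs=O((2B)^\alphs)$, on which ``$a+S$'' and union cost $O((2B)^\alphs)$. There are $O(\qstate^{k})$ sets $P$ of size $\le k$ and $\qstate$ values of $q$, so one layer is $O(\qstate^{k+1})$ tables; keeping only the current (growing) layer of bitvectors plus each pair's last-step increment, and flushing each completed layer to a write-only output tape, gives work space $O(\qstate^{k+1}(2B)^\alphs)$. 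Evaluating the displayed formula naively for one layer costs, summed over $q$, $O(|\Trans|(2B)^\alphs)=O(|G|(2B)^\alphs)$ per $P$, i.e.\ a factor $B$ more than claimed over the $B$ layers. To remove that factor I would exploit that $\RunsTable_n(P,q)$ is nondecreasing in $n$: maintain one growing bitvector per pair $(P,q)$ together with the set of elements it gained last step, and in step $n$ push each such new element of $\RunsTable_{n-1}(P',q')$ only to the (at most two) pairs $(P,q)$ with $P\setminus\set{q}=P'$ that have a transition into $q'$, at $O(1)$ amortized cost per (transition, new element); inserting $E(P,q)$ once totals $O(\sum_{P,q}|E(P,q)|)=O(\qstate^{k}|G|)$. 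Since each $(P,q)$ ever acquires $O((2B)^\alphs)$ elements, the total is $O(|G|\qstate^{k}(2B)^\alphs)$, which is within $O((2B)^\alphs|G|\qstate^{k+1})$.

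The one genuinely delicate point, and where I would spend care, is exactly this amortized accounting together with the $P\mapsto P\setminus\set{q}$ bookkeeping: one must insert $E(P,q)$ exactly once, route each newly discovered element to precisely the tables whose recursion references it, and use the write-only-output convention so that the work space does not degrade to $O(B\qstate^{k+1}(2B)^\alphs)$. Everything else is a direct count over the recursion.
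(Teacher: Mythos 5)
Your proposal is correct and follows essentially the same route as the paper: a dynamic program over triples $(n,P,q)$ based on decomposing a run of a regular grammar as its first transition plus a run from the successor non-terminal, with each table stored as a subset of $\zmint{B}^\Alphabet$. Your version is in fact a bit more careful than the paper's terse ``dynamic programming'' argument --- your $P\setminus\set{q}$ bookkeeping states the recursion cleanly (the paper's displayed formula is garbled on this point), and your amortized propagation of newly added elements is a refinement aimed at literally meeting the stated time bound, which the naive layer-by-layer evaluation only achieves up to an extra polynomial factor; none of this changes the substance, since only polynomiality is used later.
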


\begin{proof}
Note that a run of length $n$ from $q$ consists of a transition from $q$ to $[q']$ and
a run from $q'$ of length $n-1$. Thus, we get the following recursive formula:

\begin{eqnarray*}
\RunsTable_1(P,q) &=& \set{\out(\trans): \trans \in \Trans, \source(\trans) = q, \target(\trans) = 0, \{q\} \subseteq P } \\
\RunsTable_n(P,q) &=& \RunsTable_{n-1}(P,q) \cup \RunsTable_n(P-\set{q},q) \cup \\&& \set{\out(\trans) + \RunsTable_{n-1}(P,r): \trans \in \Trans, \target(\trans) = [r], \source(\trans) = q}
\end{eqnarray*}
We know that $\RunsTable_n(P,q) \subseteq \zmint{n}^\Alphabet$. This allows us to calculate all the sets
$\RunsTable_n(p,q)$ using dynamic programming.\myqed
\end{proof}

\begin{lemma}\label{findpaths}
Let 
\[\PathTable_n(q_1,q_2) = \set{\out(R): R\mbox{ is a path from $q_1$ to $q_2$ and }|R| \leq n}.\]
For a regular grammar $G$ over $\Alphabet$ of size $\alphs$ with $\qstate$ nonterminals, the sets $\PathTable_n(q_1, q_2)$ for
all $n \leq B$ can be calculated in time $O((2B)^\alphs |G| \qstate^2)$
and space $O((2B)^\alphs \qstate^{k+2})$.
\end{lemma}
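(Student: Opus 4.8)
The plan is to imitate the proof of Lemma~\ref{findruns}: give a recursive decomposition of paths by their first transition, and fill in the tables $\PathTable_n(q_1,q_2)$ by dynamic programming over $n$. Since $G$ is regular, every transition $\trans$ has $|\target(\trans)|\le 1$, so $\target(\trans)$ is either $0$ or a single $[q']$; and by the Euler condition a transition with $\target(\trans)=0$ can only occur as the last transition of a path-shaped subrun, which would force the target multiset to be $0\neq[q_2]$, so such transitions never appear in a path at all. Using Theorem~\ref{subrun_lemma} to peel off the first transition of a nonempty path, a path of length at most $n$ from $q_1$ to $q_2$ is either a path of length at most $n-1$, or a transition $\trans$ with $\source(\trans)=q_1$ and $\target(\trans)=[q']$ followed by a path of length at most $n-1$ from $q'$ to $q_2$; conversely every such concatenation is again a path, directly from the Euler and connectivity conditions. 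With base case $\PathTable_0(q_1,q_2)=\set{0}$ if $q_1=q_2$ and $\emptyset$ otherwise (the empty multiset is a subrun from $[q_1]$ to $[q_2]$ exactly when $q_1=q_2$), this gives, for $n\ge 1$,
\[ \PathTable_n(q_1,q_2) = \PathTable_{n-1}(q_1,q_2)\ \cup\ \set{\out(\trans) + \PathTable_{n-1}(q',q_2):\ \trans\in\Trans,\ \source(\trans)=q_1,\ \target(\trans)=[q']}, \]
where the right-hand set denotes the union of the sets $\out(\trans)+\PathTable_{n-1}(q',q_2)$.

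For the complexity analysis, each $\PathTable_n(q_1,q_2)$ is a subset of $\zmint{n}^\Alphabet$, hence of $\zmint{B}^\Alphabet$, so it has at most $(2B+1)^\alphs=O((2B)^\alphs)$ elements and can be stored as a characteristic bit-array over $\zmint{B}^\Alphabet$, for which a translation by $\out(\trans)$ followed by a union costs $O((2B)^\alphs)$. Computing layer $n$ from layer $n-1$ costs, for each of the $\qstate$ values of $q_2$, one such union per transition $\trans$ (the transitions already range over all relevant pairs $(q_1,q')$) together with the inherited set $\PathTable_{n-1}(q_1,q_2)$; summing over the $\le B$ layers, the $\qstate$ choices of $q_2$, the $|G|$ transitions, and the $\qstate^2$ inherited sets yields the stated time $O((2B)^\alphs|G|\qstate^2)$, and storing all the tables yields the stated space bound.

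There is no genuine obstacle here: the argument is routine and entirely parallel to Lemma~\ref{findruns}, only simpler because there is no support parameter $P$ to track. The sole points that require a moment's care are the base case $n=0$ --- where the empty multiset of transitions is a legitimate path precisely when $q_1=q_2$ --- and the observation that transitions producing no non-terminal may be discarded when building paths, both of which are immediate consequences of the Euler condition.
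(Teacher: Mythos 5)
Your proposal is correct and follows essentially the same route as the paper: the identical dynamic-programming recursion on $\PathTable_n(q_1,q_2)$, peeling off the first transition, with the same base cases $\PathTable_0(q,q)=\set{0}$ and $\PathTable_0(q_1,q_2)=\emptyset$ for $q_1\neq q_2$ (the paper's displayed formula even contains a typo, $\PathTable_{n-1}(P,q)$, which your version states correctly as $\PathTable_{n-1}(q_1,q_2)$). The paper's proof is just the recursion with no further justification, so your extra remarks on why $\target(\trans)=0$ transitions cannot occur in a path and the storage/union cost accounting only add detail, not a different method.
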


\begin{proof}
The idea is essentially the same as in Lemma \ref{findruns}. We use the following recursive formula:

\begin{eqnarray*}
\PathTable_0(q_1,q_2) &=& \emptyset \mbox{ if }q_1 \neq q_2 \\
\PathTable_0(q, q) &=& \set{0} \\
\PathTable_n(q_1,q_2) &=& \PathTable_{n-1}(P,q) \cup \\&&
\set{\out(\trans) + \PathTable_{n-1}(r,q_2): \trans \in \Trans, \target(\trans) = [r], \source(\trans) = q_1}
\end{eqnarray*}
\myqed
\end{proof}

\begin{proof}[Proof of Theorem \ref{memberregular}]

The algorithm is as follows.
\begin{enumerate}
\item Let $\frunbound$ be the bound on $R_1$ from Theorem \ref{decomposition}.

\item Using Lemma \ref{findruns} we find the sets $\RunsTable_{\frunbound}(P,\qini)$ for all $P$ such that
$|P| \leq \alphs$.

\item Using Lemma \ref{findpaths} we find the sets $\PathTable_\qstate(q,q)$ for all $P$ such that
$|P| \leq \alphs$.

\item For each $P$ such that $|P| \leq \alphs$:

\ind For each linearly independent subset $Z$ of $\cup_{q \in P} \PathTable_\qstate(q,q)$:

\ind \ind For each element $w$ of $\RunsTable_{\frunbound}(P,\qini)$:

\ind \ind \ind if $v-w \in \osum\bbN Z$:

\ind \ind \ind \ind return YES.

\item Otherwise return NO.\myqed
\end{enumerate}\medskip

\noindent Since $Z$ is linearly independent, we can check whether $v-w \in \osum\bbN Z$ using Gaussian elimination.

It is straightforward from Theorem \ref{decomposition} that this algorithm will return
YES if $v \in \out(G)$. It is also straightforward that $v \in \out(G)$ if the algorithm
answers YES.

This algorithm runs in time $O\left((2\frunbound)^\alphs |G| \qstate^{\alphs+1} + \qstate^{\alphs+1}\right)$.
It uses space $O((2B)^\alphs \qstate^{k+2})$.
\myqed
\end{proof}

\begin{theorem}\label{membercf}
The membership problem in commutative grammars is in NP.
\end{theorem}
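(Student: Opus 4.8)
\noindent The plan is to exhibit a polynomial-size certificate of membership, modelled on the decomposition of Theorem~\ref{decomposition}, and to verify it in polynomial time. First I would run the normal-form construction of Section~\ref{secprelim}: it is polynomial and yields an equivalent grammar, so assume $G$ is in normal form with $\qstate$ non-terminals over an alphabet of size $\alphs$. By the definition of $\out(G)$, we have $v\in\out(G)$ iff some commutative run $R$ from $\qini$ has $\out(R)=v$. By Theorem~\ref{decomposition} every run $R$ from $\qini$ can be written as $R=R_1+\sum_{k=1}^{m}n_kC_k$, where $R_1$ is a run from $\qini$ with $|R_1|\le\frunbound$, each $C_k$ is a simple cycle from a non-terminal of $\supp(R_1)$, the outputs $\out(C_1),\dots,\out(C_m)$ are linearly independent (so $m\le\alphs$), and $n_k\in\bbN$. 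Conversely, for \emph{any} such data the multiset $R_1+\sum_kn_kC_k$ is again a run from $\qini$: the Euler condition is additive in the multiset of transitions, and attaching a cycle based at a non-terminal already in $\supp(R_1)$ preserves connectivity (the support only grows). Hence $v\in\out(G)$ is equivalent to the existence of data of this shape with $v=\out(R_1)+\sum_kn_k\out(C_k)$.

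The nondeterministic algorithm guesses the multiset $R_1\in\bbN^\Trans$ and the cycles $C_1,\dots,C_m\in\bbN^\Trans$ together with designated sources $q_k\in\States$. This guess is of polynomial bit-size: the numbers $\frunbound=\gam{\qstate^2}+(2\gam{\qstate})^{1+\alphs}\hadamard{\alphs}{\gam{\qstate}}$ and the cycle bound $\gam{\qstate}$ of Theorem~\ref{cyclerunlimit} are exponentially large but, since the exponents occurring in their definitions are polynomials in $\qstate$ and $\alphs$, have only polynomially many bits in the size of the input (note that the alphabet need not be fixed here). So $R_1$ --- a multiset of at most $\frunbound$ transitions --- and each $C_k$ --- a multiset of fewer than $\gam{\qstate}$ transitions --- fit in polynomially many bits. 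Once $R_1$ and the $C_k$ are fixed and $\out(C_1),\dots,\out(C_m)$ are linearly independent, the system $\out(R_1)+\sum_kn_k\out(C_k)=v$ has at most one solution, which the algorithm computes: extending $\{\out(C_k)\}$ to a basis by unit vectors and using Cramer's rule (Lemma~\ref{cramer}) with the Hadamard bound (Lemma~\ref{smalldet}), each $n_k$ is a ratio of integers of absolute value at most $\hadamard{\alphs}{\max(\gam{\qstate},\nmax{v-\out(R_1)})}$, which is again of polynomial bit-size because $v$ is given in binary and $\nmax{\out(R_1)}\le|R_1|\le\frunbound$.

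The algorithm then checks in polynomial time: that $|R_1|\le\frunbound$ and $R_1$ is a run from $\qini$ --- the Euler condition $\source(R_1)-[\qini]=\target(R_1)$ is a linear identity on binary vectors, and connectivity of $\supp(R_1)$ from $\qini$ is a reachability query in the finite digraph on $\States$ with an edge $p\to q$ whenever some transition occurring in $R_1$ has source $p$ and produces $q$; that $|C_k|<\gam{\qstate}$ and each $C_k$ is a cycle from $q_k\in\supp(R_1)$ (the same two checks, with $\source(C_k)=\target(C_k)$ for the Euler part); that $\out(C_1),\dots,\out(C_m)$ are linearly independent (Gaussian elimination); that the computed $n_k$ lie in $\bbN$; and that $v=\out(R_1)+\sum_kn_k\out(C_k)$. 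By the equivalence established in the first paragraph, the algorithm accepts exactly the positive instances, so the problem is in NP. I expect the only delicate part to be the bit-length bookkeeping of the middle paragraph --- that $\frunbound$, $\gam{\qstate}$, and the coefficients $n_k$ are all of polynomial size; everything else is routine, since deciding whether a given multiset of transitions is a run amounts to elementary arithmetic together with a graph-reachability check.
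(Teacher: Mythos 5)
Your proposal is correct and follows essentially the same route as the paper: the paper's proof is just the observation that, by Theorem~\ref{decomposition}, a witness consists of $R_1$ and linearly independent simple cycles $C_k$ (all exponentially bounded, hence of polynomial bit-size), which can be guessed and verified. Your write-up merely fills in the details the paper leaves implicit (soundness of re-attaching cycles, Cramer-rule recovery of the $n_k$, and the bit-length bookkeeping), all of which are correct.
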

This result was previously known \cite{espfund}.

\begin{proof}
From Theorem \ref{decomposition} we know that if $v \in \Parikh(G)$, then
$v = \Parikh(R_1) + \sum \Parikh(C_k) n_k$, where $R_1$ and $C_k$ are bounded
exponentially, and $C_k$ are linearly independent. We can simply guess $R_1$ and $C_k$.\myqed
\end{proof}

\begin{theorem}\label{inclcf}
The inclusion, universality, and disjointness problems in commutative grammars over
an alphabet of fixed size are in \PiP.
\end{theorem}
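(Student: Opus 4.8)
The plan is to reduce each of the three problems (inclusion, universality, disjointness) to the window theorem, Theorem~\ref{normalcommcorol}, and then observe that once the window is fixed, the remaining quantifier structure places the problem in \PiP. First I would recall that by Theorem~\ref{normalcommcorol} there is a bound $\grammarbound$, single exponential in $N$, such that $\Parikh(G_1) \subseteq \Parikh(G_2)$ holds iff it holds after intersecting both sides with the cube $[-\grammarbound..\grammarbound]^\Alphabet$; likewise for disjointness. Since $\grammarbound$ is single exponential, every vector $v$ in the window can be written in binary using polynomially many bits.

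For \textbf{inclusion}, the statement ``$\Parikh(G_1) \subseteq \Parikh(G_2)$'' is equivalent (by the window theorem) to: for all $v \in [-\grammarbound..\grammarbound]^\Alphabet$, if $v \in \Parikh(G_1)$ then $v \in \Parikh(G_2)$. The naive reading gives a $\forall v\,\exists(\text{witness for }G_1)\,\forall(\text{witness for }G_2)$ pattern, which is three alternations, not two. The fix is the standard one: rewrite the implication as $\forall v\,(v \notin \Parikh(G_1) \vee v \in \Parikh(G_2))$, and note that by Theorem~\ref{membercf} membership ``$v \in \Parikh(G)$'' is in NP, hence its complement is in coNP; so the matrix $v \notin \Parikh(G_1) \vee v \in \Parikh(G_2)$ is a disjunction of a coNP predicate and an NP predicate, which lies in the class of predicates decidable by a $\Sigma^P_1 = $ NP computation with one more universal guess folded in --- concretely it is in $\mathrm{D}^P$-style but for our purposes it suffices that it is in $\Pi^P_2$'s matrix, i.e.\ expressible as $\forall y\,\exists z\,\psi$. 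Prefixing the outer $\forall v$ keeps us at two alternations: the whole thing is $\forall v\,\forall y\,\exists z\,\psi$, which is \PiP. For \textbf{universality} we take $G_1$ to be a trivial grammar generating all of $\bbN^\Alphabet$ (or $\bbZ^\Alphabet$ in the generalized setting) and apply the inclusion result; alternatively one reads it directly as $\forall v \in [-\grammarbound..\grammarbound]^\Alphabet\; v \in \Parikh(G_2)$, which is $\forall v\,\exists(\text{witness})\,\ldots$ --- again \PiP\ after unfolding the NP membership test. For \textbf{disjointness} (here meaning $\Parikh(G_1)\cap\Parikh(G_2)=\emptyset$, the negation of the ``nonempty'' phrasing in the introduction), the window theorem says this is equivalent to $\forall v \in [-\grammarbound..\grammarbound]^\Alphabet\;(v \notin \Parikh(G_1) \vee v \notin \Parikh(G_2))$, whose matrix is a disjunction of two coNP predicates, hence coNP, hence in $\Pi^P_1 \subseteq \Pi^P_2$; prefixing $\forall v$ stays in \PiP.

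The key steps, in order: (1) invoke Theorem~\ref{normalcommcorol} to bound the window by $\grammarbound$, single exponential in $N$, so window vectors have polynomial binary size; (2) invoke Theorem~\ref{membercf} to get that membership is in NP; (3) for each of the three problems, write the property as $\forall v \in [-\grammarbound..\grammarbound]^\Alphabet$ of a matrix that unfolds to $\exists z\,\psi$ with $\psi$ polynomial-time checkable, being careful to absorb the coNP membership tests correctly so as not to incur a third alternation; (4) conclude membership in \PiP. The main obstacle is step~(3): one must be careful about the quantifier bookkeeping, since inclusion naively looks like a $\Sigma^P_3$/$\Pi^P_3$ statement. The resolution is that the ``outer'' quantifier over $v$ is universal and the membership predicate for $G_2$ (appearing positively) is $\exists$-NP while that for $G_1$ (appearing negatively) is $\forall$-coNP, so both align under a single $\forall v$ block, and the remaining existential guesses of witnesses coalesce into one existential block --- leaving exactly two alternations. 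Everything else (binary arithmetic on exponentially-bounded vectors, checking $v - w \in \osum\bbN Z$ by Gaussian elimination as in the proof of Theorem~\ref{memberregular}, verifying a guessed run/cycle decomposition as in Theorem~\ref{membercf}) is routine polynomial-time computation.
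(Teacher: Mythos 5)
Your proposal is correct and follows essentially the same route as the paper: invoke the window theorem (Theorem~\ref{normalcommcorol}) to restrict attention to vectors of single-exponential magnitude (hence polynomial binary size), then use the NP membership test (Theorem~\ref{membercf}) inside a universal quantification over the window, which yields a $\forall\exists$ predicate and hence membership in \PiP. The only difference is that you spell out the quantifier bookkeeping (rewriting the implication so the coNP test for $G_1$ and the NP test for $G_2$ collapse into two alternations) which the paper leaves implicit in the remark that the check ``can be done in \PiP''.
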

\begin{proof}
We will consider inclusion; the other problems can be solved in the same way.

Let $G_1$ and $G_2$ be two commutative grammars in normal form 
with at most $N$ non-terminals each, over the same fixed alphabet $\Alphabet$.

By Theorem \ref{normalcommcorol}, there exists a number $\grammarbound$ which is single
exponential in $N$, such that $\Parikh(G_1) \subseteq \Parikh(G_2)$ iff
$\Parikh(G_1) \cap [-\grammarbound..\grammarbound]^\Alphabet \subseteq \Parikh(G_2) \cap [-\grammarbound..\grammarbound]^\Alphabet$.
Thus, we need to check the membership in $\Parikh(G_2)$ for all the elements of
the set $[-\grammarbound..\grammarbound]^\Alphabet \cap \Parikh(G_1)$. This can be done in \PiP.
\myqed
\end{proof}

\begin{theorem}\label{inclreg}
The inclusion and universality problems in regular grammars over an alphabet of fixed size are in coNP.
\end{theorem}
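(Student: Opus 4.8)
The plan is to show that the complement of each problem is in NP, i.e. a witness to \emph{non-inclusion} (or non-universality) can be verified in polynomial time. The starting point is Corollary \ref{regularbundle}: for a regular grammar $G$ with $N$ non-terminals over a fixed alphabet of size $\alphs$, the set $\Parikh(G)$ is a union of at most $N^{\alphs^2}$ simple bundles, each bounded by $(\frunbound,\qstate)$, where $\frunbound$ is polynomial in $N$. So both $\Parikh(G_1)$ and $\Parikh(G_2)$ admit polynomial-size presentations $\bigcup_{i\in I_u} (W_i+\osum\bbN{\calZ_i})$ with $W_i\subseteq[-\frunboundx..\frunboundx]^\Alphabet$, $\calZ_i\subseteq[-Y..Y]^\Alphabet$, where $\frunboundx,Y$ are polynomial in $N$ and $|I_1\cup I_2|$ is polynomial in $N$. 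These presentations can be computed in polynomial time using the dynamic-programming machinery of Lemmas \ref{findruns} and \ref{findpaths} (to enumerate the bounded skeleton runs and the simple cycles from each support), exactly as in the algorithm of Theorem \ref{memberregular}.

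For inclusion $\Parikh(G_1)\subseteq\Parikh(G_2)$: by Lemma \ref{normalcommlemma} applied with these polynomial parameters, there is a bound $\grammarboundl = O((\frunboundx+Y)^{|I_1\cup I_2|})$ — which is \emph{single exponential} in $N$ — such that inclusion holds iff it holds after intersecting both sides with $[-\grammarboundl..\grammarboundl]^\Alphabet$. Hence non-inclusion is witnessed by a single $v\in\Parikh(G_1)$, of coordinates bounded by $\grammarboundl$, with $v\notin\Parikh(G_2)$. Such a $v$ has polynomially many bits, so a nondeterministic machine can guess it. To verify $v\in\Parikh(G_1)$ it guesses one bundle index $i\in I_1$ and, since that bundle $W_i+\osum\bbN\calZ_i$ is \emph{simple} (the periods $\calZ_i$ are linearly independent), it checks $v\in W_i+\osum\bbN\calZ_i$ by guessing the base $w\in W_i$ and solving the linear system $v-w=\calZ_i x$ for $x$ over the nonnegative integers via Gaussian elimination — the coefficients are then forced uniquely and only need a nonnegativity/integrality check. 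To verify $v\notin\Parikh(G_2)$ the machine uses that membership in a commutative regular language is in P (Theorem \ref{memberregular}) and simply runs that polynomial-time algorithm. Altogether this is an NP procedure for non-inclusion, so inclusion is in coNP. Universality is the special case $\Parikh(G_1)=\bbN^\Alphabet$ (or $\bbZ^\Alphabet$), handled the same way: a witness is a single bounded vector $v$ with $v\notin\Parikh(G)$, verified in polynomial time by Theorem \ref{memberregular}; no guessing about $G_1$ is even needed.

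The main obstacle, and the step requiring care, is the bookkeeping that the parameters fed into Lemma \ref{normalcommlemma} are genuinely polynomial in $N$ for \emph{regular} grammars — this is precisely where regularity is used, since for regular $G$ the quantities $\gam\qstate$, $\gam{\qstate^2}$ and hence $\frunbound$ collapse from exponential to polynomial (Lemma \ref{gammalemma} gives $\gam M=M+1$ in the regular case), so that $\grammarboundl$ is single exponential and its bit-length is polynomial, keeping the guessed witness of polynomial size. A minor additional point is to confirm that the polynomially many bundles can indeed be produced in deterministic polynomial time rather than merely guessed, but this follows from the constructions already used in the proof of Theorem \ref{memberregular}; in fact, for the coNP upper bound one does not even need to construct the bundles for $G_2$ explicitly, since membership in $\Parikh(G_2)$ is tested directly by the P-algorithm.
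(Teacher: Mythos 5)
Your proposal is correct and follows essentially the same route as the paper: a single-exponential window bound (you get it from Corollary \ref{regularbundle} plus Lemma \ref{normalcommlemma}, the paper just cites Theorem \ref{normalcommcorol}, whose proof is exactly that combination) makes the counterexample $v$ polynomially sized, and the polynomial-time membership test of Theorem \ref{memberregular} verifies $v\in\Parikh(G_1)$ and $v\notin\Parikh(G_2)$, giving coNP. The extra bookkeeping about explicitly constructing the bundles and guessing bases is unnecessary but harmless, as you note yourself.
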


\begin{proof}
The proof is just like for Theorem \ref{inclcf}. Again, we need to check the membership
in $\Parikh(G_2)$ for all the elements $v$ of
$[-\grammarbound..\grammarbound]^\Alphabet \cap \Parikh(G_1)$. However, this time
checking membership of $v$ can be done in P, so the whole algorithm works in coNP.
\end{proof}

\begin{theorem}\label{incpicomplete}
The problem of inclusion of commutative grammars over $\Alphabet=\set{a}$ is \PiP hard.
\end{theorem}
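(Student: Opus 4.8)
The plan is to reduce from the \bfPiP-complete problem $\Pi_2$-\textsc{Qbf}: given a quantified Boolean formula $\forall\vec x\,\exists\vec y\;\phi(\vec x,\vec y)$ with $\phi$ in $3$-CNF, decide whether it is true. Truth assignments to the universally quantified variables $x_1,\dots,x_n$ are encoded by residues: fix the first $n$ primes $\pi_1<\dots<\pi_n$ (all bounded by $O(n\log n)$), declare that a number $m$ encodes the assignment $\tau_m$ with $\tau_m(x_i)=\mathtt{true}$ iff $\pi_i\mid m$, and recall that by the Chinese Remainder Theorem $m\mapsto\tau_m$ is onto $\{\mathtt{true},\mathtt{false}\}^n$. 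Put $P=\prod_{i\le n}\pi_i$ and let $e_i$ be the CRT basis vector ($e_i\equiv1\pmod{\pi_i}$, $e_i\equiv0\pmod{\pi_{i'}}$ for $i'\neq i$); both $a^P$ and each $a^{e_i}$ have commutative grammars over $\Alphabet=\{a\}$ of size polynomial in $n$ (build $\{\pi\}$ directly for each small prime, multiply, and use iterated doubling to reach the exponentially large $e_i$). For an assignment $\tau$ let $c_\tau\in[0,P)$ be the corresponding CRT value.

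Next I would build two unary commutative grammars $G_1,G_2$ with the property that $\Parikh(G_2)\subseteq\Parikh(G_1)$ holds by construction and $\Parikh(G_1)\subseteq\Parikh(G_2)$ holds iff the \textsc{Qbf} instance is true. The grammar $G_1$ nondeterministically emits, for each $i$, either $a^{e_i}$ or nothing, and then appends $(a^{P})^{*}$; so $\Parikh(G_1)=\bigcup_{\tau}\bigl(c_\tau+P\bbN\bigr)$ (one arithmetic progression per assignment, pairwise disjoint since distinct $\tau$ give distinct residues $c_\tau$ mod $P$). The grammar $G_2$ is obtained from a ``verifier'' grammar by the same $\tau$-choice and $(a^P)^{*}$-tail, and is designed so that $\Parikh(G_2)=\bigcup_{\tau:\,\exists\vec y\,\phi(\tau,\vec y)}\bigl(c_\tau+P\bbN\bigr)$. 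Then $\Parikh(G_1)\subseteq\Parikh(G_2)$ iff every assignment $\tau$ extends to a model of $\phi$, i.e.\ iff $\forall\vec x\,\exists\vec y\,\phi$ is true; this gives \PiP-hardness of inclusion over $\{a\}$, and together with Theorem~\ref{inclcf} it is then in fact \PiP-complete.

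The heart of the construction — and the step I expect to be the main obstacle — is the verifier inside $G_2$: its language must recognize the \emph{NP} predicate ``$\exists\vec y$ such that the $3$-CNF $\phi(\tau,\vec y)$ holds'', and commutative grammars are communication-free (each transition reads a single non-terminal), so they are \emph{not} closed under intersection and cannot simply evaluate a conjunction of clauses. The idea is to let a single commutative run of $G_2$ \emph{be} such an NP-witness: the run commits to $\tau$ by its emissions of the $a^{e_i}$, nondeterministically guesses a candidate assignment $\rho$ to $\vec y$ and, clause by clause, guesses a literal of each $C_j$ claimed to be satisfied; the crucial device is that these guesses produce and must later discharge ``decision tokens'' of matching polarity recording $\tau(x_i)$ and $\rho(y_k)$, so that a run can be completed to a valid commutative run (all non-terminals balanced, Euler condition satisfied) exactly when some $\rho$ makes every clause satisfiable, whereas an inconsistent or dishonest set of guesses gets stuck — it either forces mutually contradictory residues of $m$ modulo some $\pi_i$ (so no output number exists) or it leaves undischargeable tokens. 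One has to implement the clause gadgets carefully so that ``consume the right token'' is realized despite single-input transitions (e.g.\ by threading control through the decision tokens rather than through a separate control non-terminal), verify that the whole grammar has polynomial size and exactly the claimed Parikh image, and check that $G_1$'s emissions are a sub-multiset of those available to $G_2$ so that $\Parikh(G_2)\subseteq\Parikh(G_1)$ needs no separate argument; these are the technical but routine parts.
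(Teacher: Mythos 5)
There is a genuine gap, and it sits exactly where you flag it: the ``verifier'' inside $G_2$. Commutative context-free grammars are communication-free --- every transition consumes a \emph{single} non-terminal --- so once a run produces a ``decision token'', that token is simply expanded by whichever of its own rules applies, independently of everything else in the run. There is no mechanism by which a clause gadget can be forced to ``consume the right token'': no transition can test for, or be blocked by, the presence or absence of non-terminals produced in other branches, and the Euler/connectivity conditions of Section 2 give no such synchronization (that would require multi-input transitions, i.e.\ general Petri nets rather than BPP). Hence a run with mutually inconsistent guesses does not ``get stuck''; it completes just as happily, and $\Parikh(G_2)$ strictly overapproximates the intended set $\bigcup_{\tau:\exists\vec y\,\phi(\tau,\vec y)}(c_\tau+P\bbN)$, which breaks the reduction. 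A secondary (fixable) inaccuracy: your $G_1$ generates $s_\tau+P\bbN$ where $s_\tau$ is the actual sum of the chosen $a^{e_i}$, not $c_\tau+P\bbN$; the finitely many missing elements per progression have to be accounted for when you claim either inclusion.

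For comparison, the paper's reduction avoids any token synchronization by making all consistency either local or numerical. Each variable $x_i$ (resp.\ $y_i$) has one dedicated non-terminal $X_i$ (resp.\ $Y_i$) with exactly two rules, so the truth value is fixed by a single rule choice at a single vertex of the derivation tree --- no cross-branch agreement is ever needed. Clause verification is then done positionally in the unary output: the $x$-assignment lives in the binary digits $2^0,\dots,2^{k-1}$ (via the $A_i$), while each clause $C_j$ occupies a separate base-$4$ digit $(a^{2^k})^{4^j}$; since a clause has at most three literals it can be emitted at most three times, so no carries occur, and matching the target word of $S_1$ (each clause digit exactly $1$, emission being optional via $C_j^?$) forces that every clause is satisfied by the chosen assignment. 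If you want to salvage your CRT encoding you would have to add an analogous exact numerical channel for the clauses (your $(a^P)^*$ tail currently erases any such information), at which point you have essentially rebuilt the paper's digit scheme.
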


This result was previously known \cite{hyunh1}.

\begin{proof}
We will reduce the following problem ($\mbox{\bf 3-CNF-QSAT}_2$). Let the clauses $C_j$
$(0\leq j<m)$ be
disjunctions of at most three literals of form $x_i$, $\neg x_i$, $\neg y_i=$ or $\neg y_i$.
Does the formula 
\[
\forall x_0 \forall x_1 \ldots \forall x_{k-1} 
\exists y_0 \exists y_1 \ldots \exists y_{l-1}
\bigwedge_{j<m} C_j
\] hold? Quantifiers run over two possible values of each variable (either $x_j$ or $\neg x_j$ is true).

The symbol $A_i$ $(i<k)$ generates $a^{2^i}$, for $i<k$. This can be realized with the
following rules: $A_0 \ra a$, $A_i \ra A_{i-1}A_{i-1}$ for $i>0$.

The symbol $A_i^?$ $(i<k)$ generates $a^{2^i}$ or nothing. This can be realized with the following
rules: $A_i^? \ra 0$, $A_i^? \ra A_i$.

The symbol $C_j$ $(j<m)$ generates $(a^{2^k})^{4^j}$. This can be realized with the following
rules: $C_j \ra A_{k-1}A_{k-1}$, $C_j \ra C_{j-1}C_{j-1}C_{j-1}C_{j-1}$ for $j>0$.

The symbol $C_j^?$ $(j<m)$ generates $(a^{2^k})^{4^j}$ or nothing. This can be realized with
the following rules: $C_j^? \ra 0$, $C_j^? \ra C_j$.

The symbol $X_i$ $(i<k)$ has the following two rules: 
$X_i \ra A_i \prod_{j<m}(C_j^?: x_i \in C_j) |\prod_{j<m}(C_j^?: \neg x_i \in C_j)$.

The symbol $Y_i$ $(i<l)$ has the following rules: 
$Y_i \ra \prod_{j<m}(C_j^?: y_i \in C_j) |\prod_{j<m}(C_j^?: \neg y_i \in C_j)$.

The symbol $S_1$ has the following rules: $S_1 \ra \prod_{i < k} A_i^? \prod_{j<m} C_j$.

The symbol $S_2$ has the following rules: $S_2 \ra \prod_{i < k} X_i \prod_{i<l} Y_i$.

We ask whether the language generated by $S_1$ is a subset of the language generated
by $S_2$.

Note that each $A_i$ is generated at most once, and each $C_j$ is generated at most three
times. The definitions of these symbols ensure that we can treat them as independent:
for each combinations of $A_i$ and $C_j$ in $S_1$, we need to find the same combination
in $S_2$.

The language generated by $S_1$ has $2^k$ elements (for each $A_i$, we can either add
it or not). These correspond to possible valuations of variables of $x_i$. To match
the given element of $S_1$ in $S_2$, we need to make the same choices in $X_i$. We also
need to generate $C_0 C_1 C_2 \ldots C_{j-1}$. This means that we have to make choices
in $Y_i$ which generate all the clauses which were not covered by our choices in $X_i$.
Therefore, $S_1 \subseteq S_2$ iff the formula is true.
\end{proof}

\begin{theorem}
The problem of universality (is $\Psi(G) = \bbZ^\Alphabet$?) of commutative grammars over $\Alphabet=\set{a}$ is \PiP hard.
\end{theorem}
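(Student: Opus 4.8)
The plan is to reduce from $\mbox{\bf 3-CNF-QSAT}_2$ by reusing — and lightly modifying — the gadget grammar built in the proof of Theorem~\ref{incpicomplete}. Recall that from a quantified formula that proof produces grammars with initial non-terminals $S_1$ and $S_2$ with $\Psi(S_1) \subseteq \Psi(S_2)$ iff the formula holds. The extra observation I would exploit is that $\Psi(S_1)$ is not an arbitrary semilinear set: since $S_1 \ra \prod_{i<k} A_i^? \prod_{j<m} C_j$, each $C_j$ deterministically generates $a^{2^k 4^j}$, and each $A_i^?$ generates $a^{2^i}$ or nothing, so $\Psi(S_1)$ is exactly the contiguous interval $\zint{M}{M+2^k-1}$ of $\bbZ$, where $M = \sum_{j<m} 2^k 4^j$; both $M$ and $M+2^k$ have polynomial-length binary representations.

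First I would build a grammar $G$ with $\Psi(G) = \Psi(S_2) \cup \left(\bbZ \setminus \Psi(S_1)\right)$; then $\Psi(G) = \bbZ$ iff $\Psi(S_1) \subseteq \Psi(S_2)$ iff the formula holds, and the construction is clearly polynomial-time. Concretely, $G$ keeps every non-terminal and rule reachable from $S_2$, and adds a fresh initial non-terminal $S$ with three rules: $S \ra S_2$, $S \ra T_{\mathrm{hi}}$, $S \ra T_{\mathrm{lo}}$. Here $T_{\mathrm{hi}}$ should generate $\set{z \in \bbZ : z \geq M+2^k}$: produce $a^{M+2^k}$ by repeated doubling ($B_0 \ra a$, $B_i \ra B_{i-1}B_{i-1}$, then a product of the $B_i$ selected by the bits of $M+2^k$) and append a non-terminal $X$ with $X \ra aX \mid 0$. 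And $T_{\mathrm{lo}}$ should generate $\set{z \in \bbZ : z \leq M-1}$, which is where I would use the negative productions allowed by our framework: produce $a^{M-1}$ by doubling and append a non-terminal $Y$ with $Y \ra a^{-1}Y \mid 0$, so $Y$ generates $\set{-t : t \geq 0}$. Then $\Psi(G) = \Psi(S_2) \cup \set{z \in \bbZ : z \leq M-1} \cup \set{z \in \bbZ : z \geq M+2^k} = \Psi(S_2) \cup \left(\bbZ \setminus \zint{M}{M+2^k-1}\right)$, as wanted.

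Then I would check correctness directly from Theorem~\ref{incpicomplete}: if the formula holds, $\zint{M}{M+2^k-1} = \Psi(S_1) \subseteq \Psi(S_2)$, so $\Psi(G) = \bbZ$; if it fails, some $v \in \zint{M}{M+2^k-1}$ is missed by $\Psi(S_2)$, and this $v$ belongs to neither half-line, so $v \notin \Psi(G)$. Since $\mbox{\bf 3-CNF-QSAT}_2$ is \PiP-complete, this gives \PiP-hardness of unary universality.

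The one point that needs care — and the reason one cannot simply quote Theorem~\ref{incpicomplete} and ``complement $\Psi(S_1)$'' in general — is that the complement of a commutative language need not be generated by any polynomial-size grammar. The reduction goes through only because the particular $\Psi(S_1)$ produced in Theorem~\ref{incpicomplete} happens to be a single interval with binary-computable endpoints, whose complement is a union of two half-lines, each compactly generable; the left half-line is the only genuinely new ingredient, and it is exactly where the negative terminal productions are used.
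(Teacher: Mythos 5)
Your proof is correct, and at the top level it follows the same route as the paper: reuse the $\mbox{\bf 3-CNF-QSAT}_2$ reduction of Theorem~\ref{incpicomplete} verbatim, add a fresh start symbol whose language is the union of $\Psi(S_2)$ with a grammar-generated complement of $\Psi(S_1)$, and observe that universality of this union is equivalent to $\Psi(S_1)\subseteq\Psi(S_2)$. Where you differ is in how the complement of $\Psi(S_1)$ is realized. You exploit the fact that $\Psi(S_1)$ is the single interval $\zint{M}{M+2^k-1}$ with $M=\sum_{j<m}2^k4^j$, and you generate its complement \emph{exactly}, as the two half-lines $\set{z\le M-1}$ and $\set{z\ge M+2^k}$, using binary doubling of the endpoints together with $X\ra aX\mid 0$ and $Y\ra a^{-1}Y\mid 0$ (the negative production being the same trick as the paper's $Z^-$). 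The paper instead keeps the digit structure of the gadget: $S_3\ra Z^+\mid Z^-\mid \prod_{i<k}A_i^?\prod_{j<m}C_j^H$, where $Z^+$ gives all values at least $2^k4^m$, $Z^-$ gives the negative values, and $C_j^H$ produces $0$, $2$ or $3$ copies of $C_j$. Note that this third component only yields values whose clause digits all lie in $\set{0,2,3}$, so $\Psi(S_3)$ is a proper subset of the complement of $\Psi(S_1)$ (for instance $2^k$, with digit vector $(1,0,\ldots,0)$, is missed when $m\ge 2$); to close the paper's argument one must additionally check that every such missing value is absorbed by $\Psi(S_2)$ whenever the formula is true, which is not discussed there and is delicate, since a value may force some clause to be produced two or three times while the universal assignment it forces makes at most one literal of that clause true (e.g.\ a clause $x_0\vee y_0\vee\neg y_0$ with $x_0$ forced false). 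Your interval-based complement avoids this issue entirely, at the modest cost of introducing the auxiliary doubling symbols for the binary representations of $M-1$ and $M+2^k$; both constructions are clearly polynomial, so in either case \PiP-hardness follows from the \PiP-completeness of $\mbox{\bf 3-CNF-QSAT}_2$.
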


\begin{proof}
We use the same reduction as in Theorem \ref{incpicomplete}. It is enough to have a symbol
$S_3$ which generates the complement of $S_1$. Indeed, add a new symbol $S_4$, with
rules $S_4 \ra S_2 | S_3$. The universality of $S_4$ is equivalent to $S_1$ (the
complement of $S_3$) being included in $S_2$, which is equivalent to our
$\mbox{\bf 3-CNF-QSAT}_2$ formula being true.

This can be done as follows:

The symbol $C_j^H$ generates 0, 2, or 3 copies of $C_j$. This can be done with
the following rules: $C_j^H \ra 0 | C_jC_j | C_jC_jC_j$.

The symbol $Z^+$ generates any number of $a$'s which is at least $2^k4^m$. This
can be realized as follows: $Z^+ \ra C_{m-1}C_{m-1}C_{m-1}C_{m-1} | aZ^+$.

The symbol $Z^-$ generates any negative number of $a$'s. This can be realized as follows:
$Z^- \ra a^{-1}Z^- | a^{-1}$.

Now, we can define $S_3$ as $S_3 \ra Z^+ | Z^- | \prod_{i < k} A_i^? \prod_{j<m} C_j^H$.
\end{proof}

\begin{proposition}
The problem of membership in commutative grammars over $\Alphabet=\set{a}$ is NP hard.
\end{proposition}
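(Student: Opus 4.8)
The plan is to give a polynomial-time many-one reduction from \textbf{Subset-Sum}: given positive integers $w_1,\dots,w_n$ and a target $t$, all encoded in binary, decide whether $\sum_{i\in S}w_i=t$ for some $S\subseteq\set{1,\dots,n}$. This problem is NP-complete. The key point I would exploit is that a commutative context-free grammar over a one-letter alphabet, though it can only ``count'', can count up to exponentially large values using a grammar of polynomial size: the rules $A_0\ra a$ and $A_j\ra A_{j-1}A_{j-1}$ make the non-terminal $A_j$ generate exactly $a^{2^j}$. This exactly matches the fact that the target $v$ in the membership problem is given in binary and may be exponentially large.

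Concretely, given a \textbf{Subset-Sum} instance, write $w_i=\sum_{j\ge 0}b_{ij}2^j$ with $b_{ij}\in\set{0,1}$, and let $K$ be large enough that $b_{ij}=0$ whenever $j\ge K$; $K$ is polynomial in the size of the instance. I would build a commutative grammar $G$ over $\Alphabet=\set a$ with: non-terminals $A_0,\dots,A_{K-1}$ and rules $A_0\ra a$, $A_j\ra A_{j-1}A_{j-1}$ for $j\ge 1$, so that $A_j$ generates exactly $a^{2^j}$; for each $i$ a non-terminal $B_i$ with the single rule $B_i\ra\prod_{j\,:\,b_{ij}=1}A_j$, so that $B_i$ generates exactly $a^{w_i}$; for each $i$ a non-terminal $B_i^?$ with rules $B_i^?\ra 0$ and $B_i^?\ra B_i$, so that $B_i^?$ generates the empty word or $a^{w_i}$; and an initial non-terminal $S_0$ with the single rule $S_0\ra\prod_{i=1}^n B_i^?$. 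A run from $S_0$ (equivalently, a full derivation tree, by Corollary~\ref{derive_runs_and_paths}) amounts to choosing, for each $i$, whether to expand $B_i^?$ to $B_i$; all other choices are forced, and only the $A_0$-rules contribute terminals, with each $B_i$-subtree contributing exactly $w_i$ copies of $a$. Hence $\out(G)=\set{a^{\sum_{i\in S}w_i}:S\subseteq\set{1,\dots,n}}$, so the integer $v=t$ lies in $\Parikh(G)$ if and only if the \textbf{Subset-Sum} instance is positive; and $(G,v)$ is computable in polynomial time. This proves NP-hardness.

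I expect no serious obstacle here; the points needing a moment of care are these. The grammar $G$ has polynomially many non-terminals and rules, each of linear size, so the reduction is genuinely polynomial; and if a grammar in normal form is wanted, the rule-splitting of Section~\ref{secprelim} applies at the cost of only polynomially many extra non-terminals. Over a unary alphabet there is nothing distinguishing commutative derivations from ordinary ones, so the same reduction also yields NP-hardness of ordinary unary context-free membership; and since $G$ uses only non-negative productions, hardness already holds for positive grammars, the negative-production feature of our model being irrelevant. Finally, combined with Theorem~\ref{membercf} (membership in commutative grammars is in NP), this shows the problem is in fact NP-complete.
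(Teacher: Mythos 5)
Your reduction is correct, but it takes a genuinely different route from the paper. The paper proves this proposition by reusing the machinery of its \PiP-hardness proof (Theorem~\ref{incpicomplete}): it reduces from 3-CNF-SAT by setting $k=0$ in that construction, so the target word is $a^{\sum_{j<m}4^j}$ and membership in the language of $S_2=\prod_i Y_i$ holds iff every clause can be covered by a choice of literals; the base-$4$ weighting of clauses plays the role of making the ``counts'' of distinct clauses non-interacting. You instead reduce directly from Subset-Sum, using only the doubling gadget $A_0\ra a$, $A_j\ra A_{j-1}A_{j-1}$ to realize the binary-encoded weights $w_i$, plus optional non-terminals $B_i^?$; correctness is immediate since every derivation just selects a subset $S$ and outputs $a^{\sum_{i\in S}w_i}$. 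Your argument is more elementary and self-contained (no clause gadgets, no number-theoretic separation argument), while the paper's choice buys uniformity: one family of gadgets serves both the \PiP-hardness of inclusion/universality and the NP-hardness of membership. Both establish NP-completeness when combined with Theorem~\ref{membercf}. One small caveat on your closing aside: NP-hardness of ``ordinary unary context-free membership'' follows only under the succinct convention that the input word is given by its length in binary — with the word written out explicitly the problem is polynomial, which is exactly the distinction the paper draws for the non-commutative column of its table; this does not affect the proof of the proposition itself.
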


\begin{proof}
We reduce the $\mbox{\bf 3-CNF-SAT}$ problem. The proof is the same as in
Theorem \ref{incpicomplete}, except that we take $k=0$.
\end{proof}

\begin{proposition}\label{uninonreg}
Let $G$ be a commutative regular grammar over $\Alphabet = \{a\}$.
Then the problem of deciding universality ($\prod(G) = \bbN^\Alphabet$)
is coNP-hard.
\end{proposition}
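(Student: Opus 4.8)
The plan is to reduce from non-satisfiability of $3$-CNF formulas, which is coNP-complete. Given $\phi=\bigwedge_{j<m}C_j$ over variables $x_1,\dots,x_k$, I construct in polynomial time a regular commutative grammar $G$ over $\Alphabet=\set a$ such that $\Parikh(G)=\bbN$ iff $\phi$ is unsatisfiable. The idea is to identify truth assignments with residue classes of integers and to arrange that $\Parikh(G)$ contains precisely the ``garbage'' integers together with those encoding an assignment that \emph{falsifies} some clause; universality then states that every assignment falsifies some clause. The reason this can be done with a small regular grammar is that we never complement a regular set: $\bbN$ is presented directly as a union of polynomially many arithmetic progressions (the per-clause falsifier sets and the garbage sets), each of which is trivially the commutative language of a small regular grammar.

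Concretely, let $p_1<\dots<p_k$ be the first $k$ primes (so $p_k=O(k\log k)$) and, via the Chinese remainder theorem, identify an assignment $\alpha$ with the residue of $n$ modulo $P=\prod_i p_i$ given by $n\equiv 0\pmod{p_i}$ when $\alpha(x_i)$ is true and $n\equiv 1\pmod{p_i}$ when $\alpha(x_i)$ is false; every integer is then either garbage (some $n\bmod p_i\notin\set{0,1}$) or encodes a unique assignment $\alpha_n$. I let $\Parikh(G)$ be the union of two families of arithmetic progressions. First, for each $i$ and each $v\in\set{2,\dots,p_i-1}$, the progression $\set{n:n\equiv v\pmod{p_i}}$; these cover exactly the garbage integers. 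Second, for each non-tautological clause $C_j$ --- whose literals involve a set $T_j$ of variable indices with $\nsum{T_j}\le 3$ and force, in any falsifying assignment, the value $b_{j,i}\in\set{0,1}$ on each $i\in T_j$ ($b_{j,i}=1$ if $x_i$ occurs positively in $C_j$, $b_{j,i}=0$ if negatively) --- the progression $\set{n:n\equiv b_{j,i}\pmod{p_i}\text{ for all }i\in T_j}$, which by the Chinese remainder theorem equals a single progression $\set{n\equiv c_j\pmod{M_j}}$ with $M_j=\prod_{i\in T_j}p_i\le p_k^3$ and $c_j$ computable in polynomial time. Any progression $\set{n\equiv c\pmod M}$, $0\le c<M$, is the language of a regular grammar with $O(M)$ non-terminals: a path of $c$ transitions $\mv q a{q'}$ leading into a directed cycle of $M$ such transitions, with the unique terminating transition $\mv q{}{0}$ attached at the node where the path enters the cycle. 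Adding a fresh initial non-terminal $I$ with an $\varepsilon$-transition into each of these polynomially many gadgets, we get a regular, positive grammar $G$ of size polynomial in $\nsum\phi$.

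Correctness is routine: $n\in\Parikh(G)$ iff $n$ is garbage or $n$ meets the congruences recorded by some clause, so for an assignment-encoding $n$ we have $n\in\Parikh(G)$ iff $\alpha_n$ falsifies some clause; hence $\Parikh(G)=\bbN$ iff $\phi$ is unsatisfiable, and universality of regular commutative grammars over $\set a$ is coNP-hard. The one genuinely delicate point --- and the reason a naive complementation argument produces no small grammar --- is that the complement of the set of satisfying assignments must be realised as a polynomially small union of progressions with polynomially bounded moduli, which is precisely what the small primes and the Chinese remainder theorem buy; I expect the remaining details (clauses with repeated or complementary literals, and the precise placement of the terminating transition so that each gadget's language is exactly $\set{c+nM:n\in\bbN}$) to be the only tedious and entirely routine part. (If productions $\mv q{a^c}{q'}$ with $c$ in binary are permitted, there is a simpler variant: encode $\alpha$ by $n=\sum_i\alpha(x_i)2^{i-1}<2^k$, use for each $C_j$ a gadget emitting exactly the numbers whose $T_j$-bits take their falsifying values and whose other bits are free, add a gadget emitting every $n\ge 2^k$, and take the union.)
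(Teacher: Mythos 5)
Your proposal is correct and takes essentially the same route as the paper's proof: reduce from (un)satisfiability of a 3-CNF formula by encoding truth assignments as residues modulo small distinct primes via the Chinese remainder theorem, and realize, for each clause, the bad residues by a cyclic regular gadget of length the product of the (at most three) primes of its variables over the unary alphabet. The only difference is an implementation detail: the paper places terminating transitions at \emph{all} non-satisfying residues of a single cycle per clause (so ``garbage'' residues are absorbed there and handled in the correctness argument), whereas you isolate a single falsifying residue class per clause and cover the garbage integers by separate per-prime progressions --- both variants are polynomial and sound.
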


\begin{proof}
We reduce the 3CNF-SAT problem. Let $\phi = \bigwedge_{1\leq i\leq k} C_i$ be a
3CNF-formula with $n$ variables $x_1\ldots x_n$ (which can be 0 or 1)
and $k$ clauses. Let $p_1, p_2, \ldots, p_n$
be $n$ distinct prime numbers. Let $i \in [1..k]$. Suppose that clause $C_i$
is of form $\bigvee_{k\in[1..3]} x_{a_k}=v_{a_k}$. Our grammar will have states
$S^i_j$, where $0 \leq j < M_i = p_{a_1} p_{a_2} p_{a_3}$; we have cyclic
transitions $\mvm{S^i_j}{a}{S^i_{(j+1) mod M_i}}$, and $\mvm{S^i_j}{0}{0}$ for each 
$j$ not satisfying $\bigvee_{k\in[1..3]} j \bmod p_{a_k}=v_{a_k}$. We also have 
transitions $\mvm{s_0}{0}{S^i_0}$ for each $i$.

From simple number theoretic arguments we get
that $x \notin \prod(G)$ iff the formula $\phi$ is satisfied for $x_i = x
\bmod p_i$.
\myqed\end{proof}

\begin{proposition}\label{membunfix}
For a commutative regular grammar 
$G$ over $\Alphabet$ (whose size is not fixed), and $K \in \bbN^\Alphabet$,
the problem of deciding whether $K \in \prod(G)$ is NP-hard.
\end{proposition}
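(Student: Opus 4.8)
The plan is to prove NP-hardness by a polynomial-time reduction from 3CNF-SAT. Given a formula $\phi = \bigwedge_{j<m} C_j$ over variables $x_1,\ldots,x_n$, I would construct a positive regular commutative grammar $G$ over the alphabet $\Alphabet = \{c_0,\ldots,c_{m-1}\}$ --- one terminal symbol per clause, so the alphabet is not fixed --- together with the target vector $K = \sum_{j<m}[c_j]$, in such a way that $K \in \prod(G)$ iff $\phi$ is satisfiable.

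The grammar $G$ is a layered gadget, with layers indexed by the variables and linked through non-terminals $b_1,\ldots,b_{n+1}$, where $b_1$ is initial. A run enters the layer of $x_i$ in the state $b_i$ and nondeterministically commits to a value of $x_i$ by branching into a ``positive chain'' (meaning $x_i$ is set to true) or a ``negative chain'' (meaning $x_i$ is set to false). The positive chain is a sequence of states, one for each clause $C_j$ that contains the literal $x_i$; between two consecutive states of the chain there are exactly two transitions, one producing the terminal $c_j$ and one producing nothing, and after the last one the chain exits to $b_{i+1}$. The negative chain is built analogously from the clauses containing $\neg x_i$. From $b_{n+1}$ a single transition producing neither a terminal nor a non-terminal ends the run. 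Every transition has $|t|\leq 1$ and $|a|\leq 1$, so $G$ is regular, and it has $O(n+m)$ states and transitions, hence is of polynomial size and computable in polynomial time.

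For correctness, I would first note that $G$ is acyclic, so by an easy induction along a topological order of the states --- using that the initial symbol is produced by no transition, hence used exactly once by the Euler condition --- every commutative run from the initial symbol is exactly the multiset of transitions of one source-to-sink path of $G$. Such a path chooses a truth value for each variable and then, for each variable, ``marks'' an arbitrary subset of the clauses satisfied by that choice; its output is $\sum_j n_j[c_j]$, where $n_j$ is the number of times $c_j$ was marked along the path. Thus the output equals $K$ iff every clause is marked exactly once, and this is achievable iff every clause is satisfied by at least one chosen literal: if a clause offers at least one marking opportunity we use exactly one of them, and if it offers none then $c_j$ is produced zero times, so $K$ is missed. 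Hence $K \in \prod(G)$ iff the assignment encoded by the path satisfies $\phi$, i.e.\ iff $\phi$ is satisfiable.

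I do not anticipate a serious obstacle. The only delicate point is the claim that in an acyclic commutative grammar the commutative runs are precisely the simple source-to-sink paths; this is exactly where the Euler and connectivity conditions of a commutative run are used, and it must be stated carefully, while the rest is routine verification of the gadget.
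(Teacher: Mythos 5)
Your reduction is correct, but it takes a different route from the paper. The paper reduces from the Hamiltonian circuit problem: it takes $\Alphabet = \States = V$, adds a transition $\mvm{v_1}{v_2}{v_2}$ for each edge, and observes that the all-ones vector lies in $\Parikh(G)$ iff the graph has a Hamiltonian circuit --- here the Euler and connectivity conditions on a commutative run are exactly what forces a run with output $(1,\ldots,1)$ to be a single closed walk visiting every vertex once. Your construction instead reduces from 3CNF-SAT with one terminal per clause and a layered, acyclic gadget in which a run selects a truth value per variable and optionally emits $c_j$ at each satisfied occurrence, so that the all-ones target is reachable iff the formula is satisfiable. Your key verification step --- that in an acyclic regular grammar whose initial non-terminal is never produced, every commutative run is the transition multiset of a single source-to-sink path --- is sound: the Euler condition makes a run a nonnegative flow with unit excess only at the source, any flow decomposes into source-to-sink paths and cycles, and acyclicity rules out the cycles (connectivity is not even needed). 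What each approach buys: the paper's reduction is shorter and needs no gadgetry, but leans on the Euler/connectivity analysis of runs in a cyclic graph; yours starts from the more standard NP-complete problem and, by designing the grammar to be acyclic and layered, makes the correspondence between commutative runs and ordinary paths essentially trivial, at the cost of a slightly longer construction (and you should spell out the degenerate case of a variable whose literal occurs in no clause, where the corresponding chain is just a direct transition $b_i \rightarrow b_{i+1}$). Both reductions are polynomial and both produce positive regular grammars, so either establishes the proposition.
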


\begin{proof}
We show a reduction from the Hamiltonian
circuit problem. Let $(V,E)$ be a graph. We take $\Alphabet = Q = V$, and 
for each edge $(v_1, v_2)$ we add a transition $\mvm{v_1}{v_2}{v_2}$.
We pick an initial state $s_0$ and add a final transition $\mvm{s_0}{0}{0}$.
The graph $(V,E)$ has a Hamiltonian circuit iff $K = (1,1,\ldots) \in \Parikh(G)$.
\myqed\end{proof}

\begin{proposition}\label{disunfix}
The disjointness problem in commutative regular grammars (over
an alphabet of unfixed size) is coNP-complete.
\end{proposition}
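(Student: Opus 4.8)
The plan is to establish both containment in coNP and coNP-hardness. For the upper bound I would use the reduction already advertised in the introduction: disjointness of two languages $A,B$ amounts to $0\notin A-B$, and $A-B$ is again the language of a commutative grammar, so disjointness reduces to a \emph{non}-membership question, and membership is in NP by Theorem \ref{membercf} (whose proof, going through the single-exponential bounds of Theorems \ref{decomposition} and \ref{cyclerunlimit}, does not require the alphabet to be fixed). For the lower bound I would reduce from the membership problem of Proposition \ref{membunfix}, which is NP-hard already for the all-ones target vector.

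Concretely, for the upper bound take regular grammars $G_1,G_2$ over $\Alphabet$; after renaming their non-terminals apart, let $G_2^-$ be obtained from $G_2$ by replacing every transition $(q,a,t)$ with $(q,-a,t)$ (still a regular grammar, since $|{-a}|=|a|$). Build $G'$ as the disjoint union of $G_1$ and $G_2^-$ together with a fresh initial non-terminal $N$ and the single transition $\mvm{N}{0}{q_{I,1}q_{I,2}}$ (target $[q_{I,1}]+[q_{I,2}]$), where $q_{I,u}$ is the initial non-terminal of $G_u$. Because the two non-terminal sets are disjoint, every commutative run of $G'$ from $N$ splits uniquely as a run of $G_1$ from $q_{I,1}$ plus a run of $G_2^-$ from $q_{I,2}$, so $\Parikh(G')=\Parikh(G_1)-\Parikh(G_2)$; moreover $G'$ is in normal form and of size linear in $|G_1|+|G_2|$. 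Hence $\Parikh(G_1)\cap\Parikh(G_2)\neq\emptyset$ iff $0\in\Parikh(G')$, and the latter is decidable in NP by Theorem \ref{membercf}. Therefore disjointness is in coNP. (If one wishes to keep $G'$ regular one can instead redirect each halting transition of $G_1$ into the initial non-terminal of $G_2^-$, but this refinement is not needed here.)

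For coNP-hardness, reduce from the NP-hard membership problem ``$K\in\Parikh(G)$'' for a regular grammar $G$ with $K=(1,\dots,1)$ (this is exactly what the proof of Proposition \ref{membunfix} establishes, via a reduction from the Hamiltonian circuit problem). Given such a $G$ over $\Alphabet=\{a_1,\dots,a_\alphs\}$, let $H$ be the fixed regular grammar generating exactly $\{(1,\dots,1)\}$, namely the single chain $\mvm{q_0}{a_1}{q_1},\dots,\mvm{q_{\alphs-1}}{a_\alphs}{q_\alphs},\mvm{q_\alphs}{0}{0}$ with initial non-terminal $q_0$. Then $K\in\Parikh(G)$ iff $\Parikh(G)\cap\Parikh(H)\neq\emptyset$, so $(G,H)$ is a disjoint pair precisely when $K\notin\Parikh(G)$; this is a polynomial-time many-one reduction from a coNP-hard language to disjointness. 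Combining the two parts yields coNP-completeness.

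I expect the only delicate point to be the verification that $\Parikh(G')=\Parikh(G_1)-\Parikh(G_2)$ in the upper-bound construction, i.e.\ that allowing terminal symbols to be produced in negative quantities makes the ``$A-B$'' trick formally correct and that runs of $G'$ decompose as claimed. This follows immediately from the definition of a commutative run together with the disjointness of the non-terminal sets of $G_1$ and $G_2^-$, but it is precisely the place where the generalized (non-positive) notion of commutative grammar is genuinely used; everything else is a direct invocation of Theorem \ref{membercf} and Proposition \ref{membunfix}.
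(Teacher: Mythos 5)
Your proposal is correct and follows essentially the same route as the paper: the upper bound via checking $0\in\Parikh(G_1)-\Parikh(G_2)$ using negated productions of $G_2$ and Theorem \ref{membercf}, and the lower bound via a singleton grammar for the all-ones vector $K$ from Proposition \ref{membunfix}. The only difference is cosmetic---you glue $G_1$ and $G_2^-$ with a branching start rule, whereas the paper redirects the final transitions of $G_1$ into $-G_2$ to stay regular, a variant you mention yourself.
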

\begin{proof}
From Proposition \ref{membunfix} we know that membership is NP-hard. We can easily
construct a grammar $G_2$ such that $\Parikh(G_2) = \{K\}$ (where $K$ is from
the proof of Proposition \ref{membunfix}), and ask for non-disjointness
of $G$ and $G_2$.

From Theorem \ref{membercf} we also know that membership is in NP (even for
non-regular grammars). We reduce non-disjointness of $G_1$ and $G_2$ to membership
in the following way. It is sufficient to check whether $0 \in \Parikh(G_1) - \Parikh(G_2)$,
where $\Parikh(G_1) - \Parikh(G_2) = \{v_1-v_2: v_1 \in \Parikh(G_1), v_2 \in \Parikh(G_2)\}$.
We create a grammar $G_1-G_2$ such that $\Parikh(G_1-G_2) = \Parikh(G_1) - \Parikh(G_2)$:
first, create the grammar $-G_2$ such that $\Parikh(-G_2) = \{-v: v \in \Parikh(G_2)$
by replacing each production by its negative, and then create the grammar $G_1-G_2$
by replacing each final transition in $G_1$ (i.e., $\delta$ such that $\target(\delta)=0$)
by transitions going to each initial state of $G_2$.\myqed
\end{proof}

The following table summarizes the complexities of various problems regarding
commutative grammars. Alphabet size F means fixed, and U means unfixed.
We include the very recent result regarding unfixed alphabets \cite{haase}
that inclusion for regular grammars over an unfixed alphabet is coNEXP hard; 
it is likely that the proof can be also adapted for universality.
On the other hand, it
is known from \cite{semipi2} that inclusion and universality for context-free 
grammars over an unfixed alphabet is in coNEXP.
For completeness, we also include N (the non-commutative case -- note that the
membership problem is actually a different problem in the non-commutative case, since we cannot 
encode long words succinctly with the length of the word in binary); these results are known
from other sources (\cite{universalityregular}, also see \cite{hopcroft79} for a reference). The letter c means
complete. Note that inclusion and equivalence problems easily reduce to each other.

\begin{center}
\begin{tabular}{|c|ccccc|}
 \hline
\multicolumn{6}{|c|}{regular languages} \\
 \hline
alphabet size & 1 & 2 & F & U & N               \\
 \hline
membership    & P & P & P & NPc & P      \\
universality  & coNPc & coNPc & coNPc & ? & PSPACEc  \\
inclusion     & coNPc & coNPc & coNPc & coNEXPc & PSPACEc   \\
disjointness  & P  & P  & P & coNPc & P  \\
\hline
\multicolumn{6}{|c|}{context-free languages} \\
 \hline                                              
alphabet size & 1 & 2 & F & U & N \\                     
 \hline                                              
membership    & NPc & NPc & NPc & NPc & P \\
universality  & \PiP c & \PiP c & \PiP c & ? & undecidable \\
inclusion     & \PiP c & \PiP c & \PiP c & coNEXPc & undecidable \\
disjointness  & coNPc & coNPc & coNPc & ? & undecidable \\
\hline
\end{tabular}
\end{center}

\section{Conclusion}\label{secconclusion}

The table above contains question marks for languages of unfixed size. It is known
that these problems are \PiP-hard for context-free grammars, and they are in NEXP. Our method
heavily uses the fact that the size of the alphabet is fixed, so we probably
cannot easily generalize it. As mentioned above, coNEXP hardness of inclusion
for regular grammars has been shown recently \cite{haase}.

A natural question extending this research is counting. We can answer the question
whether $v \in \Parikh(G)$, but what about the number of paths (or words) which lead
to the given $v$? This number is exponential in $\nsum{v}$, and thus it could be
very large, so we cannot always hope for the exact answer---but we could count up
to some threshold $T$, modulo $M$, or count approximately.
We can answer the question whether $\Parikh(G_1) \subseteq \Parikh(G_2)$,
but is the number of paths smaller in the first case than in the second case?

Thanks to everyone on AUTOBÓZ 2009 for the great atmosphere of research,
especially to Sławek Lasota for introducing me to these problems. Also I would
like to thank Anthony Widjaja Lin for our collaboration on the merged paper
\cite{licsfull}, and the anonymous referees for their insightful comments.
This work is supported by the Polish National Science Centre Grant
DEC - 2012/07/D/ST6/02435.

\bibliographystyle{alpha}
\bibliography{parikh}{}

\end{document}